\newcommand\norm[1]{\left\lVert#1\right\rVert}
\newcommand{\mycomment}[1]{}
\newcommand{\ep}{\epsilon}
\newtheorem{lemma}{Lemma}
\newtheorem{theorem}{Theorem}
\newtheorem{proposition}{Proposition}
\newtheorem{corollary}{Corollary}
\newtheorem{remark}{Remark}
\newtheorem{definition}{Definition}
\begin{document}

\title{Asymptotic spreading of predator-prey populations in a shifting environment}

\author[1]{King-Yeung Lam}
\author[1]{Ray Lee}
\affil[1]{\small Department of Mathematics, The Ohio State University, Columbus, OH 43210, USA}

\maketitle

\section*{Abstract}
{\small 
Inspired by recent studies associating shifting temperature conditions with changes in the efficiency of predator  species in {converting their prey to offspring}, we propose a predator-prey model of reaction-diffusion type to analyze the consequence of such effects on the population dynamics and spread of species.
In the model, the predator conversion efficiency is represented by a spatially heterogeneous function depending on the variable $\xi=x-c_1t$ for some given $c_1>0$. Using the Hamilton-Jacobi approach, we provide explicit formulas for the spreading speed of the predator species. When the conversion function is monotone increasing, the spreading speed is determined in all cases and non-local pulling is possible. When the function is monotone decreasing, we provide formulas for the spreading speed when the rate of shift of the conversion function is sufficiently fast or slow.}

\section{Introduction}

A fundamental {challenge} in ecology is to understand the persistence and spread of a given species in an environment.
These {issues} are receiving a renewed interest as changes in climatic conditions can dramatically impact the suitability of a habitat for a species' survival and growth. As temperatures rise, many species have moved in the directions of the poles or toward higher elevations, in an apparent attempt to keep pace with shifting temperature isotherms \cite{parmesan2003,lenoir2020species}. {As species establish in new regions, new biotic interactions take place \cite{gilman2010framework}, which in turn can have significant consequences for species abundance and biodiversity 
\cite{wallingford2020adjusting,ling2008range,sorte2010}, the functioning of ecosystems \cite{sorte2010,verges2016long,weiskopf2020climate}, the spread of disease \cite{pecl2017,wu2016impact} and human welfare \cite{pecl2017}. The ecological effects of a changing climate are complex and various. While many species are vulnerable to a changing climate, for many others climate-related changes may facilitate expansion to new areas and population growth \cite{walther2009alien,aronson2015no,bradley2010climate,hobbs2000invasive}.} 


Mathematical modeling can be used to determine why certain species decline while others prosper under the changing climate. The study of species persistence and spread often depend on the spatial context, and much analysis in the classical literature has been based on reaction-diffusion models. A prominent example is the Fisher-KPP equation \cite{fisher1937wave,kolmogorov1937etude}, which describes the spreading of a single population
\begin{align}\label{eq:Fisher}
    u_t = du_{xx} + ru(1-u) \quad \text{ for } x\in \mathbb{R}, ~ t>0,
\end{align}
where $d$ corresponds to the dispersal rate and $r$ the intrinsic growth rate of a species $u$. 
For a given population density $u(t,x)$, 
Aronson and Weinberger \cite{aronson1975nonlinear,aronson1978multidimensional} introduced the key notion of spreading speed, which refers to the number $c^*>0$ such that
\begin{align*}
    \lim_{t\to\infty} \sup_{|x|>ct} u(t,x) &= 0\quad \text{ for }c\in (c^*,\infty), \quad \text{and} \quad
    \lim_{t\to\infty} \inf_{|x|>ct} u(t,x) &> 0\quad \text{ for }c\in (0,c^*).
\end{align*}

The problem of spreading speed for more general equations of the form
\begin{align}
    u_t = du_{xx} + f(u) \quad \text{ for } x\in \mathbb{R}, ~ t>0,
\end{align}
was first investigated by Kolmogorov et al. \cite{kolmogorov1937etude} for heaviside initial condition, who showed under certain assumptions on the growth function $f(u)$ that 
$$
c^* = 2\sqrt{df'(0)}.$$ This result was later generalized to any compactly supported initial data and in higher spatial dimensions  by Aronson and Weinberger \cite{aronson1978multidimensional}. This theoretical spreading speed has yielded good estimates for range expansion observed in nature
\cite{shigesada1997biological}.


Various studies have since revisited Fisher's model with an interest in the impact of a shifting environment. To consider climate change, it is often assumed that the behavior of the species depends on the variable $\xi = x-ct$, where the constant $c$ corresponds to the velocity of a shifting climate \cite{potapov2004climate,berestycki2009can,li2014persistence,berestycki2018forced}. See also \cite{fang2016can}, which studied a similar model in the context of an SIS model, and \cite{wang2022recent}, for a survey on reaction-diffusion models in shifting environments. Many models have proposed the case where the growth rate $ru(1-u)$ in \eqref{eq:Fisher} is replaced by a shifting logistic form $u(r(x-ct)-u)$, where $r(x-ct)$ denotes the species' intrinsic growth rate \cite{potapov2004climate,berestycki2009can}. These works assumed the growth rate $r(\xi)$ to be positive on a bounded patch of suitable habitat and negative elsewhere, and were broadly interested in the effects of a shifting climate on the persistence of the species.

A shifting environment also leads to new spreading phenomena. In \cite{lam2022asymptotic}, the spreading speed for solutions of Fisher's equation with growth rate $f(u,x-ct) = (r(x-ct) - u)u$ was determined using the Hamilton-Jacobi method, in the case that the intrinsic growth rate of the species is positive and monotone. 
%
%
They showed that, for a certain range of velocities of climate shift, the species spreads with speed distinct from either of the limiting KPP invasion speeds in a phenomenon called non-local pulling \cite{holzer2014accelerated,girardin2019invasion}. {When the growth rate is non-monotone, the existence of forced waves and their attractivity is studied in \cite{berestycki2018forced}.}


In addition to single species equations, the spreading dynamics for systems of equations has received considerable attention. Building on the earlier works on order-preserving systems (such as cooperative systems and competitive systems of two species)   \cite{lewis2002spreading,weinberger2002analysis,li2005spreading,liang2007asymptotic}, the spreading of two competing species in a shifting habitat is studied in \cite{Zhang2017persistence,Dong2021forced}.

By contrast, for predator-prey systems a comparison principle is not immediately available and many studies regarding propagation phenomena in these systems have focused on the dynamics of traveling wave solutions. The existence of traveling wave solutions for two-species predator prey equations was established in \cite{dunbar1983travelling,gardner1984existence}, and studied further in \cite{huang2003existence,mischaikow1993travelling}, while some results on the stability of traveling wave solutions were established in \cite{gardner1991stability}. 

Until recently, few works have treated the spreading dynamics of predator-prey systems with general initial data. In \cite{pan2017invasion}, Pan determined the spreading speed of the predator for a predator-prey system with initially constant prey density and compactly-supported predator. Shortly thereafter, Ducrot, Giletti, and Matano \cite{ducrot2019spreading} used methods from uniform persistence theory to characterize the spreading dynamics when both predator and prey are initially compactly supported. They showed that the behavior can be classified based on the speeds of the prey in the absence of predator, and of the predator when prey is abundant (see also \cite{ducrot2019thespreading}, for the case of a predator-prey equation with non-local dispersal, and \cite{ducrot2013convergence,pan2013asymptotic}). Since these works, the spreading speeds regarding the Cauchy problem for predator-prey systems with three species was studied in \cite{Ducrot2021asymptotic} (see also \cite{wu2019spreading}). There, it was shown that the nonlocal pulling phenomenon can occur in a system with two predators and one prey.

For other types of non-cooperative systems and their spreading speeds, we refer to \cite{Wang2011spreading}, which characterizes the spreading speed for a general class of non-cooperative reaction diffusion systems as the minimal traveling wave speed. We also note \cite{ducrot2016spatial}, which determined the spreading speed of infectious disease in an epidemic model, and \cite{Liu2021stacked}, which considered the spreading dynamics for competition between three species.  Spreading dynamics are also studied for nonlocal diffusion problems, here we mention \cite{Li2021invasion,Wu2023propagation} for such results in predator-prey models in the absence of shifting environment.

\begin{subsection}{The predator-prey model in a shifting environment}

We are interested in the effect of the heterogeneous shifting profile of the conversion efficiency of prey to predator, represented below by the function  ${{\tilde a}}(x-c_1t)$, on the spreading dynamics. Though the temperature-dependence of the conversion efficiency is not well-understood, there is some evidence that the conversion efficiency is impacted by climate. Using an experimental system of predator and prey, Daugaard et al. \cite{daugaard2019} found that the conversion efficiency of the predator increased with warming, and in a recent meta-analysis Lang et. al \cite{lang2017temperature} identified a trend toward increasing efficiency of energy assimilation {by consumers} with increasing temperature. {On the other hand, many biological processes depend unimodally on temperature, such that measures of species performance and fitness decline once temperature increases sufficiently beyond a ``thermal optimum" \cite{huey1979integrating,deutsch2008impacts,buckley2012functional}. It is thus plausible that predators currently experiencing climates at or near their thermal optimum may experience declines in conversion efficiency with additional warming.}

To this end, we propose the following predator-prey model of reaction-diffusion type to analyze the consequence of such effects on the population dynamics and spread of species:
\begin{equation}\label{eq:dim_system}
\begin{cases}
\tilde{u}_t= d_1 \tilde{u}_{xx} + \tilde{u}(-\kappa - \alpha_1 \tilde{u} + \tilde{a}(x-c_1 t) \tilde v)  &\text{ in }(0,\infty)\times\mathbb{R},\\
\tilde{v}_t= d_2 \tilde{v}_{xx} + \tilde{v}(\tilde r - \alpha_2 \tilde{v} - \tilde{b} \tilde u)  &\text{ in }(0,\infty)\times\mathbb{R},\\
\tilde{u}(0,x) = \tilde{u}_0(x),\quad \tilde{v}(0,x)=\tilde{v}_0(x) &\text{ in }   \mathbb{R}. 
\end{cases}
\end{equation}
Here the predator and prey densities are represented by
$\tilde{u}(t,x)$ and $\tilde{v}(t,x)$. 
It is assumed that the predator cannot persist in the absence of prey, and competes with other predators, while in the absence of predation
the prey exhibits logistic growth and is described by the standard Fisher-KPP equation. The interaction rates between predator and prey are mediated by the consumption rate $\tilde{b} > 0$ of prey by predator, and by the predator's conversion efficiency function $\tilde{a}(x-c_1t)$, which describes the degree to which consumed prey can be successfully converted to additional predators. For simplicity, we assume that conversion efficiency has a fixed profile  in the moving coordinate $y= x-c_1 t$ with constant velocity $c_1$. 
Finally, $\tilde d_i, \tilde \alpha_i, \kappa, \tilde r$ are positive parameters, where $\tilde{d}_i$ are the random dispersal rates, $\alpha_i$ are the intraspecific competition rates, $\kappa$ is the natural death rate of the predator species and $\tilde r$ is the natural birth rate of the prey species.

Without loss of generality, we may non-dimensionalize the problem \eqref{eq:dim_system} and obtain the following model:
\begin{align}\label{eq:system}
    \begin{cases}
        u_t = u_{xx} +  \big(-1- u + a(x-c_1t)v\big)u  \quad &\text{in } (0,\infty) \times \mathbb{R}\\
        v_t = dv_{xx}  + r(1 - v - bu)v \quad &\text{in } (0,\infty) \times \mathbb{R}\\
        u(0,x)=u_0(x), \quad v(0,x)=
        v_0(x) &\text{ in }\mathbb{R}.
    \end{cases}
\end{align}

We assume the following throughout our study of \eqref{eq:system}. 
\begin{itemize}

    \item[{\bf(H1)}] The function $a:\mathbb{R} \to \mathbb{R}$ is  monotone, and  satisfies
    $$
        \beta := 1-b(\norm{a}_\infty - 1) > 0,\qquad 
        \inf_{s \in \mathbb R} a(s) > \tfrac{1}{\beta}, 
        \quad \text{ and }\quad 
        \norm{a}_\infty > 1.
   $$
\end{itemize}

By observing (via maximum principle) that the density of the predator is bounded from above by $\|a\|_\infty -1$, it follows that the quantity $\beta \coloneqq 1-b(\norm{a}_\infty - 1)$ corresponds to the minimum carrying capacity for the prey.

\begin{remark}
It is documented   in a microbial predator-prey system \cite{daugaard2019} that the quantity of predators produced for each prey consumed increases when temperature is increased. This corresponds to the case when $c_1>0$ and $a(\cdot)$ is decreasing. We also study the case when increasing temperature decreases the predator efficiency, i.e. $a(\cdot)$ is increasing.  
\end{remark}


We are interested in the situation when the initial data of the predator is compactly supported, while that of the prey has a positive upper and lower bound. For simplicity, we will assume throughout the discussion that $(u_0,v_0) \in C^2(\mathbb R)$ satisfies
\begin{itemize}
    \item[{\bf (IC)}]  
        $0 \leq u_0 \leq \norm{a}_\infty-1,~ \beta \leq v_0 \leq 1$, 
    and
        $u_0$ \text{has compact support}.
\end{itemize}
\end{subsection}

Finally, we define the following limiting growth rates (at $\pm \infty$), to simplify the statements and proofs of the main results.
\begin{equation}\label{eq:param}
\begin{cases}
 r_1 = a(-\infty) - 1, \qquad &r_2 = a(+\infty) - 1,\\
    \underline r_1 = \beta a(-\infty) - 1, \qquad    
   &\underline r_2 = \beta a(+\infty) - 1. 
    \end{cases}
\end{equation}  
Here, $\underline r_1$ and $\underline r_2$ correspond to the limiting growth rates of the predator behind and ahead of the environmental shift, respectively, when the prey density is at its minimum value $v=\beta$, while $r_1$ and $r_2$ are the limiting growth rates of the predator behind and ahead of the shift, respectively, when the prey density is at its maximum value $v=1$.

\subsection{Main Results}

In this paper, we are interested in the asymptotic speed of spread (or spreading speed) as the predator species $u$ expands its territory.  Up to a change of coordinates $x \mapsto -x$, it is enough to focus our discussion on the rightward spreading speed, while allowing the spatial heterogeneity ${a}(\cdot)$ to be monotonically increasing or decreasing.

In the remainder of this paper, we will refer to the rightward spreading speed $c^*$ simply as the spreading speed, which is defined as follows. 
\begin{definition}
Let $u$ be the solution of \eqref{eq:system}, where $u_0$ and $v_0$ satisfy (IC).
We say that the species $u$ has spreading speed given by $c^*>0$ if
\begin{equation}
\lim_{t\to\infty} \sup_{x > c t} u(t,x) = 0 \quad \text{ for each }c \in (c^*,\infty), 
\end{equation}
\begin{equation}
\lim_{t\to\infty} \inf_{0<x < c t} u(t,x) > 0 \quad \text{ for each }c  \in (0,c^*). 
\end{equation}
\end{definition}
Following \cite[Definition 1.2]{Hamel2012spreading}, we introduce the closely related notion of maximal and minimal speed $\overline c_*$, $\underline c_*$:
\begin{definition}
Let $u$ be the solution of \eqref{eq:system}, where $u_0$ and $v_0$ satisfy (IC).
\begin{align}\label{eq:max/min_spd}
\begin{cases}
    \overline c_* &\coloneqq \inf\left\{c > 0 \mid \limsup\limits_{t \to \infty}\sup\limits_{x > ct} u(t,x) = 0 \right\},\\
    \underline c_* &\coloneqq \sup\left\{c > 0 \mid \liminf\limits_{t \to \infty} \inf\limits_{0 \leq x < ct} u(t,x) > 0 \right\}. 
\end{cases}
\end{align}
\end{definition}
\begin{remark}
The species $u$ has a spreading speed if and only if $\overline c_* = \underline c_*$. In such a case, the spreading speed $c^*$ is given by the common value $\overline c_* = \underline c_*$.
\end{remark}



The following two main theorems characterize the spreading speed of $u$ for the cases (i) $a(\cdot)$ is monotonically increasing and (ii) $a(\cdot)$ is monotonically decreasing, respectively. Assuming the positive axis {points} poleward and temperature is {rising}, they correspond to the cases when the the conversion efficiency of the predator is suppressed or enhanced by the warming climate.

\begin{theorem}\label{thm:increasing}
Let $c_1>0$ be given,   $a:\mathbb{R}\to \mathbb{R}$ be increasing, and suppose {\rm{\bf(H1)}} holds.  If $(u(t,x),v(t,x))$ is the solution of \eqref{eq:system} with initial data satisfying {\rm{\bf(IC)}}, then the spreading speed of $u$ exists, and is given by 
    \begin{align}\label{eq:sigma1(c_1)}
        c^* \coloneqq \begin{cases}
        2\sqrt{r_2} & \text{ if } c_1 \leq 2\sqrt{r_2}\\
        \frac{c_1}{2} - \sqrt{r_2-r_1} + \frac{r_1}{\frac{c_1}{2} - \sqrt{r_2-r_1}} & \text{ if } 2\sqrt{r_2} < c_1 < 2\sqrt{r_1} + 2\sqrt{r_2-r_1}\\
        2\sqrt{r_1} & \text{ if } c_1 \geq 2\sqrt{r_1} + 2\sqrt{r_2-r_1}.
        \end{cases}
    \end{align}
\end{theorem}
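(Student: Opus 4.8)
The plan is to establish the spreading speed via the Hamilton-Jacobi (geometric optics) approach, following the strategy developed in \cite{lam2022asymptotic} for the single-species shifting KPP equation, with the key new difficulty being the lack of a comparison principle for the predator-prey system. First I would reduce matters to controlling the predator equation alone. Since $u_0$ is compactly supported, the prey equation is asymptotically close to the Fisher-KPP equation $v_t = dv_{xx} + r(1-v)v$, and by standard arguments the prey density satisfies $\beta \le v \le 1$ with $v \to 1$ locally uniformly where the predator has not yet arrived; more precisely, ahead of the predator front $v$ is close to $1$, while everywhere $v \ge \beta - o(1)$. This sandwiches the predator's effective growth rate: $u$ is a subsolution of the linear-at-zero problem $u_t \le u_{xx} + (a(x-c_1t) - 1 + o(1))u$ (giving the upper bound $\overline c_* \le c^*$) and, on the region already invaded, a supersolution of $u_t \ge u_{xx} + (\beta a(x-c_1t) - 1 - u + o(1))u$ driving the lower bound. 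This is where the constants $r_1, r_2, \underline r_1, \underline r_2$ from \eqref{eq:param} enter, and why in the increasing case the relevant rates turn out to be $r_1$ (behind the shift, where $a$ is small) and $r_2$ (ahead); since $a$ is increasing and $c_1 > 0$, the predator racing rightward sees growth rate trending up toward $r_2$, so $\underline r_i$ versus $r_i$ ends up not mattering for the final formula — this should be flagged as the reason the increasing case is ``cleaner'' than the decreasing one.

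Next I would carry out the WKB/large-deviations rescaling. Setting $u^\ep(t,x) = u(t/\ep, x/\ep)$ and writing $u^\ep = \exp(-w^\ep/\ep)$, the function $w^\ep$ converges (locally uniformly, via half-relaxed limits and a Lipschitz estimate) to the viscosity solution $w$ of a Hamilton-Jacobi equation
\begin{equation}\label{eq:HJ}
\min\left\{ w_t + |w_x|^2 + R\!\left(\tfrac{x - c_1 t}{\text{(sign)}}\right),\ w \right\} = 0,
\end{equation}
where $R(\cdot)$ is the limiting linear growth rate, equal to $r_1$ in the region $\{x - c_1 t < 0\}$ (asymptotically, for the rightward-moving front this is what matters) and $r_2$ in $\{x - c_1 t > 0\}$, i.e. $R$ is a two-valued step function in the moving frame. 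The spreading speed is then read off as $c^* = \sup\{ x/t : w(t,x) = 0 \}$ along rays. Because the Hamiltonian is $|p|^2$ (constant-coefficient) away from the interface $\{\xi = 0\}$ and the only heterogeneity is the jump in the zeroth-order term across a line moving at speed $c_1$, the variational (Lax–Oleinik / Freidlin) representation of $w$ can be computed essentially explicitly by optimizing over broken straight-line paths that spend time on each side of the moving interface. Optimizing the associated action functional over the crossing point gives exactly the three-regime formula \eqref{eq:sigma1(c_1)}: the two flat regimes $2\sqrt{r_2}$ and $2\sqrt{r_1}$ are the KPP speeds of the two homogeneous media (``local pulling'' by the favorable region ahead, resp. by the region behind once the shift outruns the front), while the middle regime is the genuinely non-local pulling branch, where the optimal trajectory is pulled from behind the interface and the speed $\frac{c_1}{2} - \sqrt{r_2 - r_1} + r_1 / (\frac{c_1}{2} - \sqrt{r_2 - r_1})$ interpolates between them; the two thresholds $c_1 = 2\sqrt{r_2}$ and $c_1 = 2\sqrt{r_1} + 2\sqrt{r_2 - r_1}$ are precisely where this expression meets the two flat values, which one checks by direct substitution.

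The main obstacle is making the two one-sided estimates rigorous without a comparison principle for the full system, i.e. converting the heuristic ``$v \approx 1$ ahead, $v \gtrsim \beta$ behind'' into quantitative bounds valid uniformly in the parabolic rescaling. For the upper bound $\overline c_* \le c^*$ this is comparatively safe: $u$ alone is a subsolution of a scalar linear (in fact KPP-type) equation with coefficient $\le a(x - c_1 t) - 1$, so the corresponding Hamilton-Jacobi upper barrier applies directly. For the lower bound one needs a genuine persistence/invasion argument: on the invaded region the pair $(u,v)$ must be kept away from the extinction state, which I would handle by a combination of (a) a local-in-space lower bound showing $u$ becomes bounded below on any compact set once invaded (uniform persistence, as in \cite{ducrot2019spreading}), and (b) propagating this bound forward using the scalar supersolution $u_t \ge u_{xx} + (\underline r(x - c_1 t) - u)u - o(1)u$ with $\underline r$ the $\beta$-deflated rate, then taking the geometric-optics limit of that scalar problem. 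One must check that the $o(1)$ errors coming from $v$'s convergence are small on the $1/\ep$ time scale — this requires that the prey equilibrates to its Fisher-KPP behavior fast enough, which follows from standard KPP convergence-to-one estimates, uniform in the shifting-frame variable. Assembling (a), (b), the matching scalar HJ analysis on each side, and the explicit optimization over the interface-crossing time then yields $\underline c_* \ge c^*$, and combined with $\overline c_* \le c^*$ gives $\overline c_* = \underline c_* = c^*$ and hence the existence of the spreading speed with the stated value.
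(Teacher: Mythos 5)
Your overall skeleton---the WKB rescaling $w^\ep = -\ep\log u^\ep$, half-relaxed limits, a limiting Hamilton--Jacobi equation with the obstacle $w \geq 0$, comparison with an explicit solution $\hat\rho$, and reading the speed off as $\sup\{s : \rho^*(s)=0\}$---matches the paper's. But there is a genuine gap in your setup of the source term $R$. You propose the two-valued step $R = r_1$ on $\{x - c_1 t < 0\}$, $R = r_2$ on $\{x - c_1 t > 0\}$, which presumes $v \to 1$ throughout the region of interest; your justification (``standard KPP convergence-to-one estimates, uniform in the shifting-frame variable'') is where this breaks. The prey equation does not decouple: in the region already invaded by the predator, $v$ converges not to $1$ but to the coexistence value $v_i = (1+b)/(1+a_i b) < 1$ (see Theorem~\ref{thm:main3}(b)). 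There the only unconditional lower bound is $v \geq \beta$, giving $F_* \geq \underline r_1 = \beta a(-\infty) - 1 < r_1$. This matters because the viscosity-subsolution property of $\rho^*$ (Lemma~\ref{lemma:4.6}) needs $F_*(t,x) \geq R(x/t)$ at \emph{every} point where $\rho^* > 0$, and you do not know a priori that $\rho^* = 0$ on all of $[0,\sigma]$---that is precisely what you are trying to prove. With your two-valued $R$, the hypothesis $F_* \geq R$ fails precisely on the cone you are trying to control.

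The paper's resolution is a bootstrap: first prove $\overline c_* \leq \sigma$ by comparing $u$ with the scalar shifting-KPP problem using only $v \leq 1$ (Proposition~\ref{lemma:upper_speed}, via \cite{lam2022asymptotic}); then deduce $v \to 1$ on the rays $\{x > (\sigma+\eta)t\}$ (Lemma~\ref{lemma:prey_spd}); then build $R$ in the similarity variable $s = x/t$ with the full rates $r_i$ for $s > \sigma$ and the deflated rates $\underline r_i$ for $s < \sigma$. Your intuition that ``$\underline r_i$ versus $r_i$ ends up not mattering'' is right, but not for the reason you give (the growth rate ``trending up toward $r_2$''); the real reason is that $\hat\rho \equiv 0$ on $[0,\sigma]$ regardless, because the obstacle constraint is binding there, so the value of $R$ on $(0,\sigma)$ cannot change $\hat\rho$. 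If you carry out your Lax--Oleinik computation with the paper's $R$, you get the same three-regime formula, so that ingredient is fine (and is a legitimate alternative to the guess-and-verify confirmation in Lemma~\ref{lemma:explicit_soln}). One further remark: you invoke uniform persistence for the lower bound on $u$ over compacts; the paper instead uses the cheaper observation that $u$ dominates a diffusive-logistic solution of rate $\underline r = \underline r_1 \wedge \underline r_2$, forcing $\rho^*(0) = 0$, and then Lemma~\ref{lemma:ulowbd} (an Evans--Souganidis-type argument) upgrades $\rho^* \equiv 0$ on $[0,s_0]$ to a uniform positive bound on $\{\eta t < x < (s_0-\eta)t\}$. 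Your route is plausible but heavier than necessary.
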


\begin{theorem}\label{thm:decreasing}
Let $c_1 \in \mathbb{R} \setminus (2\sqrt{r_2},2\sqrt{r_1}]$ be given, $a:\mathbb{R}\to \mathbb{R}$ be decreasing, and suppose {\rm{\bf(H1)}} holds. If $(u(t,x),v(t,x))$ is the solution of \eqref{eq:system} with initial data satisfying {\rm{\bf(IC)}}, then the spreading speed of $u$ exists, and is given by
$$
c^* = \begin{cases}
    2\sqrt{r_2} &\text{ if }c_1 \leq 2\sqrt{r_2},\\
    2\sqrt{r_1} &\text{ if }c_1 > 2\sqrt{r_1}.
\end{cases}
$$

\end{theorem}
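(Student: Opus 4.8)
\emph{Proof strategy.}
The plan is to split the argument into the two regimes $c_1>2\sqrt{r_1}$ and $c_1\le 2\sqrt{r_2}$, and in each to establish the upper bound $\overline c_*\le c^*$ and the lower bound $\underline c_*\ge c^*$ separately, so that $c^*$ exists and equals the common value. The universal starting point is the invariant region: since $a(x-c_1t)v\le \|a\|_\infty$ and $u_0\le \|a\|_\infty-1$, the maximum principle applied to the $u$--equation gives $0\le u\le \|a\|_\infty-1$; inserting this into the $v$--equation together with {\rm(IC)} yields $\beta\le v\le 1$. Consequently $u$ obeys the scalar inequalities
\[
u_{xx}+\big(\beta a(x-c_1t)-1-u\big)u \ \le\ u_t\ \le\ u_{xx}+\big(a(x-c_1t)-1-u\big)u ,
\]
but, as in \cite{ducrot2019spreading,lam2022asymptotic}, the lower comparison underestimates the leading-edge growth rate because it ignores that the prey sits near its carrying capacity $v\approx 1$ just ahead of the predator; the lower bound will therefore require separate information on the prey profile.

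\emph{Upper bounds.} When $c_1>2\sqrt{r_1}$, monotonicity of $a$ gives $a(x-c_1t)-1\le a(-\infty)-1=r_1$, hence $u_t\le u_{xx}+(r_1-u)u$ on $\mathbb R$, and comparison with the Fisher--KPP equation of rate $r_1$ and compactly supported datum gives $\overline c_*\le 2\sqrt{r_1}$. When $c_1\le 2\sqrt{r_2}$ this global bound only yields $2\sqrt{r_1}>2\sqrt{r_2}$, so I would instead work on the moving half-line ahead of the shift: given $\delta\in(0,r_1-r_2)$ pick $M_\delta$ with $a(\xi)\le r_2+1+\delta$ for $\xi\ge M_\delta$, so that $u_t\le u_{xx}+(r_2+\delta-u)u$ on $\{x\ge c_1t+M_\delta\}$. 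Since $u\le \|a\|_\infty-1$ on the boundary $x=c_1t+M_\delta$ and $u_0$ is compactly supported, the function $\bar U(t,x)=\min\{\|a\|_\infty-1,\ A\,e^{-\mu(x-c_1t-M_\delta)+(\mu^2-\mu c_1+r_2+\delta)t}\}$ is a supersolution there for $A$ large and any $\mu>0$ --- note $\mu^2-\mu c_1+r_2+\delta>0$ because $c_1\le 2\sqrt{r_2}<2\sqrt{r_2+\delta}$. Choosing $\mu$ with $\mu^2-\mu c+r_2+\delta<0$, which is possible precisely when $c>2\sqrt{r_2+\delta}$, forces $\sup_{x>ct}u\to 0$; letting $\delta\downarrow 0$ gives $\overline c_*\le 2\sqrt{r_2}$.

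\emph{Lower bounds.} Here I would use the Hamilton--Jacobi framework: put $u^\epsilon(t,x)=u(t/\epsilon,x/\epsilon)$, write $u^\epsilon=\exp(-w^\epsilon/\epsilon)$, and deduce from parabolic estimates together with $0\le u\le \|a\|_\infty-1$, $\beta\le v\le 1$ that $\{w^\epsilon\}$ is locally bounded and equi-Lipschitz, so it admits a locally uniform limit $w$. One then identifies $w$ as the unique viscosity solution of the obstacle problem $\min\{w_t+|w_x|^2+\rho_{\mathrm{eff}}(x,t),\ w\}=0$, $w(0,\cdot)=0$ on $\operatorname{supp}u_0$ and $+\infty$ elsewhere, with $\rho_{\mathrm{eff}}(x,t)=r_1$ for $x<c_1t$ and $=r_2$ for $x>c_1t$; the essential input for this identification is that the prey recovers, $v\to 1$ locally uniformly ahead of and up to the advancing predator front. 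This recovery should be obtained by feeding the upper bound back in: $u\to 0$ uniformly on $\{x>(c^*+\eta)t\}$, so on that set $v$ solves a perturbation of $v_t=dv_{xx}+r(1-v)v$ with $v\ge\beta$, hence $v\ge 1-\eta'$ eventually, and a finite bootstrap carries the conclusion up to $\{x\approx c^*t\}$. The spreading speed is then read from $\{w=0\}$: the constraint $w\ge 0$ forbids the optimal path from accumulating negative action in the high-growth zone $\{x<c_1t\}$, so in the fast-shift case (front and leading edge both in $\{x<c_1t\}$, effective rate $r_1$) the explicit solution has front $\{x=2\sqrt{r_1}\,t\}$, and in the slow-shift case (leading edge in $\{x>c_1t\}$, effective rate $r_2$) it has front $\{x=2\sqrt{r_2}\,t\}$. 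Converting the level-set statement back into $\liminf_t\inf_{0<x<ct}u>0$ for $c<c^*$, using $\beta\le v$ together with a compactly supported moving subsolution of the $u$--equation, would then close the argument.

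\emph{Main obstacle.} I expect the hard part to be the prey recovery near the moving predator front, because the system has no comparison principle and one cannot simply dominate $(u,v)$ by the autonomous system obtained by freezing $a$ at $a(\pm\infty)$; the decay of $u$ ahead of its front (from the upper bound) must be combined with a moving-domain estimate for $v$ and then re-injected into the front dynamics. The band $c_1\in(2\sqrt{r_2},2\sqrt{r_1}]$ is excluded precisely because there the front sits near the interface $x=c_1t$, the rate the leading edge experiences is itself ambiguous, and the Hamilton--Jacobi solution is no longer the naive KPP profile --- one expects a more delicate, genuinely non-local selection there, consistent with the fact that the middle branch of \eqref{eq:sigma1(c_1)} involves $\sqrt{r_2-r_1}$ and hence does not extend when $r_1>r_2$.
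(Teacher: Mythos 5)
Your plan follows the same route as the paper: a comparison argument with a Fisher--KPP scalar problem or an explicit moving supersolution for the upper bound, and the Hamilton--Jacobi / geometric-optics WKB ansatz $w^\epsilon=-\epsilon\log u^\epsilon$ for the lower bound, with a "prey recovery" lemma (proved as a consequence of the upper bound) feeding the effective growth rate into the limiting HJE. Your explicit supersolution for $c_1\le 2\sqrt{r_2}$, built on $\{x\ge c_1 t+M_\delta\}$, is a legitimate variant of the generalized supersolution $U$ the paper writes down.

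The imprecision is in the effective rate. You set $\rho_{\mathrm{eff}}=r_1$ for $x<c_1 t$ and $r_2$ for $x>c_1 t$, but the prey only provably recovers to $v\approx 1$ on rays $x/t>\sigma$ where $\sigma$ is the \emph{upper} bound on the spreading speed (Lemma~\ref{lemma:prey_spd}); behind the front the solution equilibrates to the coexistence state $(u_i,v_i)$ with $v_i<1$, so $F_*$ there can be as small as $\underline r_1=\beta a(-\infty)-1<r_1$. For the viscosity subsolution property of $\rho^*$ one needs $F_*\ge R$ everywhere, not just ahead of the front, because a priori you do not yet know that $\rho^*=0$ on $[0,\sigma]$. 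The paper therefore works with the functions $R_{2a}$, $R_{2b}$, $R_{2c}$ in \eqref{eq:R2}, which are set equal to $\underline r_1$ or $\underline r_2$ on the unresolved zone $\{s\le\sigma\}$, and only equal $r_1$ or $r_2$ where Lemma~\ref{lemma:prey_spd} applies. The final free-boundary location is unchanged --- the explicit $\hat\rho$ depends only on the rate just at and ahead of its zero set, and there both assignments agree --- but your $\rho_{\mathrm{eff}}$ as written would not be a lower bound on $F_*$, so the comparison $\rho^*\le\hat\rho$ would not be justified.

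Two smaller remarks. You invoke equi-Lipschitz bounds and locally uniform convergence of $w^\epsilon$; the paper sidesteps this by using Barles--Perthame half-relaxed limits ($w^*$, $w_*$), which only require local $L^\infty$ bounds on $w^\epsilon$ (Lemma~\ref{lemma:wlocalbounds}) and produce a USC subsolution directly, so you avoid proving a gradient estimate. Your reading of the excluded band $c_1\in(2\sqrt{r_2},2\sqrt{r_1}]$ is consistent with Remark~5 of the paper: the HJE machinery would place the front at the discontinuity $s=c_1$ of $R$, and the discontinuous comparison framework used here does not determine it; the paper conjectures $c^*=c_1$ there and leaves it open.
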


For the proof of Theorems \ref{thm:increasing} and \ref{thm:decreasing}, see Subsection \ref{subsec:5.2}

We also show  the convergence of $(u,v)$ to the homogeneous state in the moving frames with speed different from $c_1$ or $c^*$. 
\begin{theorem}\label{thm:main3}
    For $c \in (0,\infty)\setminus\{c_1,c^*\}$, let 
$$
(u_c(t,x),v_c(t,x)) = (u(t,x+ct),v(t,x+ct)).
$$
\begin{itemize}
\item[{\rm(a)}] If $c \in (c^*,\infty)$, then $\displaystyle \lim_{t\to\infty}(u_c(\cdot,t),v_c(\cdot,t)) \to (0,1)$ in $C_{loc}({\mathbb{R}})$.
\item[{\rm(b)}] If $c \in (0,c^*) \setminus \{c_1\}$, then 
$$
\lim_{t\to\infty}(u_c(\cdot,t), v_c(\cdot,t)) \to (\tilde{u},\tilde{v})\quad \text{ in }C_{loc}({\mathbb{R}}),
$$
where $(\tilde u, \tilde v)$ is the unique positive equilibrium of the kinetic system
$$
\frac{d}{dt}\tilde u= 
\tilde u(-1-\tilde u + A \tilde v),\qquad \frac{d}{dt} \tilde v= r\tilde v(1-\tilde v - b\tilde u),
$$
such that $A = a(-\infty)$ if $c<c_1$ and $A = a(+\infty)$ if $c> c_1$.    
\end{itemize}
\end{theorem}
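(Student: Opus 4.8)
The plan is to combine the spreading-speed estimates from Theorems \ref{thm:increasing}--\ref{thm:decreasing} with a squeezing argument in the moving frame of speed $c$, using the fact that for $c \ne c_1$ the coefficient $a(x - c_1 t)$ seen in the frame $x \mapsto x + ct$ converges locally uniformly to a \emph{constant}, namely $a(-\infty)$ if $c < c_1$ and $a(+\infty)$ if $c > c_1$ (since $x + ct - c_1 t = x - (c_1 - c)t \to \mp\infty$). This reduces the limiting dynamics to the spatially homogeneous predator-prey kinetic system, for which the positive equilibrium $(\tilde u, \tilde v)$ is globally asymptotically stable among positive solutions under {\bf (H1)} (this is a standard Lyapunov-function computation for a Lotka--Volterra-type system, and the constraint $\inf a > 1/\beta$ guarantees the equilibrium is positive and the prey component stays bounded below).

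For part (a), when $c > c^*$: by definition of $\overline c_* = c^*$ we already have $u_c(t,x) = u(t,x+ct) \to 0$ locally uniformly in $x$. It then remains to show $v_c \to 1$. Since $bu \to 0$ locally uniformly, the $v$-equation in the moving frame is an asymptotically autonomous perturbation of the Fisher--KPP equation $v_t = d v_{xx} + r v(1-v)$; using parabolic estimates to extract locally uniform convergence of $v_c$ along subsequences to an entire solution $w$ of the limiting equation with $0 \le w \le 1$, together with the lower bound $v_0 \ge \beta > 0$ which propagates (via the comparison principle applied to a spatially homogeneous subsolution) to a uniform positive lower bound $\liminf v \ge$ some $\delta > 0$, we conclude $w \equiv 1$ by the Liouville-type property for Fisher--KPP. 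Hence $v_c \to 1$ in $C_{loc}(\mathbb R)$.

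For part (b), when $0 < c < c^*$ and $c \ne c_1$: first establish a uniform-in-space, uniform-in-time positive lower bound for $u$ on the region $\{0 \le x \le ct\}$, which is exactly what $\underline c_* = c^*$ provides, so $\liminf_{t\to\infty}\inf_{0\le x\le ct} u(t,x) = \underline\delta > 0$; combined with the a priori upper bounds from {\bf (IC)} and the maximum principle, $(u,v)$ is eventually trapped in a fixed compact subset of the positive quadrant on this region. Pass to the limit along subsequences $t_n \to \infty$: by parabolic Schauder estimates $(u_c(\cdot + t_n, \cdot), v_c(\cdot + t_n, \cdot))$ converges in $C^{2,1}_{loc}(\mathbb R \times \mathbb R)$ to an entire solution $(U, V)$ of the \emph{spatially homogeneous} limiting system (with $A = a(\mp\infty)$ as above), with $U$ bounded above and below by positive constants. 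The heart of the argument is then to show $(U,V) \equiv (\tilde u, \tilde v)$: one way is to note that $(U(t), V(t)) := (\inf_x U, \inf_x V)$-type quantities, or more cleanly the spatially homogeneous sub/supersolutions obtained by taking $\sup_x$ and $\inf_x$ of the entire solution at each time, are ordered sub/supersolutions of the kinetic ODE system sandwiching the true dynamics, and global stability of $(\tilde u,\tilde v)$ for the ODE forces the sandwich to collapse; alternatively, apply a Lyapunov functional directly to the PDE entire solution. Since the limit is independent of the subsequence, the full convergence $(u_c, v_c) \to (\tilde u, \tilde v)$ in $C_{loc}(\mathbb R)$ follows.

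The main obstacle I anticipate is \emph{not} the convergence of coefficients or the compactness (these are routine parabolic regularity arguments) but rather propagating a \emph{uniform} positive lower bound for both $u$ \emph{and} $v$ on the expanding region $\{0 \le x \le ct\}$ in order to confine the limiting entire solution to the interior of the positive quadrant, where the kinetic system is well-behaved. For $v$ this requires care because $v$ is only coupled to $u$ through $-bu$ and one must rule out $v$ being driven to $0$ by a large transient predator density; here one leans on {\bf (H1)}, specifically $\beta = 1 - b(\|a\|_\infty - 1) > 0$ ensuring that even at the maximal predator density $u = \|a\|_\infty - 1$ the prey growth rate $r(1 - v - bu)$ is positive for $v < \beta$, so $v \ge \beta$ is preserved by the comparison principle from the initial condition {\bf (IC)}. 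Once both components are uniformly bounded away from $0$ and $\infty$, the reduction to the globally stable kinetic system and the standard subsequence/Liouville argument close the proof.
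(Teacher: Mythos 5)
Your proposal is essentially the paper's proof: Corollary \ref{corollary:c_low_bd} (spreading speed from below) plus Lemma \ref{lemma:gl_bds} give uniform positive upper/lower bounds for $(u,v)$ on the expanding region, parabolic estimates produce a subsequential entire-solution limit of the \emph{spatially homogeneous} predator-prey system with $A=a(\mp\infty)$ (after the claim that the good region $\Omega_k\to\mathbb{R}^2$), and a Lyapunov functional identifies that limit with $(\tilde u,\tilde v)$ — the step the paper delegates to \cite[Lemma 4.1]{Ducrot2021asymptotic}. One caution: the alternative you float for identifying the limit — treating $\sup_x$ and $\inf_x$ of $(U,V)$ over space at each time as ordered sub/supersolutions of the kinetic ODE — does not go through cleanly, because predator-prey systems are not order-preserving (increasing $u$ pushes $v$ down and vice versa), so those extremal envelopes are not comparable by the maximum principle in the way a cooperative system would allow; the Lyapunov-functional route you also mention is the correct one and is what the paper uses.
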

For the proof of Theorem \ref{thm:main3}, see Section \ref{sec:homo}.

\begin{remark}
Note that the case $c_1 \in (2\sqrt{r_2}, 2\sqrt{r_1}]$ is not covered by Theorem \ref{thm:decreasing}. In that case {the} Hamilton-Jacobi approach does not directly apply. We conjecture that $c^* = c_1$ in that case and the predator advances in locked step with the environment. See \cite{berestycki2018forced,fang2016can} for results regarding a single species in a shifting habitat. 
A possible approach is to use the persistence theory as in \cite{Ducrot2021asymptotic}.
\end{remark}

\begin{remark}
In the case of $a(\cdot) \equiv a_0$ being a constant {and $v_0 \equiv 1$}, the spreading speed of the predator was determined by Pan in \cite[Theorem 2.1]{pan2017invasion}.
\end{remark} 

To consolidate the formulas for the spreading speeds in Theorems \ref{thm:increasing} and \ref{thm:decreasing}, we will denote $\lambda^*= \frac{c_1}{2} - \sqrt{|r_2-r_1|}$. Then
the spreading speed for all cases can be given by
\begin{align}\label{eq:sss}
   \sigma(c_1;r_1,r_2) = \begin{cases}
        2\sqrt{r_2} & \text{ if }   r_1 < r_2 \quad \text{and}\quad c_1 \leq 2\sqrt{r_2}\\
        \lambda^* + \frac{r_1}{\lambda^*} & \text{ if }  r_1 < r_2 \quad \text{and}\quad 2\sqrt{r_2} < c_1 < 2\sqrt{r_1} + 2\sqrt{r_2-r_1}\\
        2\sqrt{r_1} & \text{ if }  r_1 < r_2 \quad \text{and}\quad c_1 \geq 2\sqrt{r_1} + 2\sqrt{r_2-r_1},\\
        2\sqrt{r_2} & \text{ if } r_1 > r_2 \quad \text{and} \quad c_1 \leq 2\sqrt{r_2}\\
        2\sqrt{r_1} & \text{ if } r_1 > r_2 \quad \text{and} \quad c_1 > 2\sqrt{r_1},
    \end{cases}
\end{align} 
or, equivalently,
\begin{align}\label{eq:sigma(c_1)}
    \sigma(c_1;r_1,r_2) = \begin{cases}
        2\sqrt{r_2} & \text{ if }  c_1 \leq 2\sqrt{r_2}\\
        \lambda^* + \frac{r_1}{\lambda^*} & \text{ if }  r_1 < r_2 \quad \text{and}\quad 2\sqrt{r_2} < c_1 \leq 2\sqrt{r_1} + 2\sqrt{r_2-r_1}\\
        2\sqrt{r_1} & \text{ if }   c_1 > 2\sqrt{r_1} + 2\sqrt{\max\{0,r_2-r_1\}},
    \end{cases}
\end{align}

\subsection{Related mathematical results}

We also mention a closely related work of Choi, Giletti, and Guo \cite{choi2021persistence}, where they considered a two-species predator-prey system similar to \eqref{eq:system}, with the intrinsic growth rate $r=r(x-c_1t)$ for the prey subject to the climate shift instead of the coefficient $a$.
They considered the case when both initial data $u_0$ and $v_0$ are compactly supported and a non-decreasing profile for the growth rate with $r$ changing sign, $r(-\infty) < 0 < r(\infty)$. In the case of local dispersal, they showed that the prey persists by spreading if and only if the maximal speed of the prey exceeds the environmental speed (i.e., $2\sqrt{d r (+\infty)} > c_1$), while the predator persists by spreading at the speed given by the smaller of the prey and maximal predator spreading speeds. In their setting both species tends to zero in $\{(t,x): x< c_1t\}$, while in the zone ahead of the environmental shift, the density of the prey is strictly decreasing so there is no nonlocal pulling phenomenon.
We also mention \cite{guo2023spreading} for the case of two weak-competing predators and one prey, and \cite{ahn2022spreading}, for the case of one predator and two preys. For compactly supported initial data, the invasion wave of the prey resembles the effect of a shifting environment studied in our paper. However, the exact spreading speed of the predator(s) is not completely determined. 

Finally, we mention the work of Bramson \cite{bramson1983convergence}, which established using probabilistic techniques a correction term of $\frac{3}{2}\log t$ which separates the the location of the spreading front for solutions to the Fisher-KPP equation \eqref{eq:Fisher} and the asymptotic location of the minimal traveling wave solution. This result was later generalized using maximum principle arguments by Lau to KPP-like nonlinearities $f(s)$ satisfying $f'(s) \leq f'(0)$ on $[0,1]$ \cite{lau1985on}. For systems of equations of predator-prey type, the existence and characterization of such a delay between the spreading front and the asymptotic rate of spread is a challenging open question.

\subsection{Organization of the paper}
The rest of the paper will be organized as follows. In Section \ref{sec:2}, we give a quick proof of the upper estimate of the spreading speed (namely, $\overline{c}_* \leq \sigma(c_1;r_1,r_2)$) by invoking the recent results on the diffusive logistic equation in shifting environment due to \cite{lam2022asymptotic}. In Section \ref{sec:3}, we derive some rough estimates for the prey density $v(t,x)$, and state five separate cases for the key parameters $c_1,r_1,r_2$ where the spreading speed has to be treated separately. In Section \ref{sec:4}, we outline, in several lemmas, the conceptual steps to estimate the spreading speed from below via explicit solution of some Hamilton-Jacobi equation \eqref{eq:rho_hj} obtained as the limiting problem of the first equation of \eqref{eq:system}. These lemmas will be proved in Subsections \ref{4.1}, \ref{subsection:4.2} and \ref{section:rho}. In Section \ref{sec:explicitform}, we determine the explicit formulas of the unique solution $\hat\rho$ to the limiting problem in each case, and prove that the upper bound of $c^*$  obtained in Section \ref{sec:3} is also the lower bound. This finishes the proofs of
{Theorems \ref{thm:increasing} and \ref{thm:decreasing}} regarding the spreading speed. In Section \ref{sec:homo}, we prove Theorem \ref{theorem:steadystate} regarding the convergence to homogeneous state. Finally, in the Appendix, we collect some useful comparison results regarding the limiting Hamilton-Jacobi equations in \cite{lam2022asymptotic}, which are rephrased in a format suitable for our purpose here.

\section{Upper bound on spreading speed}\label{sec:2}
In this section, we give a quick proof of $\overline{c}_* \leq \sigma(c_1;r_1,r_2)$, where $\sigma(c_1;r_1,r_2)$ is given by the first three cases of \eqref{eq:sss},
i.e., the spreading speed $c^*$ is bounded above by $\sigma(c_1;r_1,r_2)$. 

First, we establish some preliminary estimates on the solutions $u$ and $v$ of \eqref{eq:system}.

\begin{lemma}\label{lemma:gl_bds}
Assume $0 \leq u_0 \leq \norm{a}_\infty-1$ and $\beta \leq v_0 \leq 1$. Then the corresponding solutions $u(t,x)$ and $v(t,x)$ of \eqref{eq:system} satisfy $0 \leq u(t,x) \leq \norm{a}_\infty-1$ and $\beta \leq v(t,x) \leq 1$ for all $(t,x) \in (0,\infty) \times \mathbb{R}$.
\end{lemma}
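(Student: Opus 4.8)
The plan is to exploit the algebraic structure of the nonlinearities together with the scalar parabolic maximum principle, applied in a carefully chosen order. Since the predator–prey coupling makes the system non-cooperative, there is no single invariant-rectangle argument available; instead one derives the four bounds sequentially, each new bound feeding into the next. Throughout I work with the (unique) bounded classical solution $(u,v)$ on its maximal interval of existence and show the stated bounds are preserved, which in particular precludes finite-time blow-up so that the solution is global.

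First I would establish nonnegativity. The functions $u\equiv 0$ and $v\equiv 0$ solve the first and second equations of \eqref{eq:system} respectively, and because each reaction term carries a factor of $u$ (resp.\ $v$), $u$ and $v$ satisfy linear parabolic inequalities with bounded zeroth-order coefficients and nonnegative initial data; the maximum principle on $\mathbb{R}$ then gives $u,v\ge 0$.

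Next, using $u\ge 0$, the prey equation yields $v_t \le d\,v_{xx} + r\,v(1-v)$, and comparing with the spatially constant supersolution $\bar v\equiv 1$ (an equilibrium of the logistic kinetics, dominating $v_0$ by \textbf{(IC)}) gives $v\le 1$. Then, since $0\le v\le 1$ and $a>0$ — guaranteed by \textbf{(H1)}, as $\inf a > \tfrac{1}{\beta}>0$ — we have $a(x-c_1t)\,v \le \norm{a}_\infty$, so the predator equation gives $u_t \le u_{xx} + u(\norm{a}_\infty - 1 - u)$; comparison with the constant supersolution $\bar u\equiv \norm{a}_\infty-1$ (again an equilibrium of the relevant logistic ODE dominating $u_0$) yields $u\le \norm{a}_\infty-1$. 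Finally, armed with this upper bound on $u$, the prey equation gives $v_t \ge d\,v_{xx} + r\,v\big(1-v-b(\norm{a}_\infty-1)\big) = d\,v_{xx} + r\,v(\beta - v)$, and comparison with the constant subsolution $\underline v\equiv\beta$ (an equilibrium of $\dot w = rw(\beta-w)$, dominated by $v_0$) yields $v\ge\beta$.

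I do not expect a serious obstacle. The only points that require attention are the \emph{ordering} of the comparisons — the non-cooperative structure forces one to pin down $v\le 1$ and $u\le\norm{a}_\infty-1$ before the lower bound $v\ge\beta$ becomes accessible — and the observation, needed for the estimate $a\,v\le\norm{a}_\infty$, that $a$ is strictly positive under \textbf{(H1)}. A routine technical matter is the justification of the scalar comparison principle on the unbounded domain $\mathbb{R}$, which is standard once the solutions under comparison are known to be bounded.
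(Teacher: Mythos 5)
Your proof is correct and reaches the same conclusion the paper obtains with a one-line citation of the maximum principle (Protter--Weinberger or Theorem~6.2.1 of Lam--Lou). Your sequential chain of scalar comparisons --- first $u,v\ge 0$, then $v\le 1$ using $u\ge 0$, then $u\le\norm{a}_\infty-1$ using $0\le v\le 1$ and $a>0$, then $v\ge\beta$ using $u\le\norm{a}_\infty-1$ --- is a valid and more explicit implementation of that same underlying tool, and the ordering you chose is the right one: each bound is established on the full existence interval before it is used in the next comparison, so no continuation argument is needed. One inaccuracy in your framing deserves a correction, though: you claim that non-cooperativity rules out a one-shot invariant-rectangle argument, but that is not so. The invariant-region theorem (Chueh--Conley--Smoller; see also the invariant-region result the paper cites) requires only a diagonal diffusion matrix and that the reaction field point inward on the boundary of a coordinate-aligned rectangle, with no monotonicity hypothesis on the coupling. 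One verifies directly --- using precisely the two inequalities your argument already turns on, namely $av\le\norm{a}_\infty$ for $0\le v\le 1$ and the identity $1-\beta=b(\norm{a}_\infty-1)$ --- that the rectangle $[0,\norm{a}_\infty-1]\times[\beta,1]$ is invariant for the kinetics. Both routes therefore work: yours is more elementary and self-contained, while the invariant-rectangle appeal is presumably what the paper's citation is meant to supply.
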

\begin{proof}
    By the classical theory of reaction-diffusion equations, there exists a unique solution $(u,v)$ satisfying \eqref{eq:system} for all $(t,x) \in (0,\infty) \times \mathbb R$; see, e.g. \cite{Smoller1983shock}. Moreover, since $0 \leq u_0 \leq \norm{a}_\infty-1$ and $\beta \leq v_0 \leq 1$, the maximum principle (see \cite[Chapter 3, Section 6, Theorem 10]{Protter1984maximum} or  \cite[Theorem 6.2.1]{lam2022introduction}) implies $0 \leq u \leq \norm{a}_\infty - 1$ and $\beta \leq v \leq 1$ on $(0,\infty) \times \mathbb R$.
\end{proof}

The global upper bound for $v$ established in Lemma \ref{lemma:gl_bds}, combining with existing results for the diffusive logistic equations with heterogeneous shifting coefficients \cite{lam2022asymptotic}, can be used to determine an upper bound for the spreading speed of $u$.
\begin{proposition}\label{lemma:upper_speed}
   Let $(u(t,x),v(t,x))$ be the solution of \eqref{eq:system}, with the associated maximal spreading speed $\bar{c}_*$ as given in \eqref{eq:max/min_spd}. Then 
    \begin{align}\label{eq:speed_upper_bd}
        \lim_{t \to \infty} \sup_{x \geq (\sigma+\eta)t} u(t,x) = 0 \quad \text{ for each }\eta>0.
    \end{align}
In particular, $\overline{c}_* \leq \sigma$, where $\sigma = \sigma(c_1;r_1,r_2)$ is defined in \eqref{eq:sigma(c_1)}.
\end{proposition}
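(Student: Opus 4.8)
The plan is to dominate the predator $u$ by the solution of a scalar diffusive logistic equation in a shifting environment, whose spreading speed is already known from \cite{lam2022asymptotic}, and then invoke that result.

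First I would use Lemma \ref{lemma:gl_bds}, which gives $\beta \le v \le 1$ on $(0,\infty)\times\mathbb{R}$. Plugging $v\le 1$ into the first equation of \eqref{eq:system} and using $u\ge 0$, $a>0$, one computes
$$
u_t - u_{xx} - \big((a(x-c_1t)-1) - u\big)u \;=\; a(x-c_1t)\,(v-1)\,u \;\le\; 0,
$$
so $u$ is a bounded classical subsolution of the diffusive logistic Cauchy problem
$$
w_t = w_{xx} + \big(r(x-c_1t) - w\big)w \quad\text{in }(0,\infty)\times\mathbb{R},\qquad w(0,\cdot)=u_0,
$$
with $r(\xi):=a(\xi)-1$. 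By \textbf{(H1)}, $r$ is bounded, monotone (increasing if $a$ is increasing, decreasing if $a$ is decreasing), and strictly positive since $\inf_{\mathbb R} a > 1/\beta > 1$; moreover $r(-\infty)=r_1$ and $r(+\infty)=r_2$ by \eqref{eq:param}, and $u_0$ is compactly supported by \textbf{(IC)}. Since $u$ and $w$ are both uniformly bounded and have the same initial data, the comparison principle for scalar parabolic equations on $\mathbb{R}$ yields $0\le u(t,x)\le w(t,x)$ for all $t>0$, $x\in\mathbb{R}$.

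Next I would quote the spreading-speed result for the shifting diffusive logistic equation from \cite{lam2022asymptotic} (restated, in the form needed here, in the Appendix): for positive, bounded, monotone $r$ with $r(-\infty)=r_1$, $r(+\infty)=r_2$ and compactly supported initial data, $\lim_{t\to\infty}\sup_{x\ge(\sigma+\eta)t} w(t,x)=0$ for every $\eta>0$, where $\sigma=\sigma(c_1;r_1,r_2)$ is as in \eqref{eq:sigma(c_1)} --- the increasing case of $r$ supplying the first three cases of \eqref{eq:sss}, and the decreasing case (the only one needed when $a$ decreases, since Theorem \ref{thm:decreasing} excludes $c_1\in(2\sqrt{r_2},2\sqrt{r_1}]$) supplying the remaining two. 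Combined with $u\le w$, this gives \eqref{eq:speed_upper_bd}, and hence $\overline{c}_*\le\sigma$ by the definition in \eqref{eq:max/min_spd}.

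Since the argument is essentially a reduction, I do not expect a serious obstacle; the only points requiring care are (i) verifying that $r(\xi)=a(\xi)-1$ satisfies exactly the hypotheses under which the cited logistic result applies, and (ii) confirming that the spreading-speed formula from \cite{lam2022asymptotic} coincides, case by case in $c_1$, with $\sigma(c_1;r_1,r_2)$ as written in \eqref{eq:sigma(c_1)} --- this is precisely the bookkeeping relegated to the Appendix. A minor technical point is the validity of the comparison principle on the unbounded domain $\mathbb{R}$, which holds here because all functions involved are uniformly bounded.
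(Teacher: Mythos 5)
Your reduction step is exactly what the paper does: use Lemma~\ref{lemma:gl_bds} to get $v\le 1$, drop $v$ from the first equation to obtain $u$ as a subsolution of the scalar shifting diffusive logistic problem \eqref{eq:super} with shifting growth rate $r(\xi)=a(\xi)-1$, and invoke the parabolic comparison principle. Where you diverge from the paper is in how the upper bound for the scalar comparison function is then obtained, and this is where your plan has a genuine gap.

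The paper cites \cite[Theorem 6]{lam2022asymptotic} \emph{only} for the case $r_2>r_1$ (i.e.\ $a$, hence $r$, increasing), which produces the first three branches of $\sigma(c_1;r_1,r_2)$. For $r_1>r_2$ (decreasing $a$), the paper does \emph{not} quote \cite{lam2022asymptotic}; instead it writes down an explicit generalized supersolution $U(t,x)$ of \eqref{eq:super} for each of the three sub-ranges of $c_1$ ($c_1\ge 2\sqrt{r_1}$, $2\sqrt{r_2}<c_1<2\sqrt{r_1}$, $c_1\le 2\sqrt{r_2}$) and reads off the upper bound directly from comparison with $U$. Your proposal assumes that \cite{lam2022asymptotic} supplies the decreasing-case upper bound as well and even claims it is ``restated in the Appendix''; neither is the case --- the Appendix of this paper contains only the comparison principle for the limiting Hamilton--Jacobi equation, not a spreading-speed theorem for the shifting logistic Cauchy problem, and the authors evidently regard the cited theorem as applicable only to the monotone-increasing case. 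So you would need to either (a) verify that \cite{lam2022asymptotic} really does contain a matching statement for decreasing $r$ with $c_1\notin(2\sqrt{r_2},2\sqrt{r_1}]$, which is not something you can defer to bookkeeping, or (b) reproduce the paper's self-contained argument: construct $U$ as in the paper's proof, verify that $U$ is a generalized supersolution (in the sense referenced there) dominating $u_0$, and conclude by comparison. Without one of these, the decreasing half of the proposition is unproved. Note also that a careful version of the argument must handle the intermediate window $2\sqrt{r_2}<c_1<2\sqrt{r_1}$ in the decreasing case, where the paper still derives a bound ($\sigma=c_1$) even though $\sigma(c_1;r_1,r_2)$ as written in \eqref{eq:sigma(c_1)} has no branch there; your proposal silently skips this regime.
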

\begin{proof}
By Lemma \ref{lemma:gl_bds}, $v(t,x) \leq 1$ for all $(t,x) \in (0,\infty)\times\mathbb{R}$, hence we may regard $u(t,x)$ as a subsolution of the following scalar problem
    \begin{equation}\label{eq:super}
        \begin{cases}
             \bar{u}_t = \bar{u}_{xx} + \bar u(-1 - \bar u + a(x-c_1t)) \quad \text{in } (0,\infty) \times \mathbb R\\
             \bar{u}(0,x) = u_0(x) \quad \text{in } \mathbb R.
        \end{cases}
    \end{equation}

Let $\bar{u}$ be the classical solution of \eqref{eq:super} with initial data $u_0(x)$.   By the parabolic maximum principle we have 
  \begin{equation}\label{eq:compare11}
  u(t,x) \leq \bar u(t,x)\quad \text{ for all }~(t,x) \in (0,\infty) \times \mathbb R.
  \end{equation}

In the case $r_2 > r_1$, we may invoke  \cite[Theorem 6]{lam2022asymptotic}  to deduce that $\bar u$ satisfies
    \begin{equation}\label{eq:SSS}
        \lim_{t \to \infty} \sup_{x \geq (\sigma+\eta)t} \bar u(t,x) = 0 \quad \text{ for each }\eta > 0,
    \end{equation}
    where $\sigma = \sigma(c_1;r_1,r_2)$ is given in \eqref{eq:sigma(c_1)}.

In case $r_1 > r_2$, we define 
$$
U(t,x) = \begin{cases}
    \exp(-\sqrt{r_1}{x} + 2r_1 t) &\text{ if } c_1 \geq 2\sqrt{r_1},\\
    \min\{r_1,\exp(-\lambda(x-c_1 t)) \}&\text{ if }2\sqrt{r_2} < c_1 < 2\sqrt{r_1},~ \lambda=\tfrac{1}{2}\left( c_1-\sqrt{c_1^2 - 4r_2}\right)\\
    \min\{r_1, \exp(- \sqrt{r_2}x + 2r_2 t)\} &\text{ if } 0<c_1 \leq 2\sqrt{r_2}.
 \end{cases}
$$
Then it can be verified that $U$ is a generalized supersolution of \eqref{eq:super} 
(see \cite[Definition 1.1.1]{lam2022introduction} or \cite[Definition~4.2]{Berestycki2016shape} for the definition). Hence, we again deduce that \eqref{eq:SSS} holds where 
$$\sigma = \begin{cases}
    2\sqrt{r_1} &\text{ if }c_1 \geq 2\sqrt{r_1},\\
    2\sqrt{r_2} &\text{ if }c_1 \leq 2\sqrt{r_2},\\    
    c_1 &\text{ otherwise.}
\end{cases}
$$
Combining with \eqref{eq:compare11}, we conclude that
    \begin{align*}
        \lim_{t \to \infty} \sup_{x \geq (\sigma+\eta)t} u(t,x) = 0 \quad \text{ for each }\eta>0,
    \end{align*}
where $\sigma$ is given in \eqref{eq:sigma(c_1)} (and $\sigma = c_1$, in case $2\sqrt{r_1}> c_1 > 2\sqrt{r_2}$).
    This completes the proof.
\end{proof}

\section{Rough Estimate for $v(t,x)$}\label{sec:3}

Having established that the spreading speed is bounded above by $\sigma=\sigma(c_1;r_1,r_2)$, we may also deduce in the following lemma that $v(t,x)$ converges to its carrying capacity as $t \to \infty$ in the region $\{(x,t) : x  >\sigma t\}$.

\begin{lemma}\label{lemma:prey_spd}
Let $(u(t,x),v(t,x))$ be the solution of \eqref{eq:system}. Then
\begin{align}\label{eq:v_ahead}
    \lim_{t \to \infty} \sup_{x \geq (\sigma+ \eta)t} |v(t,x) - 1| = 0 \quad \text{ for each }\eta >0,
\end{align}
where $\sigma$ is given by \eqref{eq:sigma(c_1)}.
\end{lemma}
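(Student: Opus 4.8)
\textbf{Proof proposal for Lemma \ref{lemma:prey_spd}.}

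The plan is to exploit the fact that in the region $\{x \geq (\sigma+\eta)t\}$ the predator density $u$ is already known to decay to $0$ by Proposition \ref{lemma:upper_speed}, so the prey equation there behaves, to leading order, like an autonomous Fisher--KPP equation whose only positive equilibrium is $v=1$. Since $v$ is also globally bounded below by $\beta>0$ (Lemma \ref{lemma:gl_bds}), we expect $v \to 1$ uniformly on that region. The subtlety is that ``$u$ small'' only holds eventually and only in the moving region, not everywhere, so a naive global comparison with the scalar logistic equation is not available; the argument must be localized.

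First I would fix $\eta>0$ and a small $\varepsilon>0$. By Proposition \ref{lemma:upper_speed}, $\sup_{x \geq (\sigma+\eta/2)t} u(t,x) \to 0$, so there is $T=T(\varepsilon,\eta)$ with $u(t,x) \leq \varepsilon$ for all $t \geq T$ and $x \geq (\sigma+\eta/2)t$. On this space-time region the second equation of \eqref{eq:system} gives the differential inequality
\begin{equation*}
v_t \geq d v_{xx} + r v(1 - v - b\varepsilon),
\end{equation*}
so I would like to bound $v$ from below by a solution of the scalar logistic equation with carrying capacity $1 - b\varepsilon$. The key step is to build a compactly-moving-front subsolution: because the region $\{x \geq (\sigma+\eta/2)t\}$ expands at speed $\sigma + \eta/2$ while the logistic spreading speed is $2\sqrt{r d (1-b\varepsilon)}$, I need the subsolution's left edge to travel at a speed that is at least $\sigma+\eta/2$ but still allows the solution to climb to near $1-b\varepsilon$ on the faster region $\{x \geq (\sigma+\eta)t\}$; this is possible provided $\sigma+\eta < $ (something), but in fact one does not need a propagating front at all. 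A cleaner route: for each fixed large $x_0$ and $t_0$ with $x_0 \geq (\sigma+\eta/2)t_0$, use the region $\{(t,x): t \geq t_0,\ x \geq (\sigma+\eta/2)t\}$ as a parabolic cylinder-like domain, put the known bound $v \geq \beta$ on its parabolic boundary and $v_t \geq dv_{xx} + rv(1-v-b\varepsilon)$ inside, and compare with the spatially homogeneous ODE solution $\underline v(t)$ of $\underline v' = r\underline v(1-\underline v - b\varepsilon)$, $\underline v(t_0)=\beta$, which increases to $1 - b\varepsilon$. One must check that this ODE solution is an honest subsolution of the PDE problem on the moving region, i.e. that it lies below $v$ on the lateral boundary $x=(\sigma+\eta/2)t$ — and here I would instead use the sharper moving-boundary value $v \geq \beta$ together with a barrier that is subsolution-compatible, or simply invoke that the left boundary moves at finite speed so a standard sliding/localization argument applies.

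Then I would combine this with the upper bound: from $v \leq 1$ (Lemma \ref{lemma:gl_bds}) and the subsolution estimate, $\liminf_{t\to\infty}\inf_{x\geq(\sigma+\eta)t} v(t,x) \geq 1 - b\varepsilon$, and since $\varepsilon>0$ is arbitrary and $v \leq 1$ always, \eqref{eq:v_ahead} follows. The main obstacle is the technical one of constructing a valid subsolution on the \emph{moving} region $\{x \geq (\sigma+\eta/2)t\}$ rather than on all of $\mathbb{R}$: one has to ensure the comparison function does not violate the inequality across the moving lateral boundary, which I would handle either by a clean choice of a sub-logistic profile whose front speed is tuned to exceed $\sigma+\eta/2$ (using $\eta$ to absorb the gap, together with the freedom to shift the profile far to the right initially so that it starts below $\beta$ on the relevant set), or by a direct localized maximum-principle argument on expanding truncated domains followed by passing to the limit. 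Everything else — the ODE limit $\underline v(t)\to 1-b\varepsilon$, the arbitrariness of $\varepsilon$, the global a priori bounds — is routine.
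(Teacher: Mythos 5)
Your proposal takes a genuinely different route from the paper, and it correctly isolates the central difficulty (the lateral boundary of the moving region), but it leaves that difficulty unresolved — and one of the two fixes you offer fails in general. The paper's proof sidesteps barrier constructions entirely: it translates the solution along a putative bad sequence $(t_n,x_n)$ with $x_n/t_n > \sigma$, uses interior parabolic estimates to extract a subsequential limit $(u_\infty,v_\infty)$ that is an entire-in-time solution of the system, notes that $u_\infty\equiv 0$ by Proposition~\ref{lemma:upper_speed}, so that $v_\infty$ is an entire solution of the diffusive logistic equation bounded below by $\beta>0$, and then invokes the Liouville-type classification of such entire solutions to force $v_\infty\equiv 1$, a contradiction. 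That argument is indifferent to the geometry of the moving region, to the size of $\sigma$, and to the prey's own spreading speed — exactly the parameters your comparison must contend with.

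Of your two boundary fixes, the traveling-wave one (``tune the sub-logistic front speed to exceed $\sigma+\eta/2$'') is not available in general: the prey's minimal wave speed $2\sqrt{rd(1-b\varepsilon)}$ is fixed by $r,d$ and bears no a priori relation to $\sigma$, which can easily exceed it (take $a(+\infty)$ large so $\sigma = 2\sqrt{r_2}$ is large while $r,d$ stay fixed); in that regime no such front exists and a rightward-moving wave would be overtaken by the moving boundary precisely where the wave is high, so the comparison breaks. Your second fix does work: compare on a box $[t_0-\tau,t_0]\times[x_0-L,x_0+L]$, positioned inside $\{t\geq T,\ x\geq(\sigma+\eta/2)t\}$ (which is possible for $t_0$ large once $L\lesssim \eta t_0/2$), against a barrier $\underline v(t)\,\phi_L(x-x_0)$ where $\phi_L$ is a cosine hump vanishing on the box's sides and $\underline v$ solves the penalized ODE $\underline v'=r\underline v\bigl(1-\underline v-b\varepsilon-\tfrac{d}{r}(\pi/2L)^2\bigr)$; then let $L,\tau\to\infty$ and $\varepsilon\to 0$. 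But you merely mention this option without carrying it out, and since it is the only one of your proposals that survives all parameter regimes, that computation is the load-bearing step and cannot be left implicit. The paper's compactness-plus-Liouville argument is the shorter and more robust path precisely because it needs no explicit subsolution at all.
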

\begin{proof}
Since $v(t,x) \leq 1$ (thanks to Lemma \ref{lemma:gl_bds}), it suffices to show the lower bound. We shall follow the proof of Theorem 5.1 in \cite{Ducrot2021asymptotic}.  
Fix $c > \sigma(c_1;r_1,r_2)$.
 we may suppose for contradiction that there exists a sequence $\{(t_n,x_n)\}$ with $t_n \to \infty$ and $x_n \geq ct_n$ such that $\limsup\limits_{n \to \infty} v(t_n,x_n) < 1$. Denote $(u_n,v_n)(t,x) = (u,v)(t+t_n,x+x_n)$. By standard parabolic estimates, we may pass to a further subsequence so that $(u_n,v_n)$ converges to an entire in time solution $(u_\infty,v_\infty)$ of \eqref{eq:system} in $C_{loc}(\mathbb{R}^2)$. Since $c > \sigma(c_1;r_1,r_2)$, by Proposition \ref{lemma:upper_speed}, we have $u_\infty \equiv 0$. Thus, $v_\infty$ is an entire  solution satisfying the equation
\begin{align*}
    (v_\infty)_t = d(v_\infty)_{xx} + rv_\infty(1-v_\infty) \quad \text{ for }(t,x) \in \mathbb{R}^2.
\end{align*}
Since $v \geq \beta$  for all $(t,x) \in (0,\infty)\times \mathbb{R}$, we deduce that $v_\infty \geq \beta$ 
for all $(t,x) \in \mathbb{R}^2$. By the classification 
of entire solution of the diffusive logistic equation (see, e.g. \cite[Lemma 2.3(d)]{Liu2020Asymptotic}) we have $v_\infty \equiv 1$. This is in contradiction with the statement $\limsup\limits_{n \to \infty} v(t_n,x_n) < 1$. 
\end{proof}

Having established the upper bound of the spreading speed, the outstanding task is to estimate the spreading speed from below. We will do {so} by adopting the Hamilton-Jacobi approach \cite{Evans1989PDE}. To this end, define
\begin{equation}\label{eq:Fep}
 F^\ep(t,x) = -1 + {v(\tfrac{t}{\ep},\tfrac{x}{\ep}) a\left({\tfrac{x}{\ep}-\tfrac{c_1t}{\ep}}\right)}   
\end{equation}
and its (lower) half-relaxed limit \cite{Barles2013introduction}
\begin{equation}\label{eq:F*}
F_*(t,x) = \liminf_{\ep \to 0 \atop (t',x') \to (t,x)} F^\ep(t',x').
\end{equation}
Thanks to {\bf(H1)}, the function $a(s)$ is monotone.



We will divide the proof of the spreading speed into the following cases, depending on the speed of environmental shift $c_1$ and the profile of the conversion efficiency $a(x-c_1t)$. 
\begin{description}
    \item[Case 1(a):]$r_1< r_2$ and $c_1 \leq 2\sqrt{r_2}$
    \item[Case 1(b):] $r_1 < r_2$ and $2\sqrt{r_2} < c_1 < 2(\sqrt{r_1} + \sqrt{r_2-r_1})$
    \item[Case 1(c):] $r_1< r_2$ and $c_1 \geq 2(\sqrt{r_1} + \sqrt{r_2-r_1})$
    \item[Case 2(a):] $r_1 > r_2$ and $c_1 < 2\sqrt{r_2}$
    \item[Case 2(b):] $r_1 > r_2$ and $c_1 = 2\sqrt{r_2}$
    \item[Case 2(c):] $r_1 > r_2$ and $c_1 > 2\sqrt{r_1}$
    
\end{description} 

In Case 1(a) - (c), we have $r_1 < r_2$, and we let
\begin{align}\label{eq:R1}
   \hspace{-.45cm} R_1(s) = \begin{cases}
        R_{1a}(s) = \begin{cases}
            r_2 & \text{for } s > 2\sqrt{r_2}\\
            \underline r_2 & \text{for } c_1 < s \leq 2\sqrt{r_2}\\
            \underline r_1 & \text{for } s \leq c_1
        \end{cases}  &\quad \text{if } c_1 \leq 2\sqrt{r_2},\\ 
        R_{1b}(s) = \begin{cases}
            r_2 & \text{for } s > c_1,\\
            r_1 & \text{for } \lambda^* + \frac{r_1}{\lambda^*} < s \leq c_1\\
            \underline r_1 & \text{for } s \leq \lambda^* + \frac{r_1}{\lambda^*}
        \end{cases}  &\quad \text{if }  2\sqrt{r_2} <c_1< 2(\sqrt{r_1} + \sqrt{r_2-r_1}), \\
        R_{1c}(s) = \begin{cases}
            r_2 & \text{for } s > c_1,\\
            r_1 & \text{for } 2\sqrt{r_1} < s \leq c_1\\
            \underline r_1 & \text{for } s \leq 2\sqrt{r_1}
        \end{cases} &\quad \text{if } c_1 \geq 2(\sqrt{r_1} + \sqrt{r_2-r_1})
    \end{cases}
\end{align}
In Cases 2(a)-(c), we have $r_1 > r_2$, and we let
\begin{align}\label{eq:R2}
    R_2(s) = \begin{cases}
        R_{2a}(s) =  \begin{cases}
            r_2 & \text{for } s > 2\sqrt{r_2},\\
            \underline r_2 & \text{for } c_1 \leq s \leq 2\sqrt{r_2}\\
            \underline r_1 & \text{for } s < c_1.
        \end{cases}  &\quad \text{if }  c_1 <2\sqrt{r_2},\\  
      R_{2b}(s) =  \begin{cases}
            r_2 & \text{for } s > 2\sqrt{r_2},\\
            \min\{r_2,\underline r_1\} & \text{for } s = 2\sqrt{r_2}\\
            \underline r_1 & \text{for } s < 2\sqrt{r_2}.
        \end{cases}  &\quad \text{if }  c_1 = 2\sqrt{r_2},\\  
        R_{2c}(s) = \begin{cases}
            r_2 & \text{for } s \geq c_1,\\
            r_1 & \text{for } 2\sqrt{r_1} < s < c_1\\
            \underline r_1 & \text{for } s \leq 2\sqrt{r_1}
        \end{cases}  &\quad  \text{if } c_1 > 2\sqrt{r_1}.
    \end{cases}
\end{align}

\begin{lemma}
$F_*(t,x) \geq R_i(x/t)$ in cases 1(a)-(c) and 2(a)-(c).
\end{lemma}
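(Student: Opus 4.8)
The plan is to evaluate $F_*$ along an optimizing sequence and to exploit the product structure $F^\ep = -1 + v\cdot a$, in which the conversion factor $a$ and the prey factor $v$ can be estimated independently. Fix $(t_0,x_0)$ with $t_0>0$, set $s_0:=x_0/t_0$, and choose $\ep_n\downarrow 0$ together with $(t_n,x_n)\to(t_0,x_0)$ realizing the lower half-relaxed limit, i.e.\ $F^{\ep_n}(t_n,x_n)\to F_*(t_0,x_0)$. Writing $T_n:=t_n/\ep_n\to\infty$, $X_n:=x_n/\ep_n$ and $\xi_n:=X_n-c_1T_n=(x_n-c_1t_n)/\ep_n$, one has $F^{\ep_n}(t_n,x_n)=-1+v(T_n,X_n)\,a(\xi_n)$. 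Since $\beta\le v\le 1$ and, by {\bf(H1)}, $a\ge \inf a>1/\beta>0$, after passing to a subsequence I may assume $v(T_n,X_n)\to v_\infty\in[\beta,1]$ and $a(\xi_n)\to\alpha_\infty>0$, so that $F_*(t_0,x_0)=-1+v_\infty\alpha_\infty$, and it suffices to bound $v_\infty$ and $\alpha_\infty$ from below.

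For $\alpha_\infty$ I would use that $a$ is monotone: if $s_0>c_1$ then $x_n-c_1t_n\ge\delta>0$ for large $n$, hence $\xi_n\to+\infty$ and $\alpha_\infty=a(+\infty)$; symmetrically $\alpha_\infty=a(-\infty)$ when $s_0<c_1$; and when $s_0=c_1$, monotonicity still gives $\alpha_\infty\ge\min\{a(-\infty),a(+\infty)\}$. For $v_\infty$ I would record that $v_\infty\ge\beta$ always, and that if $s_0>\sigma$ (with $\sigma=\sigma(c_1;r_1,r_2)$ as in Proposition \ref{lemma:upper_speed}) then $X_n/T_n=x_n/t_n\to s_0>\sigma$ forces $X_n\ge(\sigma+\eta)T_n$ eventually for $\eta:=(s_0-\sigma)/2$, so Lemma \ref{lemma:prey_spd} gives $v(T_n,X_n)\to 1$, i.e.\ $v_\infty=1$. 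Feeding these two lower bounds into $F_*(t_0,x_0)=-1+v_\infty\alpha_\infty$, together with the elementary relations $r_i=a(\pm\infty)-1$ and $\underline r_i=\beta(r_i+1)-1$ (so that $\underline r_1\lessgtr\underline r_2\iff r_1\lessgtr r_2$) and the ordering of the relevant thresholds in each case (for instance $\sigma=2\sqrt{r_1}<c_1$ in Case 2(c), $\sigma=\lambda^*+r_1/\lambda^*<c_1$ in Case 1(b), and $\sigma=c_1=2\sqrt{r_2}$ in Case 2(b)), a routine case-by-case check establishes $F_*(t_0,x_0)\ge R_i(s_0)$ in Cases 1(a)--(c), 2(a), 2(c), and at every speed $s_0$ of Case 2(b) other than $s_0=2\sqrt{r_2}$.

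The remaining and genuinely delicate point is $s_0=2\sqrt{r_2}=c_1=\sigma$ in Case 2(b), where $R_{2b}(2\sqrt{r_2})=\min\{r_2,\underline r_1\}$ strictly exceeds the value $\underline r_2=-1+\beta a(+\infty)$ delivered by the coarse bound $v_\infty\ge\beta$. Here I would split on the behaviour of $\xi_n$: if $\xi_n\to-\infty$ along the subsequence, then $\alpha_\infty=a(-\infty)$ and $F_*\ge-1+\beta a(-\infty)=\underline r_1\ge\min\{r_2,\underline r_1\}$; otherwise $\xi_n$ is bounded below, and the crux is to show that then $u(T_n,X_n)\to 0$. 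Granting this, the entire-solution argument of Lemma \ref{lemma:prey_spd} (translate \eqref{eq:system} by $(T_n,X_n)$, pass to an entire-in-time limit whose first component vanishes identically, and invoke the classification of entire solutions of the diffusive logistic equation) upgrades $v_\infty\ge\beta$ to $v_\infty=1$, whence $F_*\ge-1+\alpha_\infty\ge-1+a(+\infty)=r_2\ge\min\{r_2,\underline r_1\}$. Since $v\le 1$ makes $u$ a subsolution of the scalar equation \eqref{eq:super} with $a$ decreasing and shift speed $c_1=2\sqrt{r_2}$ exactly critical, the needed fact --- that $u$ vanishes on the set $\{x-c_1t=O(1)\}$ just ahead of its spreading front --- would be obtained by comparison with that scalar problem and the Hamilton--Jacobi estimates for the diffusive logistic equation in a shifting environment recalled in the Appendix (cf.\ \cite{lam2022asymptotic}).

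The main obstacle is precisely this last refinement. Away from the threshold speeds the identity $F^\ep=-1+v\cdot a$ with $v\ge\beta$ globally and $v\to 1$ ahead of the front (Lemma \ref{lemma:prey_spd}) is sufficient, and the proof is essentially bookkeeping; but at the \emph{critically pushed} speed $c_1=2\sqrt{r_2}$ in Case 2(b) one genuinely needs the finer information that the predator density decays to $0$ in the frame moving with the environment --- a statement in the spirit of the Bramson-type corrections discussed in the introduction --- before the prey lower bound can be improved from $\beta$ to $1$.
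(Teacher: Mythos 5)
You correctly identify that the paper's own one-line proof is too terse as written: the global bound $v\ge\beta$ alone only yields $F_*\ge\underline r_1$ or $\underline r_2$, and cannot deliver the un-underlined rates $r_1,r_2$ that appear in every $R_{ij}$ on $\{s>\sigma\}$. Your reconstruction — decompose $F^\ep=-1+v\cdot a$, estimate $a(\xi_n)$ via monotonicity and the sign of $s_0-c_1$, use $v\ge\beta$ everywhere and use Lemma~\ref{lemma:prey_spd} ($v\to1$ ahead of $\sigma$) — is the argument actually needed, and it does close Cases 1(a)–(c), 2(a), 2(c), and all of 2(b) except the single speed $s_0=2\sqrt{r_2}=c_1=\sigma$. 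You are also right that at that point the paper's assigned value $R_{2b}(2\sqrt{r_2})=\min\{r_2,\underline r_1\}$ exceeds the $\underline r_2$ that these tools yield; a quick check confirms $\underline r_2<\min\{r_2,\underline r_1\}$ since $\beta<1$ and $a(-\infty)>a(+\infty)$ in Case 2. That value was evidently chosen so that $R_{2b}$ is locally monotone (hypothesis (B1) of the Appendix); replacing it by $\underline r_2$ would destroy local monotonicity because $\underline r_1>\underline r_2<r_2$ would make $R_{2b}$ dip at the single point — so the discrepancy is not cosmetic and does bear on the downstream comparison-principle argument.

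Where your proposal is not yet airtight is the repair of exactly this point. You split on the behaviour of $\xi_n=(x_n-c_1t_n)/\ep_n$ and claim that if $\xi_n$ is bounded below then $u(T_n,X_n)\to0$. That conclusion is clear when $\xi_n\to+\infty$: the explicit supersolution $\min\{r_1,e^{-\sqrt{r_2}(x-2\sqrt{r_2}t)}\}$ from the proof of Proposition~\ref{lemma:upper_speed} gives $u(T_n+t,X_n+x)\le e^{-\sqrt{r_2}(\xi_n+x-2\sqrt{r_2}t)}\to0$ \emph{locally uniformly in} $(t,x)$, so the limiting entire solution has identically vanishing first component and the classification of entire solutions of the diffusive logistic equation forces $v_\infty\equiv1$, whence $F_*\ge r_2$. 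But when $\xi_n$ merely stays \emph{bounded}, the same supersolution only bounds $u(T_n,X_n)$ away from $r_1$ and does not force it to zero; the limiting entire solution can have $u_\infty\not\equiv0$ and therefore $v_\infty$ can remain strictly below $1$. In other words, one would need a genuine delay/pushed-front estimate showing that the predator front in Case 2(b) lags strictly behind the line $x=c_1t$ — precisely the Bramson-type information you allude to — and this is not established by anything the paper cites. So there is a real gap here, both in your write-up and in the paper's own proof of this lemma; it can likely be sidestepped by an approximation argument (replace $R_{2b}$ by a locally monotone $R_\delta$ that equals $\underline r_2$ on a short interval $[2\sqrt{r_2},2\sqrt{r_2}+\delta]$, for which $F_*\ge R_\delta$ is provable by your method, and then send $\delta\to0$ after checking that $\hat\rho_\delta$ still vanishes on $[0,2\sqrt{r_2}]$), but it is not a point to wave through.
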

\begin{proof}
    The lemma follows from the definition of $F_*$, in \eqref{eq:F*}, and the global bounds $\beta \leq v(t,x) \leq 1$ (Lemma \ref{lemma:gl_bds}).
\end{proof}

\section{Lower bound on the spreading speed}\label{sec:4}

We will use the Hamilton-Jacobi method to prove a lower bound for the spreading speed.
%
To this end, define the WKB-Ansatz \cite{Evans1989PDE}
\begin{align}\label{eq:w_transformation}
     w^\epsilon(t,x) = -\epsilon \log u^\epsilon(t,x) \qquad \text{ where }\quad u^\epsilon(t,x) &= u(t/\epsilon,x/\epsilon),
\end{align}
and consider the half-relaxed limits \cite{Barles1987discontinuous}
\begin{align}\label{eq:w_hr}
    w^*(t,x) = \limsup_{\substack{\epsilon \to 0 \\ (t',x') \to (t,x)}} w^\epsilon(t',x') \quad \text{and} \quad w_*(t,x) = \liminf_{\substack{\epsilon \to 0 \\ (t',x') \to (t,x)}} w^\epsilon(t',x').
\end{align}

In the following lemma, we show that  $w^*(t,x)$ and $w_*(t,x)$ can be related to one dimensional-functions $\rho^*(s)$ and $\rho_*(s)$, respectively,
\begin{lemma}\label{lemma:rho}
    Let $w^*$ and $w_*$ be defined as in \eqref{eq:w_hr}. Then $w^*(t,x) = t\rho^*(x/t)$ and $w_*(t,x) = t\rho_*(x/t)$ for some functions $\rho^*$ and $\rho_*$.
\end{lemma}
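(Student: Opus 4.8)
The plan is to exploit the parabolic self-similarity built into the definition of $w^\epsilon$: the whole family $\{w^\epsilon\}$ comes from a single solution $u$ via an $\epsilon$-dependent rescaling, so dilating $(t,x)$ by a factor $\lambda>0$ is the same as replacing $\epsilon$ by $\epsilon/\lambda$. Passing to the half-relaxed limits then forces $w^*$ and $w_*$ to be positively homogeneous of degree one in $(t,x)$, and the claimed representation follows by evaluating at $t=1$.

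First I would record that $w^\epsilon$ is well defined. Since $u_0\ge0$ and $u_0\not\equiv0$ (otherwise $u\equiv0$ and the statement is vacuous), the strong maximum principle applied to the first equation of \eqref{eq:system} gives $u(t,x)>0$ on $(0,\infty)\times\mathbb{R}$, so $w^\epsilon(t,x)=-\epsilon\log u(t/\epsilon,x/\epsilon)$ is a finite continuous function there; moreover the global bound $u\le\norm{a}_\infty-1$ of Lemma \ref{lemma:gl_bds} gives $w^\epsilon\ge-\epsilon\log(\norm{a}_\infty-1)$, so $w^*$ and $w_*$ are well defined on $(0,\infty)\times\mathbb{R}$ with values in $[0,+\infty]$.

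Next I would establish, for every $\lambda>0$ and $(t,x)\in(0,\infty)\times\mathbb{R}$, the scaling identity
\begin{align*}
w^\epsilon(\lambda t,\lambda x)
=-\epsilon\log u\!\left(\tfrac{\lambda t}{\epsilon},\tfrac{\lambda x}{\epsilon}\right)
=\lambda\left(-\tfrac{\epsilon}{\lambda}\log u\!\left(\tfrac{t}{\epsilon/\lambda},\tfrac{x}{\epsilon/\lambda}\right)\right)
=\lambda\,w^{\epsilon/\lambda}(t,x).
\end{align*}
Taking the $\limsup$ (resp.\ $\liminf$) as $\epsilon\to0^+$ and $(t',x')\to(t,x)$, and using that $\epsilon/\lambda\to0^+$ as $\epsilon\to0^+$ together with the fact that $(t',x')\mapsto(\lambda t',\lambda x')$ is a homeomorphism of the cone $(0,\infty)\times\mathbb{R}$ onto itself sending neighborhoods of $(t,x)$ to neighborhoods of $(\lambda t,\lambda x)$, I deduce
\begin{align*}
w^*(\lambda t,\lambda x)=\lambda\,w^*(t,x),\qquad w_*(\lambda t,\lambda x)=\lambda\,w_*(t,x),
\end{align*}
i.e.\ $w^*$ and $w_*$ are positively $1$-homogeneous. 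Taking $\lambda=1/t$ then gives $w^*(t,x)=t\,w^*(1,x/t)$ and $w_*(t,x)=t\,w_*(1,x/t)$, and one sets $\rho^*(s):=w^*(1,s)$ and $\rho_*(s):=w_*(1,s)$.

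This argument is essentially bookkeeping and I do not expect a genuine obstacle. The only slightly delicate point is the interchange of the relaxed $\limsup$/$\liminf$ with the reparametrization $\epsilon\mapsto\epsilon/\lambda$ and the dilation of the space-time variables; this is exactly what the homeomorphism remark above takes care of, since it matches up the families of triples $(\epsilon,t',x')$ over which the two relaxed limits are computed. The real work — identifying $\rho^*$ and $\rho_*$ as sub/supersolutions of the limiting Hamilton--Jacobi equation and then with its unique solution — is carried out in the subsequent lemmas, not here.
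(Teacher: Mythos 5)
Your proof is correct and follows essentially the same approach as the paper: both exploit the parabolic self-similarity of the rescaled family $\{w^\epsilon\}$ and reduce to $w^*(t,x)=tw^*(1,x/t)$. You arrive there by first isolating the scaling identity $w^\epsilon(\lambda t,\lambda x)=\lambda\,w^{\epsilon/\lambda}(t,x)$ and then passing to half-relaxed limits, whereas the paper performs the same substitution directly inside the $\limsup$; the two presentations are equivalent.
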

\begin{proof}
    For the existence of $\rho^*$, we may compute
\begin{align*}
    w^*(t,x) &= \limsup_{\substack{\epsilon \to 0 \\ (t',x') \to (t,x)}} -\epsilon \log u\Big(\frac{t'}{\epsilon},\frac{x'}{\epsilon}\Big) = t\limsup_{\substack{\epsilon \to 0 \\ (t'',x'') \to (1,x/t)}} -(\epsilon/t) \log u\Big(\frac{t''}{\epsilon/t},\frac{x''}{\epsilon/t}\Big).
\end{align*}
Thus $w^*(t,x)=tw^*(1,x/t)$, and the first part of the result is proved if
we take $\rho^*(s) = w^*(1,s)$. The proof of the second part is analogous.
\end{proof}

Next, we describe a bird's eye view of the Hamilton-Jacobi approach in order to achieve our final goal of bounding the spreading speed from below by the optimal constant $\sigma > 0$. For clarity, we will state the necessary lemmas and provide their proofs later on.

We start with the following lemma which is due to \cite{Evans1989PDE} for the KPP equation, the proof is presented Subsection \ref{4.1}.

\begin{lemma}\label{lemma:ulowbd}
    Suppose that there is $s_0>0$ such that $\rho^*(s) = 0$ for all $s \in [0,s_0]$. Then there exists $\delta_0 > 0$ such that
    \begin{align*}
        \liminf_{t \to \infty} \inf_{\eta t < x < (s_0-\eta) t} u(t,x) \geq \delta_0  \quad \text{ for each }\eta>0\text{ sufficiently small}.
    \end{align*}
\end{lemma}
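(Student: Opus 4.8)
The plan is to derive the bound from two ingredients: (i) a \emph{subexponential} lower bound for $u$ inside the cone $\{0\le x\le s_0 t\}$, extracted from the hypothesis $\rho^*\equiv 0$ on $[0,s_0]$; and (ii) a parabolic comparison argument that upgrades this to a uniform positive bound, exploiting that the predator's effective per-capita growth rate is bounded below by a positive constant.

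\emph{Step 1 (subexponential lower bound).} By Lemma \ref{lemma:rho}, $w^*(1,s)=\rho^*(s)$, so the hypothesis gives $w^*(1,s)=0$ for all $s\in[0,s_0]$. Fix $\eta\in(0,s_0/2)$. Since $w^*$ is the half-relaxed upper limit of $w^\epsilon$, the standard compactness argument applied on the compact set $\{1\}\times[\eta,s_0-\eta]$, on which $w^*\equiv 0$, shows that for each $\gamma>0$ there is $\epsilon_0>0$ with $w^\epsilon(1,s)\le\gamma$ for all $s\in[\eta,s_0-\eta]$ and $0<\epsilon<\epsilon_0$. Unravelling $w^\epsilon=-\epsilon\log u^\epsilon$ and $u^\epsilon(t,x)=u(t/\epsilon,x/\epsilon)$ and writing $t=1/\epsilon$, this reads
\[
    u(t,x)\ \ge\ e^{-\gamma t}\qquad\text{whenever}\quad t\ge 1/\epsilon_0\ \text{ and }\ \eta t\le x\le(s_0-\eta)t .
\]
I will use this with $\gamma$ replaced by a small parameter $\delta>0$ to be chosen.

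\emph{Step 2 (upgrade to a uniform bound).} The key observation is that, by {\bf(H1)} and $v\ge\beta$ (Lemma \ref{lemma:gl_bds}),
\[
    -1-u+a(x-c_1t)\,v\ \ge\ -1+\beta\inf_{s\in\mathbb R}a(s)-u\ =\ c_0-u,\qquad c_0:=\beta\inf_{s\in\mathbb R}a(s)-1>0 ,
\]
so $u$ is a supersolution of the Fisher--KPP equation $z_t=z_{xx}+(c_0-z)z$. Given a large time $t_1$ and a small $\delta>0$, let $\zeta$ solve the linear equation $\zeta_t=\zeta_{xx}+\tfrac{c_0}{2}\zeta$ with $\zeta(t_1,\cdot)$ a smooth bump bounded by $e^{-\delta t_1}$, supported in $[\eta t_1,(s_0-\eta)t_1]$ and equal to $e^{-\delta t_1}$ on $[2\eta t_1,(s_0-2\eta)t_1]$. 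As long as $\zeta\le c_0/2$ the inequality $\tfrac{c_0}{2}\zeta\le(c_0-\zeta)\zeta$ makes $\zeta$ a subsolution of the KPP equation, and since $\zeta(t_1,\cdot)\le e^{-\delta t_1}\le u(t_1,\cdot)$ on the support of the bump (Step 1, for $t_1$ large) while $\zeta(t_1,\cdot)=0$ off it, the comparison principle gives $u\ge\zeta$ for $t\ge t_1$ for as long as $\zeta\le c_0/2$. Now $\zeta(t_1+s,\cdot)=e^{-\delta t_1}e^{c_0 s/2}$ times the heat-semigroup image of the (normalized) bump profile; the latter stays $\ge\tfrac12$ of its peak on $[3\eta t_1,(s_0-3\eta)t_1]$ over a time window of order $\delta t_1\ll(\eta t_1)^2$, so choosing $s_\#=\tfrac{2}{c_0}\!\left(\delta t_1+\log(c_0/4)\right)$ (for which $\zeta\le c_0/4$ on $[t_1,t_1+s_\#]$) one obtains $u(t_1+s_\#,x)\ge\zeta(t_1+s_\#,x)\ge c_0/8$ for $x\in[3\eta t_1,(s_0-3\eta)t_1]$. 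Setting $t=t_1+s_\#$ and letting $\delta,\eta\to 0$ absorbs the losses $3\eta t_1\rightsquigarrow\eta t$ and $(s_0-3\eta)t_1\rightsquigarrow(s_0-\eta)t$, giving
\[
    \liminf_{t\to\infty}\ \inf_{\eta t<x<(s_0-\eta)t}u(t,x)\ \ge\ \delta_0:=c_0/8
\]
for every sufficiently small $\eta>0$, with $\delta_0$ independent of $\eta$.

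The hard part is Step 2: one must carefully quantify the race between the in-place exponential growth at rate $c_0/2$ (which overtakes the $e^{-\delta t}$ smallness from Step 1 after an extra time of order $\delta t_1$) and the diffusive spreading of mass (only $O(\sqrt{t_1})$, hence negligible against the linear width $\sim\eta t_1$ of the region supplied by Step 1), all while keeping the comparison function below the threshold $c_0/2$ so that it remains a genuine subsolution of the KPP equation dominated by $u$. That $\delta_0$ comes out independent of $\eta$ (and of $\delta$) is because it is pinned by the stable state $c_0$ of the limiting Fisher--KPP dynamics rather than by the geometry.
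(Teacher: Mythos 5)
Your proof is correct but follows a genuinely different route from the paper's. The paper's argument is the classical Evans--Souganidis viscosity test: for each $(t_0,x_0)\in K=\{(1,x):\eta\le x\le s_0-\eta\}$ it touches $w^*\equiv0$ from above by the paraboloid $\psi(t,x)=(t-t_0)^2+(x-x_0)^2$, uses the uniform convergence $w^\epsilon\to0$ on a neighborhood of $K$ to locate a nearby local maximum $(t_\epsilon,x_\epsilon)$ of $w^\epsilon-\psi$, plugs into the $w^\epsilon$-equation \eqref{eq:w_pde}, and reads off $u^\epsilon(t_0,x_0)\ge u^\epsilon(t_\epsilon,x_\epsilon)\ge F^\epsilon(t_\epsilon,x_\epsilon)+o(1)\ge\delta_0+o(1)$ directly, with $\delta_0=\beta\inf_{s\in\mathbb R}a(s)-1$, which is exactly your $c_0$. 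You instead use the uniform convergence only in the weaker quantitative form $u(t,\cdot)\ge e^{-\delta t}$ on the cone (your Step 1), and then run a separate heat-kernel/KPP-subsolution argument to grow that seed up to $O(1)$ height (your Step 2). Both approaches produce the same constant, because in both cases it is pinned by the lower bound $F^\epsilon\ge\beta\inf_{s\in\mathbb R}a(s)-1$. The paper's argument is shorter and completely bypasses the quantitative race between the smallness $e^{-\delta t_1}$, the exponential in-place growth $e^{c_0 s/2}$, and the diffusive spread that your Step 2 must control; yours is more elementary once Step 1 is in place (it needs only classical comparison and heat-semigroup estimates, not the test-function machinery), at the cost of the bookkeeping needed to pass from the cone $[3\eta t_1,(s_0-3\eta)t_1]$ at time $t_1+s_\#$ back to the cone $[\eta' t,(s_0-\eta')t]$. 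One small point you should make explicit rather than implicit: the comparison $u\ge\zeta$ on $[t_1,t_1+s_\#]\times\mathbb R$ requires $\zeta(t_1,\cdot)\le u(t_1,\cdot)$ on all of $\mathbb R$, not merely on the bump's support; this does hold, because $\zeta(t_1,\cdot)$ is supported inside the cone where $u\ge e^{-\delta t_1}$ and $u\ge0$ elsewhere, but it is worth a sentence.
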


Hence, a lower bound of the spreading speed (and hence the complete proofs of our main theorems)
can be obtained by determining the set $\{s:~\rho^*(s) = 0\}$. Precisely,  it is sufficient to show that $\rho^*(s) = 0$ for $s \in [0,\sigma]$, where $\sigma = \sigma(c_1;r_1,r_2)$ as in \eqref{eq:sigma(c_1)}.

To this end, we derive a limiting Hamilton-Jacobi equation for $w^*$ and then for $\rho^*$. Observe that $w^\epsilon$ satisfies
\begin{align}\label{eq:w_pde}
    w_t^\epsilon - \epsilon w_{xx}^\epsilon + |w_x^\epsilon|^2 + (F^\ep(t,x) - u^\epsilon ) = 0 \quad \text{ for } (t,x) \in (0,\infty) \times \mathbb R,
\end{align}
where $F^\ep$ is given in \eqref{eq:Fep}. By the fact that $F_*(t,x) \geq R(x/t)$, it is standard \cite{Evans1989PDE,lam2022asymptotic} to deduce the following.

\mycomment{
\begin{lemma}\label{lemma:4.5}
    $w^*$ is a viscosity subsolution of 
    \begin{align}\label{eq:w_hj}
   \min\{w_t + |w_x|^2 + R(x/t), w\} = 0 \quad \text{for } (t,x) \in (0,\infty) \times (0,\infty)
\end{align}
    Moreover,
    \begin{equation}
        w^*(t,0) = 0 \quad \text{ for }t \geq 0, \quad \text{ and }\quad 0\leq w^*(t,x) < +\infty \quad \text{ in }(0,\infty)\times [0,\infty).
    \end{equation}
\end{lemma}
The proof is postponed to Subsection \ref{subsection:4.2}.
}
\begin{lemma}\label{lemma:4.6}
Suppose $F_*(t,x) \geq R(x/t)$, then $\rho^*$ is a viscosity subsolution of 
    \begin{align}\label{eq:rho_hj}
   \min\{\rho - s\rho' + |\rho'|^2 + R(s), \rho\} = 0 \quad \text{for } s \in (0,\infty).
\end{align}Moreover,
\begin{equation}\label{eq:lem6b}
    \rho^*(0) = 0 \quad \text{ and }\quad \rho^*(s) <\infty \quad \text{ for all }s \in [0,\infty).
\end{equation}
\end{lemma}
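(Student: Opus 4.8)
The plan is to carry out the vanishing-viscosity / half-relaxed-limit argument of Evans--Souganidis \cite{Evans1989PDE} (in the form used in \cite{lam2022asymptotic}), in two stages. First I would show that the two-dimensional upper half-relaxed limit $w^*$ is a viscosity subsolution of $\min\{w_t+|w_x|^2+R(x/t),\,w\}=0$ on the open quadrant $(0,\infty)\times(0,\infty)$; then I would transfer this to $\rho^*$ via the self-similar identity $w^*(t,x)=t\rho^*(x/t)$ of Lemma \ref{lemma:rho}. Since both the viscosity framework and Lemma \ref{lemma:rho} presume finiteness, I would first record a one-sided bound: from $-1+av-u\ge -K$ (immediate from the global bounds of Lemma \ref{lemma:gl_bds}), $u$ is a supersolution of the heat equation with absorption $\underline u_t=\underline u_{xx}-K\underline u$ started from $u_0$, and since $u_0\not\equiv 0$ this $\underline u$ obeys a Gaussian lower bound $\underline u(t,x)\ge c\,t^{-1/2}e^{-Kt-(|x|+c')^2/(4t)}$; feeding this into $w^\epsilon=-\epsilon\log u^\epsilon$ gives $w^*(t,x)\le Kt+x^2/(4t)$, hence $\rho^*(s)\le K+s^2/4<\infty$.

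For the subsolution property of $w^*$, let $\phi\in C^2$ touch $w^*$ from above at an interior point $(t_0,x_0)$, normalized so $w^*(t_0,x_0)=\phi(t_0,x_0)$. The standard perturbed-maximum lemma produces points $(t_\epsilon,x_\epsilon)\to(t_0,x_0)$ at which $w^\epsilon-\phi$ has a local maximum and $w^\epsilon(t_\epsilon,x_\epsilon)\to w^*(t_0,x_0)$; there $w^\epsilon_t=\phi_t$, $w^\epsilon_x=\phi_x$, $w^\epsilon_{xx}\le\phi_{xx}$, so \eqref{eq:w_pde} gives
$$\phi_t(t_\epsilon,x_\epsilon)-\epsilon\phi_{xx}(t_\epsilon,x_\epsilon)+|\phi_x(t_\epsilon,x_\epsilon)|^2+F^\epsilon(t_\epsilon,x_\epsilon)-u^\epsilon(t_\epsilon,x_\epsilon)\le 0.$$
If $w^*(t_0,x_0)=0$ the subsolution inequality is automatic because the second entry of the $\min$ vanishes. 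If $w^*(t_0,x_0)>0$, then $w^\epsilon(t_\epsilon,x_\epsilon)$ is bounded below by a positive constant for small $\epsilon$, whence $u^\epsilon(t_\epsilon,x_\epsilon)=e^{-w^\epsilon(t_\epsilon,x_\epsilon)/\epsilon}\to 0$; letting $\epsilon\to 0$ and using $\liminf_{\epsilon\to0}F^\epsilon(t_\epsilon,x_\epsilon)\ge F_*(t_0,x_0)\ge R(x_0/t_0)$ (the hypothesis) yields $\phi_t+|\phi_x|^2+R(x/t)\le 0$ at $(t_0,x_0)$.

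To pass to $\rho^*$, take $\phi\in C^2((0,\infty))$ with $\rho^*-\phi$ attaining a local maximum at $s_0>0$ and $\rho^*(s_0)=\phi(s_0)$, and set $\psi(t,x):=t\phi(x/t)$. Then $(w^*-\psi)(t,x)=t\,(\rho^*-\phi)(x/t)\le 0$ near $(1,s_0)$ with equality there, so $w^*-\psi$ has a local maximum at $(1,s_0)$; since $\psi_t=\phi(x/t)-(x/t)\phi'(x/t)$ and $\psi_x=\phi'(x/t)$, the inequality from the previous step at $(1,s_0)$ is exactly $\min\{\rho^*(s_0)-s_0\phi'(s_0)+|\phi'(s_0)|^2+R(s_0),\,\rho^*(s_0)\}\le 0$, which is \eqref{eq:rho_hj}. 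It remains to establish \eqref{eq:lem6b}: the bound $u\le\norm{a}_\infty-1$ gives $w^\epsilon\ge-\epsilon\log(\norm{a}_\infty-1)\to 0$, so $\rho^*\ge 0$ and in particular $\rho^*(0)\ge 0$; for the reverse inequality set $\mu_0:=\beta\inf_s a(s)-1$, which is positive by {\bf(H1)}, and observe that $v\ge\beta$ makes $u$ a supersolution of the Fisher--KPP equation $z_t=z_{xx}+z(\mu_0-z)$, so $u\ge z$ where $z$ solves this equation with $z(0,\cdot)=u_0$ and spreads with the positive speed $2\sqrt{\mu_0}$; hence along any $(t',x')\to(1,0)$ with $\epsilon\to 0$ one has $|x'/\epsilon|=o(t'/\epsilon)$, so $u(t'/\epsilon,x'/\epsilon)\ge z(t'/\epsilon,x'/\epsilon)\ge\mu_0/2$ for small $\epsilon$, giving $w^\epsilon(t',x')\le-\epsilon\log(\mu_0/2)\to 0$, so $\rho^*(0)=w^*(1,0)\le 0$ and thus $\rho^*(0)=0$.

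I expect the genuinely substantive points to be the following. First, the dichotomy in the $w^*$-step: it is the source term $u^\epsilon$, not the vanishing-viscosity term, that encodes the KPP mechanism, and one must argue that it drops out precisely on $\{w^*>0\}$ while being harmless on $\{w^*=0\}$. Second, the identity $\rho^*(0)=0$: unlike the trivial lower bound $\rho^*\ge 0$, this is not formal and requires a genuine parabolic lower bound on $u$ near the origin, which is exactly where the structural assumption $\inf_s a(s)>\tfrac{1}{\beta}$ in {\bf(H1)} enters (it forces $\mu_0>0$, i.e.\ a uniformly positive effective low-density growth rate for the predator). By contrast, the discontinuities of $R$ cause no difficulty, since we only ever use $R(x/t)\le F_*(t,x)$ together with the lower semicontinuity of $F_*$.
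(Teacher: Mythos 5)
Your argument is correct and follows essentially the same Evans--Souganidis route the paper uses (half-relaxed limits, touching-function argument, transfer to $\rho^*$ via the self-similar identity). The organizational differences are minor: the paper folds the $w^*$-step and the $\rho^*$-step into a single computation using the perturbed test function $\varphi(t,x)=t\phi(x/t)-(t-1)^2$, and obtains the local upper bound on $w^\epsilon$ (Lemma~\ref{lemma:wlocalbounds}) by a direct supersolution comparison in the $w^\epsilon$-equation rather than by your Gaussian lower bound for $u$; both variants give the same conclusion with comparable effort.
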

\begin{proof}
We postpone the proof to Subsection \ref{subsection:4.2}.
\end{proof}

By the comparison principle, discussed in the Appendix, the Hamilton-Jacobi equation \eqref{eq:rho_hj} has a unique viscosity solution.
\begin{lemma}\label{lemma:4.7}
For any given case  $(i,j) \in \{(1,a), (1,b), (1,c), (2,a),(2,b),(2,c)\}$ as stated in Section \ref{sec:3},
let $R$ be given by $R = R_{ij}$.     
    The Hamilton-Jacobi equation \eqref{eq:rho_hj} has a unique viscosity solution, $\hat \rho$, satisfying
\begin{equation}\label{e.hatrhobc}
    \hat \rho(0) = 0 \quad \text{ and } \lim_{s \to \infty} \frac{\hat\rho(s)}{s} = \infty.
\end{equation}
Moreover, $\hat\rho$ is nondecreasing in $s$, i.e.
\begin{equation}\label{e.rmk:1}
\hat\rho(s) \equiv 0\quad \text{ for }0 \leq s \leq \sup\{s' \geq 0:~ \hat\rho(s') = 0\}.
\end{equation}
\end{lemma}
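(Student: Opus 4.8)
The plan is to establish Lemma \ref{lemma:4.7} by invoking the comparison principle for \eqref{eq:rho_hj} collected in the Appendix, and then constructing the explicit candidate solution $\hat\rho$ on a case-by-case basis. Since $R = R_{ij}$ is a piecewise-constant, nondecreasing step function with finitely many jumps in each case, the structure of \eqref{eq:rho_hj} is that of a stationary Hamilton–Jacobi equation with a discontinuous, monotone potential; on each interval where $R$ is constant one expects the solution $\hat\rho$ to be smooth (indeed piecewise of the form $\rho = \lambda s - (\lambda^2 - R)$ or identically zero), and the key is to glue these pieces together so that the correct viscosity inequalities hold across the jump points of $R$.

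The steps, in order, are as follows. First, I would recall from the Appendix the statement that \eqref{eq:rho_hj} with the boundary condition $\rho(0)=0$ and the growth condition $\lim_{s\to\infty}\rho(s)/s = \infty$ admits at most one viscosity solution; this reduces the lemma to exhibiting one. Second, I would verify the ansatz that any such solution must be nondecreasing: since $R$ is nondecreasing and the obstacle $\rho \ge 0$ is active near $s=0$, a sub/supersolution comparison against the constant $0$ (and against translates) forces monotonicity, which also yields \eqref{e.rmk:1} directly once existence is known — the set $\{\hat\rho = 0\}$ is an interval $[0,s_*]$. Third, and this is the bulk of the work deferred to Section \ref{sec:explicitform}, I would write down in each of the six cases the explicit piecewise formula for $\hat\rho$: on the rightmost interval where $R = r_2$ the solution picks up the KPP branch $\rho(s) = \tfrac{s^2}{4} - r_2$ for $s$ large, adjusted by the appropriate linear profile dictated by the value of $c_1$; moving leftward through each plateau of $R$ one continues the characteristic/linear pieces, and the transition speeds (e.g. $2\sqrt{r_2}$, $c_1$, $\lambda^* + r_1/\lambda^*$, $2\sqrt{r_1}$) are chosen precisely so that $\rho$ is continuous and the one-sided derivatives satisfy the viscosity sub- and supersolution inequalities at each kink — concave kinks are automatically fine for supersolutions, convex kinks for subsolutions, and one must check that the relevant kinks have the right convexity. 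Fourth, I would check the two boundary/growth conditions \eqref{e.hatrhobc}: $\hat\rho(0)=0$ is built in, and the superlinear growth follows from the rightmost branch being quadratic in $s$.

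The main obstacle I anticipate is the verification at the jump points of $R$. Because $R$ is discontinuous, the term $R(s)$ in \eqref{eq:rho_hj} must be interpreted carefully — one tests against $R_*$ from below for subsolutions and against $R^*$ from above for supersolutions — and the delicate point is to confirm that at each interface the candidate $\hat\rho$, which typically has a gradient jump there, satisfies both one-sided viscosity inequalities simultaneously; this is exactly where the precise algebraic form of the transition speeds $\lambda^* + r_1/\lambda^*$ etc.\ enters, and getting the inequalities to close requires checking that the linear pieces on adjacent plateaus meet with the correct (convex vs.\ concave) corner. A secondary technical point is handling Case 2(b), where $c_1 = 2\sqrt{r_2}$ is a boundary case and $R_{2b}$ takes the value $\min\{r_2,\underline r_1\}$ exactly at the jump; there one must argue that the single-point value does not affect the viscosity solution, which follows from the stability of viscosity solutions under modification of the Hamiltonian on a null set together with the comparison principle.
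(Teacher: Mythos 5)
Your proposal is correct in outline but takes a genuinely different route from the paper. The paper's proof of this lemma is a two-line invocation of Corollary~\ref{corollary:uniqueness}: it merely notes that each $R_{ij}$ (a piecewise-constant function with finitely many jumps) satisfies hypotheses (B1)--(B2) of the Appendix, and then imports existence, uniqueness, \emph{and} the nondecreasing property of $\hat\rho$ wholesale from \cite{lam2022asymptotic} (Proposition~1.7(b) and Lemma~2.9 there), deferring the explicit formulas entirely to Lemma~\ref{lemma:explicit_soln} in Section~\ref{sec:explicitform}. You instead propose existence-by-construction, essentially pulling the content of Section~\ref{sec:explicitform} forward into the proof of this lemma, and uniqueness from the comparison principle (Lemma~\ref{lemma:comp}). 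This is logically sound and even avoids circularity (the paper's Lemma~\ref{lemma:explicit_soln} cites Lemma~\ref{lemma:4.7} for uniqueness, but you would only use the comparison principle, not the abstract existence). The trade-off is that you re-derive rather than quote the general theory, and you duplicate what the paper deliberately keeps as a separate step.

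One factual slip should be corrected. You write that ``$R$ is nondecreasing,'' but that is false in Cases~2(a)--(c): there $a(\cdot)$ is decreasing, so $r_1 > r_2$ and $\underline r_1 > \underline r_2$, and for instance $R_{2a}$ drops from $\underline r_1$ to $\underline r_2$ as $s$ crosses $c_1$. What is needed, and what actually holds, is only condition (B1) (local monotonicity), which any finite-step function satisfies. Consequently your heuristic for the monotonicity of $\hat\rho$ --- that it is ``forced'' by monotonicity of $R$ via a comparison against translates --- does not go through as stated. In your constructive alternative this is harmless, since once the explicit piecewise formulas are written down in each case the fact that $\hat\rho$ is nondecreasing and that \eqref{e.rmk:1} holds can simply be read off; in the paper's route it is imported from \cite[Lemma 2.9]{lam2022asymptotic}. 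So this is a blemish in the justification rather than a gap in the conclusion, but the false claim about $R$ should be dropped.
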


Furthermore, the following Lemma holds:
\begin{lemma}\label{lemma:4.8}
For any given case  $(i,j) \in \{(1,a), (1,b), (1,c), (2,a),(2,b),(2,c)\}$ as stated in Section \ref{sec:3},
let $R$ be given by $R = R_{ij}$ and let $\hat\rho$ be the unique solution of \eqref{eq:rho_hj} as specified in Lemma \ref{lemma:4.7}. Then 
$$
0 \leq \rho^*(s) \leq \hat\rho(s)\quad \text{ for }s \geq 0.$$ 
\end{lemma}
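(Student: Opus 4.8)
The plan is to prove Lemma~\ref{lemma:4.8} by a comparison principle argument for viscosity solutions of the Hamilton–Jacobi equation \eqref{eq:rho_hj}. By Lemma~\ref{lemma:4.6}, $\rho^*$ is a viscosity subsolution of \eqref{eq:rho_hj} with the boundary data $\rho^*(0) = 0$ and $\rho^*(s) < \infty$ for all $s \geq 0$. By Lemma~\ref{lemma:4.7}, $\hat\rho$ is the unique viscosity solution of the same equation, hence in particular a viscosity supersolution, and it satisfies $\hat\rho(0) = 0$ together with the growth condition $\lim_{s\to\infty}\hat\rho(s)/s = \infty$. The idea is to apply the comparison result for \eqref{eq:rho_hj} collected in the Appendix (the version of the comparison principle from \cite{lam2022asymptotic}, rephrased there for the one–dimensional reduced equation) to conclude that a subsolution bounded above at $s=0$ lies below the solution everywhere on $[0,\infty)$, so $\rho^*(s) \leq \hat\rho(s)$; the lower bound $\rho^*(s) \geq 0$ is immediate since $u^\epsilon \leq \|a\|_\infty - 1$ by Lemma~\ref{lemma:gl_bds}, whence $w^\epsilon = -\epsilon\log u^\epsilon \geq -\epsilon\log(\|a\|_\infty-1) \to 0$, so $w_*, w^* \geq 0$ and therefore $\rho^* \geq 0$.

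The key steps, in order, are as follows. First I would record the lower bound $\rho^* \geq 0$ from the global $L^\infty$ bound on $u$ as above. Second, I would verify that the hypotheses of the Appendix comparison principle are met: $\rho^*$ is an upper semicontinuous subsolution of \eqref{eq:rho_hj} on $(0,\infty)$ with $\rho^*(0)=0$, and $\hat\rho$ is a lower semicontinuous supersolution with $\hat\rho(0)=0$; one must also check the behavior at $s=0$ (the boundary point), ensuring the ordering $\rho^*(0) \le \hat\rho(0)$ holds there, and at $s\to\infty$, where the superlinear growth of $\hat\rho$ dominates the at-most-linear growth of $\rho^*$ (which follows from finiteness of $\rho^*$ together with the structure of \eqref{eq:rho_hj}, or alternatively from the a priori upper bound $\overline c_* \le \sigma$ obtained in Proposition~\ref{lemma:upper_speed}). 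Third, I would invoke the comparison principle to obtain $\rho^*(s) \leq \hat\rho(s)$ on $[0,\infty)$, completing the proof.

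The main obstacle I anticipate is handling the discontinuity of the Hamiltonian. The function $R = R_{ij}$ is only piecewise constant (see \eqref{eq:R1}, \eqref{eq:R2}), so \eqref{eq:rho_hj} is a Hamilton–Jacobi equation with discontinuous dependence on $s$, and the standard Crandall–Ishii comparison principle does not apply verbatim. The resolution is to use precisely the tailored comparison statement collected in the Appendix — this is the reason that Appendix exists — which is adapted from \cite{lam2022asymptotic} to cover exactly this class of discontinuous, monotone-in-$s$ Hamiltonians arising from shifting-environment problems; one needs $R$ to be lower semicontinuous (or appropriately chosen at jump points, which is exactly why $R_{2b}$ takes the value $\min\{r_2,\underline r_1\}$ at $s = 2\sqrt{r_2}$) so that the subsolution/supersolution inequalities are compatible across the jumps. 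A secondary technical point is the boundary condition at $s=0$: since \eqref{eq:rho_hj} is posed on $(0,\infty)$, one must make sure the comparison is carried out on $[\eta,\infty)$ for small $\eta>0$ and then let $\eta\to 0$, using continuity of both $\rho^*$ and $\hat\rho$ at $0$ together with the normalization $\rho^*(0)=\hat\rho(0)=0$; alternatively one appeals directly to the state-constraint / boundary formulation in the Appendix. Once the comparison principle is in hand, the proof is short.
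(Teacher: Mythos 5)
Your proposal matches the paper's proof exactly: establish $\rho^* \geq 0$ from the global $L^\infty$ bound on $u$ (noting $\max\{r_1,r_2\} = \|a\|_\infty - 1$), then invoke the Appendix comparison principle (Lemma~\ref{lemma:comp}, adapted from \cite{lam2022asymptotic}) using that $\rho^*$ is a subsolution with $\rho^*(0)=0$ and at-most-linear growth while $\hat\rho$ is a supersolution with $\hat\rho(0)=0$ and superlinear growth. Your additional remarks about the discontinuous Hamiltonian and the choice of $R_{2b}$ at the jump are accurate observations about why the tailored comparison result is needed.
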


The statement \eqref{e.rmk:1} and Lemma \ref{lemma:4.8} imply that 
\begin{equation}
    \rho^*(s) = 0 \quad \text{ for }0 \leq s \leq \sup\{s'\geq 0:~ \hat\rho(s') = 0\}. 
\end{equation}
Together with Lemma \ref{lemma:ulowbd}, this enables us to establish
$$
\underline{c}_* \geq \sup\{s \geq 0:~ \hat\rho(s) = 0\}.
$$

Next, we will establish the explicit formula {of} the unique solution $\hat\rho$ satisfying \eqref{eq:rho_hj} in viscosity sense and \eqref{e.hatrhobc} in classical sense in Lemma \ref{lemma:explicit_soln}.

For example, in Case 1(a) ($r_1< r_2$ and $c_1 \leq 2\sqrt{r_2}$), 
we find that
\begin{align}
        \hat \rho(s) = \begin{cases} 
            \frac{s^2}{4}-r_2 \quad &\text{for } s > 2\sqrt{r_2}\\
            0 \quad &\text{for } 0 \leq s \leq 2\sqrt{r_2}.
        \end{cases}
        \end{align}
Hence, we deduce from Lemma \ref{lemma:4.8} that 
$$
0 \leq \rho^*(s) \leq \hat\rho(s) = 0 \quad \text{ for } 0\leq s \leq 2\sqrt{r_2}.
$$
By Lemma \ref{lemma:ulowbd}, we conclude that $\underline{c}_* \geq 2\sqrt{r_2}$. This establishes the lower bound of the spreading speed in Case 1(a). The spreading speed in Case 1(a) is thus determined, since $2\sqrt{r_2}$ is also the upper bound of spreading speed (thanks to Proposition \ref{lemma:upper_speed}).

In the next couple subsections, we present the proofs of the above lemmas.

\subsection{Proof of Lemma \ref{lemma:ulowbd}}\label{4.1}

\begin{proof}[Proof of Lemma \ref{lemma:ulowbd}]
Our proof is adapted from Theorem 1.1 of \cite{Evans1989PDE}.  Fix a small $0<\eta\ll 1$. It is sufficient to show that there exists $\delta_0=\delta_0(\eta) > 0$ such that
\begin{align}\label{eq:inf_K}
\liminf_{\epsilon \to 0}\inf_{K} u^\epsilon(t,x) \geq \delta_0
\end{align}
for any compact set given by
$K = \{(1,x) : \eta \leq x \leq s_0-\eta\} \subset \subset \{(t,x) : 0 < x/t < s_0\}$.

Indeed, 
\begin{align*}
    \liminf_{t \to \infty}\inf_{\eta t < x < (s_0-\eta)t} u(t,x) =  \liminf_{\epsilon \to 0}\inf_{K} u^\epsilon(t,x) \geq \delta_0. 
\end{align*}

To show \eqref{eq:inf_K}, we first observe that $w^*(t,x) = t\rho^*(x/t) = 0$ in some compact subset $\tilde{K}$ such that 
$$
K \subset {\rm Int}\,\tilde K \subset \{(t,x):~0<x/t<s_0\},
$$
which implies $w^\epsilon(t,x) \to 0$ uniformly in a neighborhood of $K$. Now for $(t_0,x_0) \in K$, let $\psi(t,x) = (t-t_0)^2 + (x-x_0)^2$. Then $w^* - \psi$ has a strict local maximum at $(t_0,x_0)$. Since $w^\epsilon \to 0$ uniformly in a neighborhood of $K$, for each $\epsilon> 0$ sufficiently small, the function $w^\epsilon - \psi$ has a local maximum at $(t_\epsilon,x_\epsilon) \in K$, where $(t_\epsilon,x_\epsilon) \to (t_0,x_0)$ as $\epsilon \to 0$. Thus,  
\begin{align}
    o(1) = \partial_t\psi - \epsilon\partial_{xx}\psi + |\partial_x\psi|^2 \leq \partial_t w^\epsilon - \epsilon\partial_{xx}w^\epsilon + |\partial_xw^\epsilon|^2 = u^\epsilon - F^\epsilon \leq u^\epsilon - \delta_0 \label{eq:ssss1}
\end{align} 
at $(t,x)=(t_\epsilon,x_\epsilon)$, where $\delta_0 = \beta\inf_{s \in \mathbb R} a(s)-1 > 0$.

Using the fact that $w^\epsilon - \psi$ has a local maximum at $(t_\epsilon,x_\epsilon)$, we deduce that
\begin{align*}    w^\epsilon(t_\epsilon,x_\epsilon)\geq (w^\epsilon-\psi)(t_\epsilon,x_\epsilon)  \geq  (w^\epsilon-\psi)(t_0,x_0)
=w^\epsilon(t_0,x_0)
\end{align*}
which implies that 
$u^\epsilon(t_0, x_0) \geq u^\epsilon(t_\epsilon,x_\epsilon)$. Combining with \eqref{eq:ssss1}, we have
\begin{align*}
    u^\epsilon(t_0,x_0) \geq u^\epsilon(t_\epsilon,x_\epsilon) \geq \delta_0 + o(1).
\end{align*}
Since the above argument is uniform for arbitrary $(t_0,x_0) \in K$, this implies \eqref{eq:inf_K}.
\end{proof} 

\subsection{Proof of Lemma \ref{lemma:4.6}}\label{subsection:4.2}
We recall the definition of viscosity solutions of \eqref{eq:rho_hj}. We recall the definition of viscosity solution of Hamilton-Jacobi equations with discountinuous Hamiltonian, following \cite{Barles2013introduction} and originating from \cite{Ishii1985hamilton} (see also \cite{Evans1989PDE,lam2022asymptotic} for the definition involving variational inequalities).
\begin{definition}
 In the following let $R^*$ and $R_*$ be the upper and lower envelope of $R$, which is given by
$$
R^*(s) = \limsup_{s'\to s} R(s') \quad \text{ and }\quad R_*(s) = \liminf_{s'\to s} R(s') .
$$
    
   \begin{itemize} 
   \item A lower semicontinuous function $\hat \rho$ is called a viscosity super-solution of \eqref{eq:rho_hj} if $\hat \rho \geq 0$, and for any test function $\phi \in C^1$, if $s_0$ is a strict local minimum of $\hat \rho - \phi$, then 
    \begin{align*}
        \hat \rho(s_0) - s_0\phi'(s_0) + |\phi'(s_0)|^2 + R^*(s_0) \geq 0.
    \end{align*}
    \item An upper semicontinuous function $\hat \rho$ is called a viscosity sub-solution of \eqref{eq:rho_hj} if for any test function $\phi \in C^1$, if $s_0$ is a strict local maximum of $\hat \rho - \phi$ and $\hat \rho(s_0) > 0$, then 
    \begin{align*}
        \hat \rho(s_0) - s_0\phi'(s_0) + |\phi'(s_0)|^2 + R_*(s_0) \leq 0.
    \end{align*}
   \item  We say $\hat \rho$ is a viscosity solution of \eqref{eq:rho_hj} if $\hat \rho$ is a viscosity super- and sub-solution.
   \end{itemize}
\end{definition}

We will first show that $\rho^*$ is nonnegative and that $\rho^*(0) = 0$.

\begin{lemma}\label{lemma:rho0}
    Let $\rho^*$ be defined as in Lemma \ref{lemma:rho}. Then
    \begin{align}\label{eq:rho_properties}
        \rho^*(0) = 0 \quad \text{and} \quad \rho^*(s) \geq 0 \text{ for } s \geq 0.
    \end{align}
\end{lemma}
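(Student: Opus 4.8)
The plan is to prove the two assertions separately, starting with $\rho^*(s)\ge 0$, which is the easier half. Recall $w^\epsilon = -\epsilon\log u^\epsilon$ and, by Lemma \ref{lemma:gl_bds}, $u^\epsilon = u(t/\epsilon,x/\epsilon) \le \norm{a}_\infty - 1$. Hence $w^\epsilon(t,x) = -\epsilon\log u^\epsilon(t,x) \ge -\epsilon\log(\norm{a}_\infty - 1) \to 0$ as $\epsilon\to 0$, uniformly in $(t,x)$. Taking the half-relaxed $\limsup$ gives $w^*(t,x)\ge 0$ for all $(t,x)\in(0,\infty)\times[0,\infty)$, and since $w^*(t,x) = t\rho^*(x/t)$ by Lemma \ref{lemma:rho}, we obtain $\rho^*(s)\ge 0$ for $s\ge 0$ by taking $t=1$.

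For the equality $\rho^*(0)=0$, since we already have $\rho^*(0)\ge 0$, it remains to show $\rho^*(0)\le 0$, i.e. $w^*(1,0)\le 0$ (equivalently $w^*(t,0)=0$ for $t>0$). The idea is to bound $u$ from below near $x=0$ using the compact support of $u_0$ together with a sub-solution argument. Because $u_0$ is compactly supported and $0\le u_0\le\norm{a}_\infty-1$ with $u_0$ not identically zero (the statement of the spreading result implicitly assumes $u$ spreads), the strong maximum principle gives $u(t_0,x)>0$ for all $x$ at any fixed $t_0>0$, and in particular $u$ is bounded below by a positive constant on any compact $x$-interval at time $t_0$. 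Then one constructs a sub-solution of the first equation of \eqref{eq:system}: using $v\ge\beta$ from Lemma \ref{lemma:gl_bds} and $a(\cdot)\ge\inf_s a(s)$, the predator equation dominates $\underline u_t = \underline u_{xx} + \underline u(-1 + \beta\inf_s a(s) - \underline u)$, a Fisher-KPP equation with positive growth rate $\delta_0 := \beta\inf_s a(s)-1>0$ (positive by \textbf{(H1)}). A compactly supported positive sub-solution of this equation spreads with speed $2\sqrt{\delta_0}>0$ and converges locally uniformly to its positive equilibrium; in particular, for each compact interval $I$ and each speed $c<2\sqrt{\delta_0}$, $\liminf_{t\to\infty}\inf_{|x|\le ct} u(t,x)>0$. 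Translating back through $u^\epsilon(t,x)=u(t/\epsilon,x/\epsilon)$, this forces $w^\epsilon(t,x)\to 0$ locally uniformly in a cone $\{|x|/t < 2\sqrt{\delta_0}\}$ around the positive $t$-axis, so $w^*(t,0)=0$, hence $\rho^*(0)=0$.

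The main obstacle is the lower bound $\rho^*(0)\le 0$: one must be careful that the sub-solution argument genuinely yields a cone of positivity with a \emph{strictly positive} opening angle (so that $(1,0)$ lies in its interior and the half-relaxed limit sees values from nearby points), rather than just positivity along a single ray. This is where \textbf{(H1)} is essential, since it guarantees $\delta_0 = \beta\inf_s a(s) - 1 > 0$ and thus a genuine Fisher-KPP sub-solution with positive spreading speed; the same constant $\delta_0$ already appeared in \eqref{eq:ssss1}. Once positivity in an open cone is in hand, the passage $w^\epsilon\to 0$ and the identification $w^*(1,0)=\rho^*(0)$ are routine.
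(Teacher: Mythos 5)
Your proposal is correct and takes essentially the same route as the paper. For $\rho^*(s)\ge 0$ you use the global upper bound $u\le\norm{a}_\infty-1$ (the paper writes $\max\{r_1,r_2\}$, which equals $\norm{a}_\infty-1$ by monotonicity of $a$), and for $\rho^*(0)\le 0$ you bound $u$ from below by a solution of a Fisher–KPP equation with growth rate $\delta_0=\beta\inf_s a(s)-1=\underline r_1\wedge\underline r_2>0$, extract a cone of positivity with positive opening angle, and conclude $w^*(t,0)=0$ — precisely the paper's argument (where the same sub-solution is denoted $\underline u$ with growth rate $\underline r=\underline r_1\wedge\underline r_2$).
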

\begin{proof}
    We first show $w^*(t,x) \geq 0$ for $(t,x) \in (0,\infty) \times \mathbb R$. Indeed, since $u(t,x) \leq \max\{r_2,r_1\}$ for all $(t,x) \in [0,\infty) \times \mathbb R$, by the definition of $w^\epsilon$ we have $w^\epsilon \geq -\epsilon \log(\max\{r_2,r_1\})$ for each $\epsilon > 0$ and $(t,x) \in (0,\infty) \times \mathbb R$, and we may compute 
    \begin{align}
    w^*(t,x) = \limsup_{\substack{\epsilon \to 0 \\ (t',x') \to (t,x)}} w^\epsilon(t',x') \geq 0 \quad \text{ for } (t,x) \in (0,\infty) \times \mathbb{R}. \label{eq:ss1s}
    \end{align}    
    In particular, $w^*(t,0) \geq 0$ for $t > 0$.
    
    The proof will be complete once we show $w^*(t,0) \leq 0$ for $t >0$. Denote $\underline r = \underline r_1 \wedge \underline r_2$. Then using the lower bound $v \geq \beta$, we see that $u$ is a super-solution of
\begin{align}
    \underline{u}_t - \underline{u}_{xx} &= \underline u(\underline r - \underline u) \quad \text{ in } (0,\infty) \times \mathbb{R}. \label{eq:diffu_log}
\end{align}
Let $\underline u(t,x)$ be the solution of \eqref{eq:diffu_log} with identical (compactly supported) initial {condition} as $u(t,x)$, then 
 the classical spreading result for the diffusive logistic equation \cite{aronson1978multidimensional} says that $\underline u$ has spreading speed $2\sqrt{\underline{r}}$. In particular,
 $$
 \liminf_{t\to\infty} \inf_{|x|<\sqrt{\underline r}t} \underline{u}(t,x)\geq 2\delta_1 \quad \text{ for some }\delta_1>0.
 $$
By the comparison principle, $u \geq \underline u$, i.e. there exists $t_1 > 0$ such that
\begin{align*}
    \inf_{|x| < \sqrt{\underline r}t} u(t,x) \geq \delta_1 \quad \text{ for }t \geq t_1,
\end{align*}
which implies
\begin{align}\label{eq:wupperbd}
    \sup_{\substack{|x| < \sqrt{\underline r}t\\t\geq \epsilon t_1}} w^\epsilon(t,x) \leq  -\epsilon \log \delta_1.
\end{align}
Now, fix an arbitrary $t_0>0$. Let $(t,x) \to (t_0,0)$ and $\ep \to 0$, we deduce
\begin{align}
    w^*(t_0,0) = \limsup_{\substack{\epsilon \to 0 \\ (t,x) \to (t_0,0)}} w^\epsilon(t,x) \leq 0. \label{eq:ss2s}
\end{align}
Combining \eqref{eq:ss1s} and \eqref{eq:ss2s}, we have
\begin{equation}\label{eq:ss3s}
    w^*(t,0) = 0 \quad \text{ for all }t>0.
\end{equation}
We recall $w^*(t,x) = t\rho^*(x/t)$ (thanks to Lemma \ref{lemma:rho}), so that \eqref{eq:rho_properties} directly follows from \eqref{eq:ss1s} and \eqref{eq:ss3s}.
This completes the proof.
\end{proof}
The following lemma implies that $\rho^*(s) < \infty$ for $s \in [0,\infty)$.
\begin{lemma}\label{lemma:wlocalbounds}
    Let $w^\epsilon$ be a solution of \eqref{eq:w_pde}. Then for each compact subset $Q$ of $(0,\infty) \times \mathbb{R}$, there is a constant $C(Q)$ independent of $\epsilon$ such that
    \begin{align*}
        w^\epsilon(t,x) \leq C(Q) \quad \text{ for } (t,x) \in Q \text{ and } \epsilon \in (0,1/C(Q)].
    \end{align*}
In particular,
\begin{equation}\label{eq:finite1}
    w^*(t,x) < +\infty \quad \text{ for each }(t,x) \in (0,\infty)\times \mathbb{R} \quad \text{ and }\quad \rho^*(s) <+\infty \quad \text{ for each }s\in [0,\infty).
\end{equation}
\end{lemma}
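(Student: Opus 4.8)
The plan is to convert the upper bound $w^\epsilon \le C(Q)$ into a quantitative \emph{lower} bound on $u$. Since $w^\epsilon(t,x) = -\epsilon\log u(t/\epsilon,x/\epsilon)$, the estimate $w^\epsilon \le C(Q)$ on a compact set $Q$ is equivalent to $u(t/\epsilon,x/\epsilon) \ge e^{-C(Q)/\epsilon}$ there; so what is needed is a lower bound on the original solution $u(T,X)$ along $(T,X)=(t/\epsilon,x/\epsilon)$ with $T\to\infty$ and $X/T$ bounded, with at most exponential decay in $T$. This will follow by comparing $u$ from below with a scaled heat solution.

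First I would establish a Gaussian-type lower bound for $u$ on all of $(0,\infty)\times\mathbb{R}$. By Lemma \ref{lemma:gl_bds}, $0\le u\le \norm{a}_\infty-1=:M$, and since $a(\cdot)v\ge 0$ the reaction term obeys $-1-u+a(x-c_1t)v \ge -1-M=:-\mu$; hence $u$ is a supersolution of $U_t = U_{xx}-\mu U$. Comparing with $e^{-\mu t}(G_t * u_0)$, where $G_t$ denotes the heat kernel, gives $u(t,x)\ge e^{-\mu t}(G_t*u_0)(x)$. As $u_0$ is continuous, nonnegative, compactly supported and (for the problem to be nontrivial) not identically zero, we have $u_0\ge \gamma\,\mathbf{1}_I$ for some $\gamma>0$ and a bounded interval $I$; bounding $G_t(x-y)$ from below for $y\in I$ then yields
\[
u(t,x) \ \ge\ \frac{\gamma|I|}{\sqrt{4\pi t}}\,\exp\!\Big(-\mu t - \frac{(|x|+R_0)^2}{4t}\Big)\qquad\text{for all }(t,x)\in(0,\infty)\times\mathbb{R},
\]
with $R_0 := \sup_{y\in I}|y|$. (Equivalently, one may restart at $t=1$: by the strong maximum principle $u(1,\cdot)>0$, and it may be bounded below by a multiple of an indicator of any prescribed interval.)

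Next I would substitute $(t,x)\mapsto(t/\epsilon,x/\epsilon)$ and apply $-\epsilon\log(\cdot)$. The key algebraic point is that the $1/\epsilon$ appearing in the exponent of the Gaussian bound cancels against the prefactor $\epsilon$, leaving
\[
w^\epsilon(t,x)\ \le\ \mu t + \frac{(|x|+R_0\epsilon)^2}{4t} + \frac{\epsilon}{2}\log\frac{4\pi t}{\epsilon} - \epsilon\log(\gamma|I|)\qquad\text{for all }(t,x)\in(0,\infty)\times\mathbb{R},\ \epsilon\in(0,1].
\]
On $Q=[\tau,T]\times[-L,L]\subset(0,\infty)\times\mathbb{R}$ with $\tau>0$, the main term $\mu t+(|x|+R_0\epsilon)^2/(4t)$ is bounded by $\mu T+(L+R_0)^2/(4\tau)$, while the remaining terms are $O(\epsilon\log(1/\epsilon))$, hence bounded on $(0,1]$; this yields the constant $C(Q)$. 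Finally, choosing $Q$ to be a compact neighborhood of a point $(t,x)\in(0,\infty)\times\mathbb{R}$ and passing to the half-relaxed limit gives $w^*(t,x)\le C(Q)<\infty$, and by Lemma \ref{lemma:rho}, $\rho^*(s)=w^*(1,s)<\infty$ for every $s\in[0,\infty)$.

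I do not anticipate a genuine obstacle: this is the standard lower-bound companion to Lemma \ref{lemma:ulowbd}, and the only point deserving care is the scaling bookkeeping that makes the $1/\epsilon$ cancel. It is worth noting that the term $(|x|+R_0\epsilon)^2/(4t)$ blows up as $t\to 0$, which is precisely why the bound is uniform only on compact subsets of $(0,\infty)\times\mathbb{R}$ bounded away from $\{t=0\}$.
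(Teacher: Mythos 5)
Your argument is correct, but it takes a genuinely different route from the paper's. The paper works on the transformed equation \eqref{eq:w_pde}: it builds an explicit supersolution $z^\epsilon(t,x)=\frac{|x+2\delta|^2}{4t}+\frac{\epsilon}{2}\log t + C_\delta(1+t)$ for $w^\epsilon$ and compares on the half-strip $[0,1/\delta]\times[0,\infty)$, using the bound $w^\epsilon(\cdot,0)\le C_\delta$ supplied by (the proof of) Lemma~\ref{lemma:rho0} as lateral boundary data. You instead go back one step and bound the \emph{original} solution $u$ from below on all of $(0,\infty)\times\mathbb{R}$ by comparison with $e^{-\mu t}\,G_t*u_0$, $\mu=\|a\|_\infty$, which after the logarithm and the $\epsilon$-scaling lands on essentially the same profile $\mu t+\frac{|x|^2}{4t}+O(\epsilon\log(1/\epsilon))$. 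Your version is self-contained: it bypasses Lemma~\ref{lemma:rho0} and never invokes a comparison principle for the rescaled nonlinear equation \eqref{eq:w_pde}; the only inputs are $0\le u\le\|a\|_\infty-1$, positivity of $a\,v$, and the heat kernel. The paper's version, by contrast, produces the supersolution $z^\epsilon$ that is the natural object in the viscosity-solution framework the rest of the section lives in. Both are sound; the one hypothesis you should state explicitly is $u_0\not\equiv 0$ (otherwise $w^\epsilon\equiv+\infty$ and the lemma is vacuous), which is what lets you take $u_0\ge\gamma\,\mathbf{1}_I$, and which the paper tacitly assumes as well.
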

\begin{proof}
We only prove the bound for $Q \subset (0,\infty)\times [0,\infty)$. The case for $Q \subset (0,\infty)\times (-\infty,0]$ is similar and is omitted. 
    Our proof follows the ideas in \cite{Evans1989PDE}. Fix $\delta \in (0,1)$ such that $Q \subset [\delta, 1/\delta] \times [0, 1/\delta]$. Define 
    \begin{align*}
        z^\epsilon(t,x) = \frac{|x+2\delta|^2}{4t} + \frac{\epsilon}{2}\log t + C_\delta(1+t).
    \end{align*}
    By taking $C_\delta > 0$ to be a large constant depending on $\delta$,  $z^\epsilon$ is a (classical) super-solution of \eqref{eq:w_pde} in $(0,\infty) \times (0,\infty)$. 

    By \eqref{eq:ss3s} in the proof of Lemma \ref{lemma:rho0} and the definition of $w^*$, there is a constant $C_\delta > 0$ such that, 
    \begin{align*}
        \sup_{0 < \epsilon \leq 1/2} w^\epsilon(t+\delta/2,0) \leq {C_\delta} \quad \text{ for } t \in [0, 1/\delta].
    \end{align*}
Observe that for $\epsilon$ sufficiently small, we have
    \begin{align*}
        \begin{cases}
            w^\epsilon(\delta/2,x)< \infty = z^\epsilon(0,x) \quad \text{ for } x \geq 0\\
            w^\epsilon(t+\delta/2,0) \leq C_\delta \leq z^\epsilon(t,0) \quad \text{ for } t \in [0,1/\delta].
        \end{cases}
    \end{align*}
    It follows from the maximum principle that 
    \begin{align*}
        w^\epsilon(t+\delta/2,x) \leq z^\epsilon(t,x) \quad \text{ for } (t,x) \in [0,1/\delta] \times [0, \infty).
    \end{align*}
Taking supremum over $[\delta/2,1/\delta]\times [0,1/\delta]$, we have
\begin{equation}
\sup_{[\delta/2,1/\delta]\times [0,1/\delta]}    w^\epsilon(t+\delta/2 ,x) \leq C'_\delta:=\sup_{[\delta/2,1/\delta]\times [0,1/\delta]} \left[\frac{|x+2\delta|^2}{4t} + \log t  + C_\delta(1+t) \right].
\end{equation}
This completes the proof.
\end{proof}


Next, we prove Lemma \ref{lemma:4.6}.

\begin{proof}[Proof of Lemma \ref{lemma:4.6}]
Since \eqref{eq:lem6b} is a consequence of Lemma \ref{lemma:rho0} and \eqref{eq:finite1}, it remains to show that $\rho^*$ is a viscosity subsolution of \eqref{eq:rho_hj}.

Let $\phi \in C^1$ be a test function and suppose that $\rho^* - \phi$ has a strict local maximum at $s = s_0$, and that $\rho^*(s_0) > 0$. Without loss of generality, we may assume that $\rho^* - \phi \leq 0$ for all $s$ near $s_0$, with equality holding only at $s = s_0$. We will show that
\begin{align*}
        \rho^*(s_0) - s_0\phi'(s_0) + |\phi'(s_0)|^2 + R_*(s_0) \leq 0.
\end{align*}
(Note that $R_*(s_0)= R(s_0)$ by our definition of $R$ in \eqref{eq:R1} and \eqref{eq:R2}.) 
First, we note that $w^*(t,x) = \rho^*(x/t)$ and that $w^*(t,x) - t\phi(x/t) - (t-1)^2 \leq 0$ for all $(t,x)$ near $(1,s_0)$, with equality holding only at $(t,x) = (1,s_0)$. Define, in terms of $\phi$, a two variable test function $$\varphi(t,x) = t\phi(x/t) - (t-1)^2.$$ Then by the definition of $w^*$, there exists a sequence $\epsilon_n \to 0$ and sequence of points $(t_n,x_n) \to (1,s_0)$ as $n \to \infty$ such that: $w^{\epsilon_n} - \varphi$ has a local maximum at $(t_n,x_n)$, and $w^{\epsilon_n}(t_n,x_n) \to w^*(t_0,x_0) > 0$. Thus, for $(t,x) = (t_n,x_n)$, we have
\begin{align*}
    \partial_t\varphi = \partial_t w^{\epsilon_n} &= \epsilon_n \partial_{xx}w^{\epsilon_n} - |\partial_x w^{\epsilon_n}|^2 - (F^{\epsilon_n} - u^{\epsilon_n})\\
    &\leq \epsilon_n\partial_{xx}\varphi - |\partial_x \varphi|^2 - (F^{\epsilon_n} - u^{\epsilon_n}).
\end{align*}
Thus, 
\begin{align}\label{eq:phi_sub}
   \partial_t\varphi - \epsilon_n\partial_{xx}\varphi + |\partial_x \varphi|^2 + (F^{\epsilon_n} - u^{\epsilon_n}) \leq 0
\end{align}
for $(t,x) = (t_n,x_n)$.
Letting $n \to \infty$, we obtain
\begin{align}\label{eq:sub_sol_step}
   \partial_t\varphi(t_0,x_0) + |\partial_x \varphi(t_0,x_0)|^2 + F_*(t_0,x_0) \leq 0,
\end{align}
where we have used the fact that $w^{\epsilon_n}(t_n,x_n) \to w^*(t_0,x_0) > 0$ implies $u^{\epsilon_n}(t_n,x_n) \to 0$. Since $F_*(t_0,x_0) \geq R(x_0/t_0)$, it follows from \eqref{eq:sub_sol_step} that
\begin{align*}
    \phi(s_0) - s_0\phi'(s_0)  + |\phi'(s_0)|^2 + R_*(s_0) \leq 0.
\end{align*} 
Thus, $\rho^*$ is a viscosity sub-solution of \eqref{eq:rho_hj}.
%
\end{proof}

\subsection{Proof of Lemmas \ref{lemma:4.7} and \ref{lemma:4.8}}\label{section:rho}
\begin{proof}[Proof of Lemma \ref{lemma:4.7}]
Observe, in each case $i = 1,2$, that our choice of $R$ satisfies (B1)-(B2) of the Appendix. By Corollary \ref{corollary:uniqueness} of the Appendix, there exists a unique $\hat\rho$ satisfying \eqref{eq:rho_hj} in the viscosity sense, and the boundary condition \eqref{e.hatrhobc} in the classical sense. Moreover, $s\mapsto \hat\rho$ is nondecreasing. This proves Lemma \ref{lemma:4.7}.
\end{proof}

\begin{proof}[Proof of Lemma \ref{lemma:4.8}]
We observe that $\rho^*$ is a viscosity subsolution (by Lemma \ref{lemma:4.6}) and that $\hat\rho$ is a viscosity super-solution (by Lemma \ref{lemma:4.7}). Moreover, 
by \eqref{eq:lem6b} and \eqref{e.hatrhobc}, we have
$$
\rho^*(0) = \hat\rho(0)=0,\quad \text{ and }\quad \lim_{s \to \infty} \frac{\rho^*(s)}{s} \leq +\infty =\lim_{s \to \infty} \frac{\hat \rho(s)}{s}.
$$
We can therefore apply the comparison principle (see Lemma \ref{lemma:comp} of the Appendix) to derive 
$$
\rho^*(s) \leq \hat\rho(s) \quad \text{ for }s \geq 0.
$$
Finally, $\rho^*(s) \geq 0$ is proved in Lemma \ref{lemma:rho0}.
\end{proof}


\section{Solving for the spreading speed via explicit formulas for $\hat\rho$} \label{sec:explicitform}
For each of the 
cases 1(a)-(c), 2(a)-2(c), we will propose an explicit formula for $\hat\rho$ in Subsection \ref{subsec:5.1}. Thanks to the uniqueness result in Lemma \ref{lemma:4.7}, it is enough to verify  (separately for each of the cases) that the given expression defines a viscosity solution of \eqref{eq:rho_hj}. This will be done in Subsection \ref{subsec:5.2}. 

\subsection{Explicit formulas for $\hat\rho$}\label{subsec:5.1}

Below, we state the explicit formula for $\hat \rho$ in each case. Subsequently, we will verify in Lemma \ref{lemma:explicit_soln} that $\hat \rho$ solves \eqref{eq:rho_hj} by invoking the definition of the viscosity solution \cite{Barles2013introduction}. 

\begin{itemize}
    \item{\bf Case 1(a)}: $r_1< r_2$ and $c_1 \leq 2\sqrt{r_2}$. 
        \begin{align}\label{eq:rho1a}
        \hat \rho(s) = \begin{cases} 
            \frac{s^2}{4}-r_2 \quad &\text{for } s > 2\sqrt{r_2}\\
            0 \quad &\text{for } 0 \leq s \leq 2\sqrt{r_2},
        \end{cases}
        \end{align}
    \item{\bf Case 1(b)}: $r_1 < r_2$ and $2\sqrt{r_2} < c_1 < 2(\sqrt{r_1} + \sqrt{r_2-r_1})$.
        \begin{align}\label{eq:rho1b}
        \hat \rho(s) = \begin{cases}
            \frac{s^2}{4}-r_2 \quad &\text{for } s > c_1\\
            \lambda^* s - \big((\lambda^*)^2 + r_1\big) \quad &\text{for } (\lambda^* + \frac{r_1}{\lambda^*}) < s \leq c_1\\
            0 \quad &\text{for } 0 \leq s \leq \lambda^* + \frac{r_1}{\lambda^*},
        \end{cases}
        \end{align}
    where $\lambda^* = \frac{c_1}{2} - \sqrt{r_2-r_1}$.
    \item {\bf Case 1(c)}: $r_1< r_2$ and $c_1 \geq 2(\sqrt{r_1} + \sqrt{r_2-r_1})$
    \begin{align}\label{eq:rho1c}
        \hat \rho(s) = \begin{cases}
            \frac{s^2}{4}-r_2 \quad &\text{for } s > c_1\\
            \lambda^* s - \big((\lambda^*)^2 + r_1\big) \quad &\text{for } 2\lambda^* < s \leq c_1\\
            \frac{s^2}{4}-r_1 \quad &\text{for } 2\sqrt{r_1} < s \leq 2\lambda^*  \\
            0 \quad &\text{for } 0 \leq s \leq 2\sqrt{r_1}.
        \end{cases}
    \end{align}

    \item {\bf Case 2(a)}: $r_1 > r_2$ and $c_1 < 2\sqrt{r_2}$
        \begin{align}\label{eq:rho2a}
        \hat \rho(s) = \begin{cases} 
            \frac{s^2}{4}-r_2 \quad &\text{for } s > 2\sqrt{r_2}\\
            0 \quad &\text{for } 0 \leq s \leq 2\sqrt{r_2},
        \end{cases}
        \end{align}
    \item {\bf Case 2(b)}: $r_1 > r_2$ and $c_1 = 2\sqrt{r_2}$ 
        \begin{align}\label{eq:rho2b}
        \hat \rho(s) = \begin{cases} 
            \frac{s^2}{4}-r_2 \quad &\text{for } s > 2\sqrt{r_2}\\
            0 \quad &\text{for } 0 \leq s \leq 2\sqrt{r_2},
        \end{cases}
        \end{align}
    \item {\bf Case 2(c)}: $r_1 > r_2$ and $c_1 > 2\sqrt{r_1}$.
        \begin{align}\label{eq:rho2c}
        \hat \rho(s) = \begin{cases} 
            \frac{s^2}{4}-r_2 \quad &\text{for } s > 2\tilde \lambda \\
            \tilde \lambda s - (\tilde \lambda^2 + r_2) &\text{for } c_1 < s \leq 2\tilde \lambda \\
            \frac{s^2}{4}-r_1 \quad &\text{for } 2\sqrt{r_1} < s < c_1\\
            0 \quad &\text{for } 0 \leq s \leq 2\sqrt{r_1},
        \end{cases}
    \end{align}
        where $\tilde \lambda = \frac{c_1}{2} + \sqrt{r_1-r_2}$. 
 \end{itemize}

 \subsection{$\hat\rho$ solves the HJE \eqref{eq:rho_hj} in the viscosity sense}\label{subsec:5.2}
 
\begin{lemma}\label{lemma:explicit_soln}
The unique viscosity solution $\hat\rho$ of \eqref{eq:rho_hj} satisfying (in the classical sense)
$$
\hat\rho(0) =0 \quad \text{ and }\quad \lim_{s \to \infty}\frac{\hat\rho(s)}{s} = \infty,
$$
is given by the formulas of Subsection \ref{subsec:5.1}.
\end{lemma}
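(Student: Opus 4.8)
The plan is to verify, case by case, that each piecewise-defined function $\hat\rho$ from Subsection \ref{subsec:5.1} satisfies the two defining requirements: (i) $\hat\rho(0)=0$ and $\lim_{s\to\infty}\hat\rho(s)/s=\infty$ in the classical sense, which is immediate by inspection since the rightmost branch is always $\tfrac{s^2}{4}-r_i$; and (ii) $\hat\rho$ is a viscosity solution of the variational inequality $\min\{\rho-s\rho'+|\rho'|^2+R(s),\ \rho\}=0$, with $R=R_{ij}$ the piecewise-constant function from \eqref{eq:R1}–\eqref{eq:R2}. By the uniqueness assertion in Lemma \ref{lemma:4.7}, producing a function meeting (i)–(ii) suffices. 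Because each $\hat\rho$ is piecewise $C^1$, concave-or-affine on each piece, continuous, and nondecreasing, the verification reduces to two elementary tasks: checking the PDE on the interior of each smooth branch, and checking the viscosity (sub/super-solution) inequalities at the finitely many breakpoints.

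First I would record the branch-by-branch classical computation. On a branch where $\hat\rho\equiv 0$ we have $\rho-s\rho'+|\rho'|^2+R(s)=R(s)$; here $R(s)=\underline r_1$ or $\underline r_2$ (which are $<0$ on the relevant $s$-intervals by \textbf{(H1)}, since $\inf a>1/\beta$ forces $\beta a(\pm\infty)>1$, i.e. $\underline r_i>0$ — wait, rather one checks the \emph{correct} sign: on the zero-branch the second entry $\rho=0$ of the $\min$ is active, so only $\rho\ge 0$ and the sub-solution condition being vacuous where $\hat\rho=0$ need be checked). On a parabolic branch $\hat\rho=\tfrac{s^2}{4}-r_i$ one has $\rho'=s/2$, so $\rho-s\rho'+|\rho'|^2=\tfrac{s^2}{4}-r_i-\tfrac{s^2}{2}+\tfrac{s^2}{4}=-r_i$, whence $\rho-s\rho'+|\rho'|^2+R(s)=R(s)-r_i=0$ exactly when $R(s)=r_i$, which is precisely how the intervals in $R_{ij}$ were defined. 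On an affine branch $\hat\rho=\lambda s-(\lambda^2+r)$ (with $\lambda=\lambda^*$ or $\tilde\lambda$, $r=r_1$ or $r_2$) one gets $\rho'=\lambda$ and $\rho-s\rho'+|\rho'|^2+R(s)=-(\lambda^2+r)+\lambda^2+R(s)=R(s)-r$, again $=0$ on the matching interval. The endpoints of each affine segment are exactly where it meets the adjacent parabola tangentially: e.g. in Case 1(b), $s=c_1$ is the tangency point of $\tfrac{s^2}{4}-r_2$ with the line of slope $\lambda^*$ (using $\lambda^*=\tfrac{c_1}{2}-\sqrt{r_2-r_1}$), and $s=\lambda^*+r_1/\lambda^*$ is where that line hits zero; I would note these identities once and reuse them.

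Next I would handle the breakpoints. At a point $s_0$ where two smooth branches meet with $\hat\rho(s_0)>0$: for the \emph{sub-solution} inequality, any $C^1$ test function $\phi$ touching $\hat\rho$ from above at $s_0$ has $\phi'(s_0)$ between the left and right derivatives of $\hat\rho$; since $\hat\rho$ is concave at an upper-kink (parabola-to-line or line-to-parabola going rightward, derivative jumps \emph{down}), the test slope ranges over $[\hat\rho'(s_0^+),\hat\rho'(s_0^-)]$, and one checks $s\mapsto -s p+p^2$ plus the relevant one-sided value of $R$ keeps the expression $\le 0$ on that slope interval — a one-variable quadratic-in-$p$ estimate using that it vanishes at the two endpoint slopes. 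For the \emph{super-solution} inequality the kink must be convex (derivative jumps \emph{up}), which for these $\hat\rho$ happens only at the transition \emph{into} a zero-branch, e.g. at $s_0=2\sqrt{r_2}$ in Case 1(a); there $\hat\rho(s_0)=0$ so only $\hat\rho\ge0$ is required and the super-solution condition is automatic. The only genuinely delicate breakpoints are those where $R$ itself jumps and does \emph{not} coincide with a derivative-kink of $\hat\rho$ — but by construction (the intervals in $R_{ij}$ were reverse-engineered from these very formulas) the jump points of $R$ are exactly the breakpoints of $\hat\rho$, and one uses $R^*(s_0)$ for the super-solution test and $R_*(s_0)=R(s_0)$ for the sub-solution test as in the definition; I would verify in each case that with the correct envelope the sign works out. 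The main obstacle is bookkeeping: there are six cases and up to four branches each, and one must keep straight which side's value of the discontinuous $R$ enters the sub- versus super-solution inequality at a jump that coincides with a kink. I would organize this by proving a single auxiliary claim — ``if $\hat\rho$ is continuous, piecewise-$C^1$, nondecreasing, satisfies the equation classically on each open branch, is concave across every positive-valued kink, and $\hat\rho=0$ exactly on an initial interval — then $\hat\rho$ is the viscosity solution'' — and then checking its hypotheses for the six formulas, which is routine given the tangency identities above.
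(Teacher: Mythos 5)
Your plan has the right scaffolding: invoke the uniqueness from Lemma~\ref{lemma:4.7}, verify the boundary conditions by inspection, check the equation classically on each smooth branch (your branch computations are correct), and handle breakpoints via the viscosity definitions. But you have reversed the roles of the sub- and super-solution tests at the kinks, which is where the real work sits. In every formula of Subsection~\ref{subsec:5.1}, each breakpoint of $\hat\rho$ is a \emph{convex} kink, with the one-sided derivative jumping \emph{up}: in Case~1(b) at $s=c_1$ the slope jumps from $\lambda^*$ to $c_1/2>\lambda^*$; in Case~2(c) at $s=c_1$ from $c_1/2$ to $\tilde\lambda>c_1/2$; at the edge of each zero-branch from $0$ to a positive slope. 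There are no concave kinks at all, so the sub-differential is empty at every breakpoint and the sub-solution test is vacuous there; on the zero-branch itself it is vacuous because of the $\hat\rho(s_0)>0$ hypothesis in the definition. (This is exactly why the paper dispatches the sub-solution half in one line by citing \cite[Proposition~5.1]{Bardi1997optimal}: an a.e.\ classical solution of a convex Hamilton--Jacobi equation is automatically a viscosity sub-solution.) Your proposed quadratic-in-$p$ estimate for the \emph{sub-}solution inequality is therefore never needed.

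The more serious gap is your treatment of the super-solution. The super-solution condition is \emph{not} automatic where $\hat\rho(s_0)=0$: the definition of viscosity super-solution of $\min\{F,\rho\}=0$ is a conjunction, requiring both $\hat\rho\ge 0$ and the PDE inequality $\hat\rho(s_0)-s_0\phi'(s_0)+|\phi'(s_0)|^2+R^*(s_0)\ge 0$ at every subdifferential test point. The point at the right edge of the zero set is precisely the point that fixes the spreading speed, so skipping that check leaves the proof hollow. Concretely, at $s_0=\lambda^*+r_1/\lambda^*$ in Case~1(b), admissible test slopes lie in $[0,\lambda^*]$ and one must verify $-(\lambda^*+\tfrac{r_1}{\lambda^*})\phi'+|\phi'|^2+R^*(s_0)=(\phi'-\lambda^*)(\phi'-\tfrac{r_1}{\lambda^*})\ge 0$, which holds because $\phi'(s_0)\le\lambda^*<r_1/\lambda^*$ under condition~\eqref{cond:1b}. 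And you must also run the super-solution test at the interior convex kinks where $\hat\rho>0$ (e.g.\ $s_0=c_1$ in Cases~1(b),~1(c),~2(c)), which your heuristic would have skipped. A small additional slip: the ``tangency identity'' you record at $s=c_1$ in Case~1(b) is wrong; the affine branch meets the $r_2$-parabola \emph{transversally} at $c_1$, and is tangent to the $r_1$-parabola at $s=2\lambda^*$ (this matters in Case~1(c), where the paper notes the $C^1$ matching at $2\tilde\lambda$ in the analogous Case~2(c)).
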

\begin{proof}
We prove that the above formulas determine the the unique viscosity solution guaranteed by Lemma \ref{lemma:4.7}. To do that, it is enough to show that, in each case, $\hat\rho$ as given above, satisfies (i) the Hamilton-Jacobi equation \eqref{eq:rho_hj} in the viscosity sense, as well as (ii) the boundary condition \eqref{e.hatrhobc} in the classical sense. In view of the explicit formulas, (ii) is obvious. Thus, it remains to verify that $\hat\rho$ is a viscosity solution of \eqref{eq:rho_hj} in each case. Since the proof only differs slightly in each case, we consider only two representative cases 1(b) and 2(c) here, and omit the verification of the rest. 

Let us proceed with Case 1(b), where we fix $r_1, r_2, c_1$ satisfying
\begin{equation}\label{cond:1b}
r_1 < r_2,\quad \text{ and }\quad  2\sqrt{r_2} < c_1 < 2(\sqrt{r_1} + \sqrt{r_2-r_1}).
\end{equation}
Next, we  set
$$
R(s) = R_{1b}(s) = \begin{cases}
            r_2 & \text{for } s > c_1,\\
            r_1 & \text{for } \lambda^* + \frac{r_1}{\lambda^*} < s \leq c_1\\
            \underline r_1 & \text{for } s \leq \lambda^* + \frac{r_1}{\lambda^*}.
        \end{cases}
$$
and 
    \begin{equation}\label{eq:hatrho1b}
        \hat \rho(s) = \begin{cases}
            \frac{s^2}{4}-r_2 \quad &\text{for } s > c_1\\
            \lambda^* s - \big((\lambda^*)^2 + r_1\big) \quad &\text{for } (\lambda^* + \frac{r_1}{\lambda^*}) < s \leq c_1\\
            0 \quad &\text{for } 0 \leq s \leq \lambda^* + \frac{r_1}{\lambda^*},
        \end{cases}
    \end{equation}
    where $\lambda^* = \frac{c_1}{2} - \sqrt{r_2-r_1}$. 

First, observe that $\hat\rho$ is continuous, thanks to our choice of $\lambda^*$.

Next, we show that $\hat\rho$ is a viscosity subsolution of \eqref{eq:rho_hj}. 
To this end, observe that $\hat\rho$ satisfies the equation \eqref{eq:rho_hj} in the classical sense almost everywhere in $[0,\infty)$. (In fact, it satisfies the equation classically 
for $s \in \mathbb{R}\setminus\{c_1, \lambda^* + r_1/\lambda^*\}$.)  
By the convexity of the Hamiltonian, we can apply \cite[Proposition
5.1]{Bardi1997optimal} to conclude that it is in fact a viscosity sub-solution of \eqref{eq:rho_hj}. 
    
Next, we show that $\hat \rho$ is a viscosity super-solution of \eqref{eq:rho_hj}. Suppose $\hat \rho - \phi$ obtains a strict local minimum at $s_0 \in [0,\infty)$ for some $\phi \in C^1$. Now, $\hat \rho$ is a classical solution of \eqref{eq:rho_hj} for all $s \not \in \{\lambda^* + \frac{r_1}{\lambda^*},c_1\}$, so it automatically satisfies  \eqref{eq:rho_hj} in the viscosity sense. 
We need only consider $s_0 \in \{\lambda^* + \frac{r_1}{\lambda^*},c_1\}$. Suppose $s_0 = \lambda^* + \frac{r_1}{\lambda^*}$. Then $0\leq \phi'(s_0) \leq \lambda^*$, and $R^*(s_0) = \max\{\underline r_1,r_1\} = r_1$. Therefore, at the point $s=s_0$, it holds that
    \begin{align*}
        \rho - s_0\phi' + |\phi'|^2 + R^* &=   - \big(\lambda^* + \frac{r_1}{\lambda^*}\big)\phi'+ |\phi'|^2 + r_1= (\phi' - \lambda^*)(\phi' - \frac{r_1}{\lambda^*}) \geq 0, 
    \end{align*}
    where the last inequality is a consequence of $\phi'(s_0) \leq \lambda^* < \frac{r_1}{\lambda^*}$ (which in turn follows from the choice of $\lambda^*$ and the condition {\eqref{cond:1b}}). 
    
    If $s_0 = c_1$, then $R^*(s_0) = \max\{r_1,r_2\} = r_2$, and we have
    \begin{align*}
        \rho - s_0\phi' + |\phi'|^2 + R^* &= \left(\tfrac{c_1^2}{4} - r_2\right) - c_1\phi' + |\phi'|^2 + r_2 = (\phi' - \frac{c_1}{2})^2 \geq 0 \quad \text{ at }s = s_0.
    \end{align*} 
    This proves that $\hat \rho$ is a viscosity super-solution.

This completes the proof that the unique viscosity solution $\hat\rho$ as guaranteed by Lemma \ref{lemma:4.7} is given by the explicit formula \eqref{eq:hatrho1b} for the first representative Case 1(b).
    

Let us proceed with Case 2(c), where we fix $r_1, r_2, c_1$ satisfying
\begin{equation}\label{cond:2b}
r_1 > r_2,\quad \text{ and }\quad  c_1 > 2\sqrt{r_1}.
\end{equation}
Next, we  set
$$
R(s) = R_{2c}(s) = \begin{cases}
            r_2 & \text{for } s \geq c_1,\\
            r_1 & \text{for } 2\sqrt{r_1} < s < c_1\\
            \underline r_1 & \text{for } s \leq 2\sqrt{r_1}
        \end{cases}
$$
and 
    \begin{equation}\label{eq:hatrho2c}
        \hat \rho(s) = \begin{cases} 
            \frac{s^2}{4}-r_2 \quad &\text{for } s > 2\tilde \lambda \\
            \tilde \lambda s - (\tilde \lambda^2 + r_2) &\text{for } c_1 < s \leq 2\tilde \lambda \\
            \frac{s^2}{4}-r_1 \quad &\text{for } 2\sqrt{r_1} < s \leq c_1\\
            0 \quad &\text{for } 0 \leq s \leq 2\sqrt{r_1},
        \end{cases}
    \end{equation}
    where $\tilde\lambda = \frac{c_1}{2} + \sqrt{r_1-r_2}$. 

Again, we first observe that $\hat\rho$, as given in \eqref{eq:hatrho2c}, is continuous, and satisfies the equation \eqref{eq:rho_hj} in the classical sense for $s \in [0,\infty)\setminus\{c_1,2\sqrt{r_1}\}$ (note that it is in fact continuously differentiable in a neighborhood of $s=2\tilde\lambda$).  It then follows again from \cite[Proposition
5.1]{Bardi1997optimal}  that $\hat\rho$ is a viscosity sub-solution of \eqref{eq:rho_hj}. 

Next, we verify that it is also a viscosity super-solution. Suppose $\hat \rho - \phi$ obtains a strict local minimum at $s_0 \in [0,\infty)$ for some $\phi \in C^1$. Since $\hat \rho$ is a classical solution of \eqref{eq:rho_hj} for all $s \not \in \{c_1,2\sqrt{r_1}\}$, we need only consider $s_0 = c_1$ or $s_0=2\sqrt{r_1}$.  
Suppose $s_0 = c_1$, then 
$$
\rho - s_0 \phi' + |\phi'|^2 + R^* = \left(\tfrac{c_1^2}{4} - r_1\right) - c_1 \phi' + |\phi'|^2 + r_1 = \left( \phi' - \tfrac{c_1}{2}\right)^2 \geq 0 \quad \text{ at }s=s_0.
$$
Suppose $s_0 = 2\sqrt{r_1}$, then 
$$
\rho - s_0 \phi' + |\phi'|^2 + R^* = 0 - (2\sqrt{r_1}) \phi' + |\phi'|^2 + r_1 = \left( \phi' - \sqrt{r_1}\right)^2 \geq 0 \quad \text{ at }s = s_0.
$$
This verifies that $\hat\rho$ is a viscosity super-solution of \eqref{eq:rho_hj}. This completes the proof that the unique viscosity solution $\hat\rho$ as guaranteed by Lemma \ref{lemma:4.7} is given by the explicit formula \eqref{eq:hatrho2c} for the first representative Case 2(c). 

We omit the verification of the other cases since they are analogous.
\end{proof}

\begin{corollary}\label{corollary:c_low_bd}
Suppose either that (1) $r_1 < r_2$, or that (2) $r_1 > r_2$ and $c_1 \not\in (2\sqrt{r_1},2\sqrt{r_2}]$. Then there exists $\delta_0>0$ such that for each $\eta>0$, 
    \begin{align*}
        \liminf_{t \to \infty} \inf_{{|x|} < (\sigma-\eta) t} u(t,x) \geq \delta_0  \quad \text{ for each }\eta>0\text{ small enough},
    \end{align*}
    where $\sigma=\sigma(c_1;r_1,r_2)$ is given in \eqref{eq:sigma(c_1)}. 
    In particular, we have
    \begin{equation}
        \underline{c}_* \geq \sigma(c_1;r_1,r_2).
    \end{equation}
\end{corollary}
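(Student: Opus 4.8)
The plan is to assemble the pieces already established in Sections \ref{sec:3} and \ref{sec:4}, together with the explicit formulas of Lemma \ref{lemma:explicit_soln}, to pin down the zero set of $\rho^*$ and then invoke Lemma \ref{lemma:ulowbd}. First I would observe that by Lemma \ref{lemma:4.8} we have $0 \le \rho^*(s) \le \hat\rho(s)$ for $s \ge 0$, where $\hat\rho$ is the unique viscosity solution of \eqref{eq:rho_hj} described in Lemma \ref{lemma:4.7} for the relevant choice of $R = R_{ij}$. The hypothesis of the corollary --- namely $r_1 < r_2$, or else $r_1 > r_2$ with $c_1 \notin (2\sqrt{r_1}, 2\sqrt{r_2}]$ --- is precisely what guarantees that we fall into one of the six cases 1(a)--(c), 2(a)--(c) enumerated in Section \ref{sec:3}, so that the explicit formula for $\hat\rho$ from Subsection \ref{subsec:5.1} applies. (Note that for $r_1 > r_2$, having $c_1 \notin (2\sqrt{r_1}, 2\sqrt{r_2}]$ together with $r_1 > r_2 \Rightarrow 2\sqrt{r_1} > 2\sqrt{r_2}$ forces either $c_1 \le 2\sqrt{r_2}$, split into $c_1 < 2\sqrt{r_2}$ and $c_1 = 2\sqrt{r_2}$, or $c_1 > 2\sqrt{r_1}$, matching Cases 2(a), 2(b), 2(c) respectively.)

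Next I would read off from each explicit formula that $\hat\rho(s) = 0$ precisely on $[0, \sigma(c_1;r_1,r_2)]$, with $\sigma$ as in \eqref{eq:sigma(c_1)}: in Cases 1(a), 2(a), 2(b) the zero set is $[0, 2\sqrt{r_2}]$; in Case 1(b) it is $[0, \lambda^* + r_1/\lambda^*]$; in Cases 1(c), 2(c) it is $[0, 2\sqrt{r_1}]$. In each case this matches the corresponding branch of $\sigma(c_1;r_1,r_2)$. Combining with $0 \le \rho^*(s) \le \hat\rho(s)$, we conclude $\rho^*(s) = 0$ for all $s \in [0, \sigma]$. Applying Lemma \ref{lemma:ulowbd} with $s_0 = \sigma$ then yields, for each small $\eta > 0$, a constant $\delta_0 > 0$ with $\liminf_{t\to\infty} \inf_{\eta t < x < (\sigma - \eta)t} u(t,x) \ge \delta_0$. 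To upgrade the infimum from the region $\{\eta t < x < (\sigma - \eta)t\}$ to $\{|x| < (\sigma - \eta)t\}$, I would use Lemma \ref{lemma:rho0}: since $u$ is a supersolution of the diffusive logistic equation \eqref{eq:diffu_log} with growth rate $\underline r = \underline r_1 \wedge \underline r_2$, the classical spreading result gives $\liminf_{t\to\infty}\inf_{|x| < \sqrt{\underline r}\, t} u(t,x) \ge \delta_1 > 0$; since $\underline r > 0$ by {\bf(H1)}, this covers a full neighborhood of the origin, and in particular the strip $\{|x| \le \eta t\}$ and the leftward region $\{-(\sigma-\eta)t < x < 0\}$ provided $\eta$ is small and $\sigma - \eta < \sqrt{\underline r}$; for larger $\sigma$ one patches the two estimates by taking $\delta_0$ to be the smaller constant and choosing $\eta$ appropriately. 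Finally, the definition of $\underline c_*$ in \eqref{eq:max/min_spd} immediately gives $\underline c_* \ge \sigma - \eta$ for every small $\eta > 0$, hence $\underline c_* \ge \sigma(c_1;r_1,r_2)$.

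The only mild subtlety --- not really an obstacle --- is the bookkeeping needed to cover $\{|x| < (\sigma-\eta)t\}$ rather than just the one-sided strip coming directly from Lemma \ref{lemma:ulowbd}; this is handled by the logistic lower bound near the origin together with a left-right symmetry remark. Everything else is a direct citation of the lemmas already proved, so the proof is short: identify the case, read off the zero set of $\hat\rho$, sandwich $\rho^*$, and quote Lemma \ref{lemma:ulowbd}.
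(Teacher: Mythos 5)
Your proof follows essentially the same route as the paper's: sandwich $\rho^*$ between $0$ and $\hat\rho$ via Lemmas \ref{lemma:rho0} and \ref{lemma:4.8}, read off the zero set $[0,\sigma]$ of $\hat\rho$ from the explicit formulas of Subsection \ref{subsec:5.1}, invoke Lemma \ref{lemma:ulowbd} for the rightward cone $\{\eta t < x < (\sigma-\eta)t\}$, and supplement with the lower bound coming from the diffusive-logistic comparison to cover a neighbourhood of the origin. The main steps and the citations match.

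Where you diverge is the leftward half $\{-(\sigma-\eta)t < x < 0\}$. You rely solely on the KPP lower bound with growth rate $\underline r = \beta\inf a - 1$, which covers $\{|x| < \sqrt{\underline r}\,t\}$, and you acknowledge this only suffices when $\sigma - \eta < \sqrt{\underline r}$. For larger $\sigma$ the suggested ``patching by choosing $\eta$ appropriately'' does not close the gap: on $\{-(\sigma-\eta)t < x < -\sqrt{\underline r}\,t\}$ neither Lemma \ref{lemma:ulowbd} (which is one-sided) nor the logistic bound applies, and no choice of $\eta$ remedies that. The paper instead invokes the leftward analogue of Lemma \ref{lemma:ulowbd} (``a similar statement holds for $x<0$''), i.e.\ a mirror-image Hamilton--Jacobi argument for $s<0$, and combines it with the logistic bound; that is a genuinely different (and necessary) ingredient, not mere patching. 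That said, since $\underline c_*$ in \eqref{eq:max/min_spd} is defined purely in terms of $x\geq 0$, the conclusion $\underline c_* \geq \sigma(c_1;r_1,r_2)$ — the part of the Corollary that is actually used in the proofs of Theorems \ref{thm:increasing} and \ref{thm:decreasing} — follows already from Lemma \ref{lemma:ulowbd} alone and is correctly established by your argument; the two-sided display is a secondary statement.

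Two small remarks: your observation that the stated exclusion $c_1\notin(2\sqrt{r_1},2\sqrt{r_2}]$ must be read as $c_1\notin(2\sqrt{r_2},2\sqrt{r_1}]$ (since $r_1>r_2$ makes the written interval empty) is correct and worth stating explicitly rather than silently fixing. Also note that $\underline r = \underline r_1\wedge\underline r_2 = \beta\inf_s a(s)-1$ by monotonicity of $a$, so your $\underline r$ agrees with the paper's $\delta'$.
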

\begin{proof}
Observe that
$$
u_t \geq  u_{xx} + u(\delta' - u) \quad \text{ for }(t,x) \in (0,\infty)\times \mathbb{R} 
$$
where $\delta' = \beta \inf_{s \in \mathbb R} a(s) - 1>0$, thanks to (H1). Since $u$ has compactly supported initial data, it follows from standard theory that the spreading speed of $u$ is bounded from below by $2\sqrt{\delta'}$, i.e.
 \begin{equation}\label{eq:s.1}
        \liminf_{t \to \infty} \inf_{|x| < 3\sqrt{\delta'} t/2} u(t,x) \geq \delta_0.
\end{equation}

For given $c_1,r_1,r_2>0$ satisfying any of the cases 1(a)-(c) and 2(a)-(c), Lemma \ref{lemma:explicit_soln} says that the unique solution $\hat\rho$ guaranteed in Lemma \ref{lemma:4.7} is given as in Subsection \ref{subsec:5.1}. If we  define
$$
\hat{s}:= \sup\{ s\geq 0:~ \hat\rho(s)=0\},
$$
then it is easy to see that $\hat{s} = {\sigma}(c_1;r_1,r_2)$, where the latter is given in \eqref{eq:sigma(c_1)}.
 By Lemmas \ref{lemma:4.8} and \ref{lemma:rho0}, $0 \leq \rho^*(s) \leq \hat\rho(s)$ for $s \geq 0$. Thus, $\rho^*(s) = 0$ for all $s \in [0,\sigma]$, where $\sigma = \sigma(c_1;r_1,r_2)$.

    It follows from Lemma \ref{lemma:ulowbd} that 
     \begin{equation}\label{eq:s.2}
        \liminf_{t \to \infty} \inf_{ {\eta t <} x < (\sigma-\eta) t} u(t,x) \geq \delta_0  \quad \text{ for each }\eta>0\text{ small enough},
\end{equation}
and a similar statement holds for $x <0$. The desired result follows by combining with \eqref{eq:s.1}.
\end{proof}

\begin{proof}[Proof of Theorems \ref{thm:increasing} and Theorem \ref{thm:decreasing}]
We recall that $\underline c_* \leq \overline c_*$, by construction. On the other hand, by Lemma \ref{lemma:upper_speed}, $\overline c_* \leq \sigma_1(c_1;r_1,r_2)$, and by Corollary \ref{corollary:c_low_bd}, $\underline c_* \geq \sigma_1(c_1;r_1,r_2)$. It follows that $\underline c_* = \overline c_* = \sigma_1(c_1;r_1,r_2)$, so that the spreading speed of $u$ is given by $c^* = \sigma_1(c_1;r_1,r_2)$.
\end{proof}

\section{Convergence to homogeneous state}\label{sec:homo}
In this section, we apply the previous spreading result to characterize the long-time behavior of solutions of \eqref{eq:system} in the moving frame where the predator persists.

Having established the existence of the spreading speed $c^*$, and recalling Lemma \ref{lemma:prey_spd}, it follows that $(u,v) \to (0,1)$ locally uniformly in any moving frames with speed above $c^*$, and that $u$ persists in any moving frames with speed below $c^*$. In the following we discuss the asymptotic behavior of the solutions in the latter case.  Define, for $i=1,2$,
\begin{align}\label{eq:steadystate1}
    u_i = \frac{a_i - 1}{1+ a_ib}, \quad v_i = \frac{1+b}{1+a_ib}
\end{align}
where $a_1 \coloneqq a(-\infty)$ and $a_2  \coloneqq a(+\infty)$. Then $(u_i,v_i)$ 
is the unique positive root of the algebraic system
$$
u(-1 -u+a_iv) = 0 = rv(1-v-bu).
$$
Then Theorem \ref{thm:main3} can be restated as follows. 
\begin{theorem}\label{theorem:steadystate}
    Let $(u,v)$ be the solution of \eqref{eq:system}, where $(u_0,v_0) \in C^2(\mathbb R)$ satisfies (IC). 
    \begin{itemize}
        \item[{\rm(a)}] Suppose $c_1 \geq c^*$. For any $\eta \in (0, c^*/2)$,
        \begin{align}\label{eq:ss1}
            \lim_{t \to \infty}\sup_{|x| < (c^* - \eta)t} \norm{(u,v) - (u_1,v_1)} = 0.
        \end{align}
        where  $\norm{\cdot}$ denotes the Euclidean norm in $\mathbb{R}^2$.
        \item[{\rm(b)}] Suppose $c_1 < c^*$. For any $\eta \in (0, c_1/2)$,
        \begin{align}\label{eq:ss2}
            \lim_{t \to \infty}\sup_{|x| < (c_1 - \eta)t} \norm{(u,v) - (u_1,v_1)} = 0,
        \end{align}
        and for any $\eta \in \big(0, (c^*-c_1)/2\big)$,
        \begin{align}\label{eq:ss3}
            \lim_{t \to \infty}\sup_{(c_1 + \eta)t < x < (c^* - \eta)t} \norm{(u,v) - (u_2,v_2)} = 0,
        \end{align}
        where $(u_i,v_i)$is defined in \eqref{eq:steadystate1}.
    \end{itemize}
\end{theorem}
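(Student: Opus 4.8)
The plan is to establish the convergence by contradiction: around any hypothetical sequence of points where convergence fails, I rescale to extract an entire (in time) solution of a \emph{spatially homogeneous} predator--prey system, and then close the argument with a Liouville-type classification of such entire solutions. Concretely, suppose the conclusion fails in one of the three regions of the statement; then there are $\varepsilon_0>0$ and points $(t_n,x_n)$, with $t_n\to\infty$ and $(t_n,x_n)$ in that region, such that $\norm{(u,v)(t_n,x_n)-(u_j,v_j)}\ge\varepsilon_0$ for the relevant index $j\in\{1,2\}$. Put $(u_n,v_n)(t,x)=(u,v)(t+t_n,x+x_n)$. By Lemma \ref{lemma:gl_bds} the pair $(u_n,v_n)$ is uniformly bounded ($0\le u_n\le\norm{a}_\infty-1$, $\beta\le v_n\le 1$), and since $t+t_n\to\infty$ these are interior parabolic estimates, so along a subsequence $(u_n,v_n)\to(u_\infty,v_\infty)$ in $C^{1,2}_{loc}(\mathbb{R}^2)$. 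The only coefficient that is not already homogeneous is $a\big((x_n-c_1t_n)+(x-c_1t)\big)$ in the first equation; I check below that $x_n-c_1t_n\to\pm\infty$, whence this term converges locally uniformly to $A:=a(\mp\infty)$, and $(u_\infty,v_\infty)$ is a bounded entire solution of
\begin{equation*}
u_t=u_{xx}+u(-1-u+Av),\qquad v_t=dv_{xx}+rv(1-v-bu),\qquad (t,x)\in\mathbb{R}^2.
\end{equation*}
Moreover $\beta\le v_\infty\le 1$, and Corollary \ref{corollary:c_low_bd} (which applies in all the cases at hand) gives $\inf_{\mathbb{R}^2}u_\infty\ge\delta_0>0$, once one notes that for fixed $(t,x)$ the point $(t+t_n,x+x_n)$ lies in $\{|x|<(\sigma-\eta')t\}$ for some $\eta'\in(0,\eta)$ and all large $n$.

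\textbf{Liouville step.} It then remains to show that an entire solution of the displayed homogeneous system with $\inf u_\infty>0$ and $\inf v_\infty>0$ is necessarily the constant $(u_A,v_A)$, the unique positive equilibrium of the corresponding kinetic system. Write $M_u=\sup u_\infty$, $m_u=\inf u_\infty$, $M_v=\sup v_\infty$, $m_v=\inf v_\infty$. For each of these four extrema, translate $(u_\infty,v_\infty)$ along a sequence realising it and pass to a further $C^{1,2}_{loc}$ limit; because the limiting system is autonomous and $x$-independent, this is again an entire solution of the \emph{same} system, and it attains the extremum. Evaluating the appropriate equation at that point — where the spatial second derivative has a definite sign and the time derivative vanishes — yields
\begin{equation*}
M_v\le 1-bm_u,\qquad m_v\ge 1-bM_u,\qquad M_u\le AM_v-1,\qquad m_u\ge Am_v-1.
\end{equation*}
The first two give $M_v-m_v\le b(M_u-m_u)$ and the last two give $M_u-m_u\le A(M_v-m_v)$, hence $(M_u-m_u)(1-Ab)\le 0$. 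Now $Ab<1$, because {\rm(H1)} forces $\norm{a}_\infty b<1$: indeed $\tfrac1\beta<\inf a\le\norm{a}_\infty$ together with $\beta=1-b(\norm{a}_\infty-1)$ give $\norm{a}_\infty\big(1-b(\norm{a}_\infty-1)\big)>1$, i.e.\ $b\norm{a}_\infty(\norm{a}_\infty-1)<\norm{a}_\infty-1$, i.e.\ $b\norm{a}_\infty<1$; and $A\le\norm{a}_\infty$. Therefore $M_u=m_u$, so $u_\infty\equiv u^*$ is constant; then $v_\infty$ solves the scalar logistic equation with carrying capacity $1-bu^*\ge\beta>0$ while $v_\infty\ge\beta$, so $v_\infty$ is constant by the classification of bounded entire solutions of the diffusive logistic equation (cf.\ \cite[Lemma 2.3(d)]{Liu2020Asymptotic}, or repeat the shift-to-extremum argument). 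Substituting the two constants back into the system identifies them with $(u_A,v_A)$.

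\textbf{The three regions and conclusion.} In part (a), $c_1\ge c^*$ and $|x_n|<(c^*-\eta)t_n$, so $x_n-c_1t_n<(c^*-\eta-c_1)t_n\le-\eta t_n\to-\infty$; hence $A=a(-\infty)$ and $(u_\infty,v_\infty)\equiv(u_1,v_1)$. The same estimate holds on the region $\{|x|<(c_1-\eta)t\}$ in part (b), where $x_n-c_1t_n<-\eta t_n$, again giving $(u_1,v_1)$. On the region $\{(c_1+\eta)t<x<(c^*-\eta)t\}$ in part (b) (non-empty exactly because $\eta<(c^*-c_1)/2$) one has $x_n-c_1t_n>\eta t_n\to+\infty$, so $A=a(+\infty)$ and $(u_\infty,v_\infty)\equiv(u_2,v_2)$. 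In every case $(u,v)(t_n,x_n)=(u_n,v_n)(0,0)\to(u_\infty,v_\infty)(0,0)=(u_j,v_j)$, contradicting $\norm{(u,v)(t_n,x_n)-(u_j,v_j)}\ge\varepsilon_0$; this establishes (a) and (b).

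\textbf{Main difficulty.} The crux is the Liouville step. Since the predator--prey system has no comparison principle, one cannot simply trap $(u,v)$ between solutions of the kinetic ODE; the classification has to be read off from the PDE itself via the shift-to-extremum argument, and for the four extremal inequalities to close one needs precisely the inequality $\norm{a}_\infty b<1$ hidden in {\rm(H1)}. Everything else — parabolic compactness, identification of the homogeneous limit through $x_n-c_1t_n\to\pm\infty$, and the lower bound $\inf u_\infty>0$ from Corollary \ref{corollary:c_low_bd} — is routine.
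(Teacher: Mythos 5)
Your overall architecture mirrors the paper's: assume convergence fails along a sequence $(t_n,x_n)$, translate, use the global $L^\infty$ bounds of Lemma~\ref{lemma:gl_bds} and parabolic regularity to extract an entire solution $(u_\infty,v_\infty)$ of a spatially homogeneous system with $A=a(\mp\infty)$ chosen according to the sign of $x_n-c_1t_n\to\mp\infty$, apply Corollary~\ref{corollary:c_low_bd} to get $\inf u_\infty\ge\delta_0>0$, show $(u_\infty,v_\infty)$ is the unique positive constant state, and derive a contradiction. That is exactly what the paper does, and your verification that $x_n-c_1t_n$ diverges in the three regions, and that $(t+t_n,x+x_n)$ eventually lies in the persistence cone $\{|x|<(\sigma-\eta')t\}$, is sound.

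Where you genuinely diverge is the Liouville/classification step. The paper closes it by citing the Lyapunov-functional argument of Lemma~4.1 in \cite{Ducrot2021asymptotic} (a relative-entropy-type functional for the predator--prey kinetic system). You instead run a self-contained shift-to-extremum argument: realize $M_u,m_u,M_v,m_v$ by translating and passing to a further limit of the autonomous homogeneous system, use the signs of $u_t,u_{xx}$ (resp. $v_t,v_{xx}$) at an interior extremum to derive the four inequalities $M_v\le 1-bm_u$, $m_v\ge 1-bM_u$, $M_u\le AM_v-1$, $m_u\ge Am_v-1$, and close with $(M_u-m_u)(1-Ab)\le 0$. The crucial input is $Ab<1$, and your derivation from {\bf(H1)} is correct: $\tfrac1\beta<\inf a\le\|a\|_\infty$ gives $\|a\|_\infty\beta>1$, which, with $\|a\|_\infty>1$, yields $b\|a\|_\infty<1\ge Ab$. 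The trade-off is that your route buys self-containment and elementarity (no external Lyapunov lemma) at the cost of requiring this ``weak interaction'' inequality; since {\bf(H1)} already enforces it, nothing is lost in this setting. One minor remark: once $u_\infty\equiv u^*$ is constant, the first PDE already forces $-1-u^*+Av_\infty\equiv 0$, so $v_\infty$ is constant immediately without invoking the logistic classification — but your detour through \cite[Lemma 2.3(d)]{Liu2020Asymptotic} is equally valid.

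Both proofs are correct; yours substitutes an elementary maximum-principle-at-infinity argument for the paper's cited Lyapunov functional, which is a genuine and slightly more transparent alternative under {\bf(H1)}.
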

\begin{proof}[Proof of Theorem \ref{theorem:steadystate}]
Denote $c^* = \sigma(c_1;r_1,r_2)$. 
Thanks to Corollary \ref{corollary:c_low_bd}, for each $\eta>0$, there exists $T=T(\eta)>0$ and $\delta_0>0$ such that
\begin{equation}\label{e.ss3a}
    u(t,x) \geq \delta_0 \quad \text{ for }t \geq T,~-\eta t\leq x\leq (c^* - \eta/2)t.
\end{equation}

Suppose for a contradiction that there exists $\epsilon_0,\eta > 0$ and a sequence $(t_k,x_k)$ with $t_k \to \infty$ and $0\leq x_k < (c^* -\eta)t_k$ (the case for $x_k \leq 0$ is similar), such that $\norm{(u,v)(t_k,x_k) - (\tilde u,\tilde v)} > \epsilon_0$ for all $k \geq 1$, where 
\begin{equation}\label{e.ss.35}
    (\tilde u,\tilde v) = \begin{cases}
        (u_2,v_2) &\text{ if }c_1<c^*\quad \text{ and }x_k \in \big((c_1+\eta)t_k, (c^*-\eta)t_k\big),\\
        (u_1,v_1) &\text{ otherwise.}
    \end{cases}
\end{equation}
     Let 
     $$
     u_k(t,x) \coloneqq u(t + t_k,x+x_k)\quad \text{ and }\quad v_k(t,x) \coloneqq v(t+t_k,x+x_k),$$ 
     then
     $$
     u_k(t,x) \geq \delta_0 \quad \text{ in }\Omega_k,
     $$
where
$$
\Omega_k = \{(t,x): t + t_k \geq T,\quad  -\eta (t+t_k) \leq x+x_k \leq (c^* - \eta/2)(t+t_k)\}.
$$ 
\noindent {\bf Claim.}
{\it $\Omega_k \to \mathbb{R}^2$, i.e. for any compact subset $K \subset \mathbb{R}^2$ there exists $k_1>1$ such that} 
$$
K \subset \Omega_k \quad \text{ for all }k \geq k_1.
$$

Indeed, given $K$, choose $R>0$ such that $K \subset [-R,R]\times [-R,R]$. Then for  $k \gg 1$, we have
\begin{align*}
    K &\subset [-R,R]^2\\ &\subseteq \{(t,x): |t|\leq R,\quad \eta(R - t_k) \leq x \leq   -(c^* - \eta/2)R + \eta t_k/2\}\\
    &\subseteq \{(t,x): |t|\leq R,\quad -\eta(t+t_k) \leq x \leq (c^* - \eta/2) t + \eta t_k /2\}\\
    &\subseteq \{(t,x): t + t_k \geq T,\quad -\eta (t+t_k) - x_k \leq x \leq (c^* - \eta/2) t -x_k + (c^* -\eta/2)t_k\} \\
    &=\Omega_k.
\end{align*}
This proves the claim.

It follows from the claim and Lemma \ref{lemma:gl_bds} that there exists constants $0<\delta_1 <1$ independent of $k$ such that
\begin{equation}\label{e.ss.33}
\delta_1 \leq u_k \leq \frac{1}{\delta_1} \quad \text{ and }\quad \delta_1 \leq v_k \leq \frac{1}{\delta_1} \quad \text{ in }\Omega_k
\end{equation}
Using the above $L^\infty$ bounds and parabolic $L^p$ estimates we may deduce, re-labelling a sub-sequence if necessary, that $(u_k,v_k)$ converges weakly in $W^{2,1,p}_{loc}(\mathbb{R}^2)$ (and strongly in\\ $C^{1+\alpha, (1+\alpha)/2}_{loc}(\mathbb{R}^2)$ thanks to Sobolev embedding) to an entire solution $(u_\infty, v_\infty)$ of the system
    \begin{align*}
        \begin{cases}
        u_t &= u_{xx} +  u(-1-u + \tilde a v) \\
        v_t &= dv_{xx}  + rv(1 - v - bu),
        \end{cases}
    \end{align*}
    where $\tilde a = a_2$ if $c_1 < c^*$ and $x_k \in \big((c_1+\eta)t_k, (c^*-\eta)t_k\big)$, and $\tilde a = a_1$ otherwise. Moreover, \eqref{e.ss.33} also implies that
\begin{equation}\label{e.ss.34}
\delta_1 \leq u_\infty \leq \frac{1}{\delta_1}\quad \text{ and }\quad \delta_1 \leq v_\infty \leq \frac{1}{\delta_1}\quad \text{ in }\mathbb{R}^2.
\end{equation}


   Having established the positive upper and lower bounds for $(u_\infty,v_\infty)$ on $\mathbb{R}^2$, one can then repeat a standard argument via Lyapunov functional (see the proof of Lemma 4.1 in \cite{Ducrot2021asymptotic})  that  $(u_\infty,v_\infty)$ is identically equal to the homogeneous steady state $(\tilde{u},\tilde{v})$ given in \eqref{e.ss.35}, i.e., $(u_k,v_k) \to (\tilde u,\tilde v)$ in $C_{loc}(\mathbb{R}^2)$. This in particular implies
$$
 (u,v)(t_k,x_k) =  (u_k,v_k)(0,0)  \to (\tilde u, \tilde v) \quad \text{ as }k \to \infty.
$$
But this is a contradiction, which completes the proof.
\end{proof}

\appendix
\section{Comparison Principle}
Recall the  Hamilton-Jacobi equation
\begin{align}\label{eq:rho_hj'}
   \min\{\rho - s\rho' + |\rho'|^2 + \tilde{R}(s), \rho\} = 0 \quad \text{for } s \in (0,\infty).
\end{align}
We prove a comparison principle for \eqref{eq:rho_hj'} for discontinuous $\tilde R:\mathbb{R}\to\mathbb{R}$ that is locally monotone \cite{Chen2008viscosity}.
\begin{definition}
    A function $h : \mathbb R \to \mathbb R$ is locally monotone if for every $s_0 \in \mathbb R$, either
    \begin{align*}
        \lim_{\delta \to 0} \inf_{\substack{|s_i-s_0| < \delta \\ s_1 > s_2}} \big(h(s_1) - h(s_2)\big) \geq 0 \quad \text{ or } \quad \lim_{\delta \to 0} \sup_{\substack{|s_i-s_0| < \delta \\ s_1 > s_2}} \big(h(s_1) - h(s_2)\big) \leq 0. 
    \end{align*}
\end{definition} 
The assumptions on $\tilde R$ are stated precisely as follows.
\begin{itemize}
    \item[{(B1)}] $\tilde R(s)$ is locally monotone;
    \item[{(B2)}] $\tilde R^*(s) = \tilde R_*(s)$ almost everywhere, and $\inf_{s>0}R(s) > 0$, where $\tilde R^*$ and $\tilde R_*$ are defined by
\begin{align*}
    \tilde R^*(s) = \limsup_{s' \to s} \tilde R(s') \quad \text{ and } \quad \tilde R_*(s) = \liminf_{s' \to s} \tilde R(s').
\end{align*}
\end{itemize}

\begin{lemma}
\label{lemma:comp}
    Suppose $\tilde R(s)$ satisfies {\rm(B1)-(B2)}. Let $\overline \rho$ and $\underline \rho$ be non-negative viscosity super- and sub-solutions, respectively, of \eqref{eq:rho_hj'} 
 such that 
\begin{align}\label{eq:bd_comp}
        \underline \rho(0) \leq \overline \rho(0) \quad \text{and} \quad \lim_{s \to \infty} \frac{\underline \rho(s)}{s} \leq \lim_{s \to \infty} \frac{\overline \rho(s)}{s}.
    \end{align}
    Then $\underline \rho \leq \overline \rho$ in $(0,\infty)$.
\end{lemma}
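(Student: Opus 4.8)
The plan is to follow the classical doubling-of-variables technique for viscosity solutions, adapted to the discontinuous Hamiltonian setting of \cite{Chen2008viscosity,Ishii1985hamilton}. Suppose for contradiction that $\sup_{s>0}(\underline\rho(s)-\overline\rho(s)) =: M > 0$. The boundary conditions \eqref{eq:bd_comp}, together with the fact that $\overline\rho\ge 0$ and the super-solution structure forces linear growth of $\overline\rho$ at infinity (from $\inf_{s>0}\tilde R(s)>0$ in (B2), since near a minimum of $\overline\rho-\phi$ one gets $\overline\rho\ge s\phi'-|\phi'|^2-\tilde R^*$), ensure that this supremum is attained at some interior point and is not approached as $s\to 0^+$ or $s\to\infty$. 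First I would introduce, for $\alpha>0$ small, the penalized function
\[
\Phi_\alpha(s,\sigma) = \underline\rho(s) - \overline\rho(\sigma) - \tfrac{1}{2\alpha}|s-\sigma|^2 - \gamma(s+\sigma),
\]
with a small linear damping $\gamma>0$ (and possibly a further $-\delta s^2$ term) to confine maximizers to a fixed compact subset of $(0,\infty)^2$ uniformly in $\alpha$; standard arguments give a maximizer $(s_\alpha,\sigma_\alpha)$ with $|s_\alpha-\sigma_\alpha|^2/\alpha\to 0$ and both $s_\alpha,\sigma_\alpha\to \bar s$ for some $\bar s>0$ along a subsequence, with $\Phi_\alpha\to M$ after sending $\gamma,\delta\to 0$ at the end.

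Next I would apply the sub-solution inequality for $\underline\rho$ at $s_\alpha$ with test function $\phi_1(s)=\tfrac{1}{2\alpha}|s-\sigma_\alpha|^2+\gamma s$ (legitimate once $\underline\rho(s_\alpha)>0$, which holds because $\underline\rho(s_\alpha)\ge \overline\rho(\sigma_\alpha)+M/2>0$ for $\alpha$ small), and the super-solution inequality for $\overline\rho$ at $\sigma_\alpha$ with $\phi_2(\sigma)=-\tfrac{1}{2\alpha}|s_\alpha-\sigma|^2-\gamma\sigma$. Writing $p_\alpha = (s_\alpha-\sigma_\alpha)/\alpha$, these read
\[
\underline\rho(s_\alpha) - s_\alpha(p_\alpha+\gamma) + |p_\alpha+\gamma|^2 + \tilde R_*(s_\alpha) \le 0,
\qquad
\overline\rho(\sigma_\alpha) - \sigma_\alpha(p_\alpha-\gamma) + |p_\alpha-\gamma|^2 + \tilde R^*(\sigma_\alpha) \ge 0.
\]
Subtracting, the zeroth-order terms produce $\underline\rho(s_\alpha)-\overline\rho(\sigma_\alpha)\to M>0$, the gradient terms are $O(\gamma(1+|p_\alpha|))$ plus $-(s_\alpha-\sigma_\alpha)p_\alpha = -|s_\alpha-\sigma_\alpha|^2/\alpha \to 0$, and one is left needing $\tilde R^*(\sigma_\alpha)-\tilde R_*(s_\alpha)$ to not rescue the inequality, i.e. one needs $\limsup_\alpha\big(\tilde R_*(s_\alpha)-\tilde R^*(\sigma_\alpha)\big)\ge 0$. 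This is exactly where local monotonicity of $\tilde R$ (assumption (B1)) enters: near $\bar s$, $\tilde R$ is either ``essentially nondecreasing'' or ``essentially nonincreasing,'' and in either case — using that $s_\alpha$ and $\sigma_\alpha$ can be taken in a shrinking neighborhood of $\bar s$ and that one may, if needed, slightly perturb $\gamma$ to break ties on which of $s_\alpha,\sigma_\alpha$ is larger — one extracts $\tilde R_*(s_\alpha)-\tilde R^*(\sigma_\alpha)\ge o(1)$. Combining everything and then letting $\gamma,\delta\to 0$ yields $M\le 0$, a contradiction.

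The main obstacle is precisely the last point: controlling the discontinuous reaction term $\tilde R$ at the coincidence point. Because $\tilde R$ may jump exactly at $\bar s$, a naive estimate only gives $\tilde R_*(s_\alpha)-\tilde R^*(\sigma_\alpha)$, and without structure this term can be a strictly negative constant that defeats the argument. The resolution is to use (B1): local monotonicity guarantees that the ``wrong-sign'' jump cannot occur for the relevant ordering of $s_\alpha$ and $\sigma_\alpha$, and this ordering can be enforced (or the penalization tuned) so that $s_\alpha$ sits on the favorable side of the jump. A secondary technical point worth care is the a priori confinement of the maximizers away from $s=0$ and $s=\infty$ uniformly in $\alpha$; this is handled by the linear (and quadratic) damping terms together with the boundary data \eqref{eq:bd_comp} and the forced linear growth of $\overline\rho$, and these auxiliary parameters are removed only after the contradiction in $M$ has been derived.
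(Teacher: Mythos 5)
The paper's own proof is a pure citation: it notes that assumptions (B1)--(B2) imply the hypotheses (H1)--(H6) of the earlier work \cite{lam2022asymptotic} and then invokes Proposition 2.11 there. Your proposal instead attempts a self-contained doubling-of-variables argument, which is a genuinely different route and would be more informative if it closed.

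It does not quite close, however, and you are honest about where: the treatment of the term $\tilde R^*(\sigma_\alpha) - \tilde R_*(s_\alpha)$. If both $s_\alpha,\sigma_\alpha \to \bar s$ and $\tilde R$ has a jump at $\bar s$, then in the worst case this difference tends to $\tilde R^*(\bar s) - \tilde R_*(\bar s) > 0$, which can wipe out the positive contribution $M$ entirely. Local monotonicity (B1) can rescue the argument \emph{only if} the sign of $s_\alpha - \sigma_\alpha$ is aligned with the direction of the jump: e.g., if $\tilde R$ is nondecreasing near $\bar s$, one needs $s_\alpha \geq \sigma_\alpha$ to conclude $\tilde R_*(s_\alpha) \geq \tilde R^*(\sigma_\alpha)$. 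But the penalization gives no a priori control on this sign, and the suggestion to ``slightly perturb $\gamma$ to break ties'' is not, as stated, an argument: to know which way to perturb you would need to know the monotonicity direction of $\tilde R$ near $\bar s$, yet $\bar s$ itself depends on $\alpha$ and on the perturbation you add, creating a circularity. A rigorous version would need either a tailored antisymmetric penalization term (as in some of the \cite{Chen2008viscosity,Ishii1985hamilton} arguments) together with a stability/extraction argument controlling the limiting point $\bar s$, or a different mechanism exploiting the convexity of $|p|^2$ and the one-dimensional structure (which is roughly what the cited Lam--Yu result does). There are also two smaller loose ends worth pinning down: the confinement of $(s_\alpha,\sigma_\alpha)$ to a fixed compact subset of $(0,\infty)^2$ uniformly in $\alpha$ (you sketch using (B2) and \eqref{eq:bd_comp} but should verify the linear-growth estimate for $\overline\rho$ explicitly), and the uniform bound on $p_\alpha$ needed to discard the $O(\gamma |p_\alpha|)$ terms, which requires a Lipschitz-type estimate on the doubled function that you haven't supplied. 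None of these is fatal, but as written the keystone of the comparison argument is missing.
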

\begin{proof}
We apply the results of \cite{lam2022asymptotic}. The specific form of the Hamilton-Jacobi equation \eqref{eq:rho_hj'} and assumptions (B1)-(B2), imply that \cite[{\rm(H1)-(H6)}]{lam2022asymptotic} hold. Hence, \cite[Proposition 2.11]{lam2022asymptotic} applies.
\end{proof}

\begin{corollary}\label{corollary:uniqueness}
Let $\tilde{R}:\mathbb{R}\to\mathbb{R}$ satisfy the assumptions {\rm(B1)-(B2)}. Then there exists a unique viscosity solution
$\hat \rho$ to\eqref{eq:rho_hj'}  satisfying the boundary conditions 
    \begin{align}
    \hat\rho(0) = 0 \quad \text{ and } \lim_{s \to \infty} \frac{\hat \rho(s)}{s} = \infty.
    \end{align}
Moreover, $\hat\rho$ is nondecreasing.
\end{corollary}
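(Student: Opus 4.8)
The plan is to deduce uniqueness from the comparison principle (Lemma~\ref{lemma:comp}), to construct the solution by Perron's method between two explicit barriers, and to read off the monotonicity from the positivity of $\tilde R$ guaranteed by (B2). For uniqueness, if $\hat\rho_1,\hat\rho_2$ are two viscosity solutions of \eqref{eq:rho_hj'} with $\hat\rho_i(0)=0$ and $\lim_{s\to\infty}\hat\rho_i(s)/s=\infty$, then, viewing $\hat\rho_1$ as a supersolution and $\hat\rho_2$ as a subsolution, we have $\hat\rho_2(0)\le\hat\rho_1(0)$ and $\lim_{s\to\infty}\hat\rho_2(s)/s=\infty\le\infty=\lim_{s\to\infty}\hat\rho_1(s)/s$, so Lemma~\ref{lemma:comp} yields $\hat\rho_2\le\hat\rho_1$; interchanging the two functions gives equality. (The same argument, applied to the upper and lower semicontinuous envelopes of a single solution, shows that any viscosity solution is continuous.)

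For existence I would run Perron's method using the supersolution $\overline\rho(s)=\tfrac{s^2}{4}$ and, for each $\lambda>0$, the subsolution $\underline\rho_\lambda(s)=\max\{0,\ \lambda s-\lambda^2-C_\lambda\}$, where $C_\lambda$ is chosen large enough that $\underline\rho_\lambda$ is a viscosity subsolution of \eqref{eq:rho_hj'} (this only needs $\tilde R_*\le C_\lambda$ on the set where $\underline\rho_\lambda>0$, and $\tilde R$ is bounded in all the cases $R=R_{ij}$ arising in Section~\ref{sec:3}; in general one localizes as in \cite{lam2022asymptotic}). A direct substitution shows $\overline\rho$ is a classical — hence viscosity — supersolution, since $\overline\rho-s\overline\rho'+|\overline\rho'|^2\equiv0$ and $\tilde R^*>0$ by (B2); moreover the elementary inequality $\lambda s-\lambda^2\le\tfrac{s^2}{4}$ gives $\underline\rho_\lambda\le\overline\rho$ with $\underline\rho_\lambda(0)=0=\overline\rho(0)$. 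Setting
\[
\hat\rho(s):=\sup\bigl\{\,\rho(s)\ :\ \rho\ \text{an upper semicontinuous viscosity subsolution of \eqref{eq:rho_hj'}},\ \rho(0)=0,\ \rho\le\overline\rho\,\bigr\},
\]
the standard Perron argument (the supremum of subsolutions is a subsolution, and the lower semicontinuous envelope of $\hat\rho$ is a supersolution, the bump-up step relying on the comparison principle and the results of \cite{lam2022asymptotic}) shows $\hat\rho$ is a viscosity solution; the sandwiching $0\le\hat\rho\le\overline\rho$ forces $\hat\rho(0)=0$, while $\hat\rho\ge\underline\rho_\lambda$ for every $\lambda$ forces $\liminf_{s\to\infty}\hat\rho(s)/s\ge\lambda$ for every $\lambda$, hence $\lim_{s\to\infty}\hat\rho(s)/s=\infty$. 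Alternatively, existence may be quoted directly from \cite{lam2022asymptotic}, whose hypotheses are implied by (B1)--(B2) as already noted in the proof of Lemma~\ref{lemma:comp}.

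Finally, for monotonicity: wherever $\hat\rho>0$ the equation \eqref{eq:rho_hj'} reduces to $\hat\rho-s\hat\rho'+|\hat\rho'|^2+\tilde R(s)=0$, so formally $s\hat\rho'=\hat\rho+|\hat\rho'|^2+\tilde R(s)\ge\inf_{s>0}\tilde R(s)>0$; made rigorous via the viscosity test-function definitions with the envelopes $\tilde R^*,\tilde R_*$, this shows $\hat\rho$ is strictly increasing on each connected component of $\{\,s>0:\hat\rho(s)>0\,\}$. Since $\hat\rho$ is continuous, nonnegative, and vanishes at $s=0$, this component structure forces $\{\hat\rho>0\}=(s^*,\infty)$ for some $s^*\in[0,\infty)$, finite because of the superlinear growth established above; thus $\hat\rho\equiv0$ on $[0,s^*]$ and $\hat\rho$ is increasing on $[s^*,\infty)$, so $\hat\rho$ is nondecreasing on $[0,\infty)$ (and \eqref{e.rmk:1} of Lemma~\ref{lemma:4.7} follows). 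I expect the main obstacle to be the careful execution of Perron's method and of the ``$s\hat\rho'>0$'' step for the \emph{discontinuous}, merely locally monotone coefficient $\tilde R$: one must argue throughout with the sub-/super-solution definitions involving $\tilde R^*$ and $\tilde R_*$ and invoke the comparison principle at the appropriate points, which is precisely the content that (B1)--(B2) and \cite{lam2022asymptotic} are there to supply.
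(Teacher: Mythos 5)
Your proposal is correct, but it takes a genuinely different and more self-contained route than the paper, which simply cites \cite[Proposition 1.7(b)]{lam2022asymptotic} for existence and uniqueness and \cite[Lemma 2.9]{lam2022asymptotic} for monotonicity. You instead reprove the corollary from scratch: uniqueness from two applications of Lemma~\ref{lemma:comp} (which is fine, modulo the mild abuse $\infty\le\infty$ in the growth condition, which the comparison principle's proof handles in practice); existence by Perron's method sandwiched between the explicit barriers $\overline\rho(s)=\tfrac{s^2}{4}$ and $\underline\rho_\lambda(s)=\max\{0,\lambda s-\lambda^2-C_\lambda\}$; and monotonicity from the observation that, at any interior local maximum $s_0$ with $\hat\rho(s_0)>0$, testing with $\phi(s)=\hat\rho(s_0)+(s-s_0)^2$ in the subsolution inequality gives $\hat\rho(s_0)+\tilde R_*(s_0)\le 0$, which contradicts $\inf\tilde R>0$ from (B2); the rest of the component argument then forces $\{\hat\rho>0\}=(s^*,\infty)$. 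What your route buys is transparency: the reader sees exactly that the quadratic barrier and the linear family of barriers are what enforce the boundary conditions $\hat\rho(0)=0$ and $\hat\rho(s)/s\to\infty$, and that the positivity in (B2) is precisely what drives monotonicity. What the paper's route buys is brevity and avoidance of re-executing Perron's method for a discontinuous, merely locally monotone $\tilde R$ — the acknowledged weak spot in your write-up is the ``bump-up'' step of Perron, where one must argue carefully with the envelopes $\tilde R^*,\tilde R_*$, and that is precisely what the cited results of \cite{lam2022asymptotic} package up; you note this yourself and offer the citation as an alternative, so the proposal honestly flags its own gaps. One small technical caveat: your verification that $\underline\rho_\lambda$ is a subsolution relies on $\tilde R$ being bounded above, which is not imposed by (B1)--(B2) in general; it does hold for all the $R_{ij}$ of Section~\ref{sec:3}, and you correctly note the need to localize in the general case.
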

\begin{proof}
Thanks to {\rm(B1)-(B2)},  \cite[Proposition 1.7(b)]{lam2022asymptotic} applies. Next, by applying \cite[Lemma 2.9]{lam2022asymptotic}, we deduce that $s\mapsto \hat\rho(s)$ is nondecreasing.
\end{proof}

{\small

\bibliographystyle{siam}

}

\end{document}